\documentclass{article}

\usepackage{amsmath}

\usepackage{amsthm} 
\usepackage{amssymb }
\usepackage{bbm}
\usepackage{xspace}
\usepackage{enumitem}
\usepackage{verbatim}
\usepackage{ wasysym }
\usepackage{hyperref}
\usepackage{caption}
\usepackage{tikz}
\usetikzlibrary{arrows.meta}
\usetikzlibrary{automata,positioning}

\newtheorem{theorem}{Theorem}[section]

\newtheorem{corollary}[theorem]{Corollary}
\newtheorem{lemma}[theorem]{Lemma}

\newtheorem{proposition}[theorem]{Proposition}
\newtheorem{remark}{Remark}[section]

\newcommand\stam[1]{}
\newcommand{\buchi}{B\"uchi\xspace}
\newcommand{\A}{{\cal A}}
\newcommand{\B}{{\cal B}}
\newcommand{\C}{{\cal C}}
\newcommand{\D}{{\cal D}}
\renewcommand{\S}{{\cal S}}

\newcommand{\N}{{\cal N}}

\newcommand{\zug}[1]{\langle #1 \rangle}

\newcommand{\im}[1]{\infty #1}
\newcommand{\comp}[1]{\overline{#1}}

\setlength{\textwidth}{6.1in}
\setlength{\textheight}{8.4in}
\setlength{\columnsep}{1.2cm}
\setlength{\topmargin}{0.0in}
\setlength{\headheight}{0.0in}
\setlength{\headsep}{0.0in}
\setlength{\oddsidemargin}{-.1cm}
\setlength{\parindent}{1pc}

\newcommand{\badernew}[1]{\textcolor{blue}{#1}}

\bibliographystyle{plainurl}

\title{On Semantically-Deterministic Automata\thanks{A preliminary version appears in the Proceedings of the 50th International Colloquium on Automata, Languages, and Programming, 2023. Research supported by the Israel Science Foundation, Grant 2357/19, and the European Research Council, Advanced Grant ADVANSYNT.}}

\author{Bader Abu Radi and Orna Kupferman\\ School of Computer Science and Engineering\\
The Hebrew University, Jerusalem, Israel}

\begin{document}

\maketitle

\begin{abstract}
A nondeterministic automaton is {\em semantically deterministic\/} (SD) if different nondeterministic choices in the automaton lead to equivalent states. Semantic determinism is interesting as it is a natural relaxation of determinism, and as some applications of deterministic automata in formal methods can actually use automata with some level of nondeterminism, tightly related to semantic determinism.

In the context of finite words, semantic determinism coincides with determinism, in the sense that every pruning of an SD automaton to a deterministic one results in an equivalent automaton. We study SD automata on infinite words, focusing on B\"uchi, co-\buchi, and weak automata. We show that there, while semantic determinism does not increase the expressive power, the combinatorial and computational properties of SD automata are very different from these of deterministic automata.
In particular, SD \buchi and co-\buchi automata are exponentially more succinct than deterministic ones (in fact, also exponentially more succinct than history-deterministic automata), their complementation involves an exponential blow up, and decision procedures for them like universality and minimization are PSPACE-complete. For weak automata, we show that while an SD weak automaton need not be pruned to an equivalent deterministic one, it can be determinized to an equivalent deterministic weak automaton with the same state space, implying also efficient complementation and decision procedures for SD weak automata.
\end{abstract}

\section{Introduction}
\label{intro}

{\em Automata\/} are among the most studied computation models in theoretical computer science. Their simple structure has made them a basic formalism for the study of fundamental notions, such as \emph{determinism} and {\em nondeterminism} \cite{RS59}.  
While a deterministic computing machine examines a single action at each step of its computation, nondeterministic machines are allowed to examine several possible actions simultaneously. 
Understanding the power of nondeterminism is at the core of open fundamental questions in theoretical computer science (most notably, the P vs. NP problem).

A prime application of {\em automata on infinite words} is specification, verification, and synthesis of nonterminating systems. The automata-theoretic approach reduces questions about systems and their specifications to questions about automata \cite{Kur94,VW94}, and is at the heart of many algorithms and tools.
A run of an automaton on infinite words is an infinite sequence of states, and acceptance is determined with respect to the set of states that the run visits infinitely often. For example, in {\em B\"uchi\/} automata, some of the states are designated as accepting states, and a run is accepting iff it visits states from the set $\alpha$ of accepting states infinitely often \cite{Buc62}.
Dually, in {\em co-B\"uchi\/} automata, a run is accepting if it visits the set $\alpha$ only finitely often.
Then, {\em weak\/} automata are a special case of both \buchi and co-\buchi automata in which every strongly connected component in the graph induced by the automaton is either contained in $\alpha$ or is disjoint from $\alpha$. We use DBW and NBW to denote determinisitic and nondeterministic \buchi word automata, respectively, and similarly for D/NCW, D/NWW, and D/NFW, for co-\buchi, weak, and automata on finite words, respectively.

For automata on infinite words, nondeterminism not only leads to exponential succinctness, but may also increase the expressive power. This is the case, for example, in \buchi and weak automata, thus NBWs are strictly more expressive than DBWs~\cite{Lan69}, and NWWs are strictly more expressive than DWWs~\cite{BJW01}. On the other hand, NCWs are as expressive as DCWs~\cite{MH84}, and in fact, also as NWWs \cite{Kur87}. In some applications of the automata-theoretic approach, such as model checking, algorithms can be based on nondeterministic automata, whereas in other applications, such as synthesis and reasoning about probabilistic systems, they cannot. There, the advantages of nondeterminism are lost, and algorithms involve a complicated determinization construction~\cite{Saf88} or acrobatics for circumventing determinization~\cite{KV05c,Kup06a}.

In a deterministic automaton, the transition function maps each state and letter to a single successor state.
In recent years there is growing research on weaker types of determinism. This includes, for example, {\em unambiguous automata}, which may have many runs on each word, yet only one accepting run \cite{CM03,CQS21}, automata that are {\em deterministic in the limit}, where each accepting run should eventually reach a deterministic sub-automaton \cite{SEJK16}, and automata that are {\em determinizable by pruning} (DBP), thus embody an equivalent deterministic automaton~\cite{AKL10}.

In terms of applications, some weaker types of determinism have been defined and studied with specific applications in mind. Most notable are {\em history-deterministic} automata (HD), which can resolve their nondeterministic choices based on the history of the run \cite{HP06,BL23,Kup23}\footnote{The notion used in \cite{HP06} is {\em good for games} (GFG) automata, as they address the difficulty of playing games on top of a nondeterministic automaton. As it turns out, the property of being good for games varies in different settings and HD is good for applications beyond games. Therefore, we use the term {\em history determinism}, introduced by Colcombet in the setting of quantitative automata with cost functions \cite{Col09}.}, and can therefore replace deterministic automata in algorithms for synthesis and control, and {\em good-for-MDPs} automata (GFM), whose product with Markov decision processes maintains the probability of acceptance, and can therefore replace deterministic automata when reasoning about stochastic behaviors \cite{HPSSTWW20,STZ22}.

The different levels of determinism induce classes of automata that differ in their succinctness and in the complexity of operations and decision problems on them. Also, some classes are subclasses of others. For example, it follows quite easily from the definitions that every automaton that is deterministic in the limit is GFM, and every automaton that is DBP is HD.

In this paper we study the class of {\em semantically deterministic\/} automata (SD). An automaton $\A$ is SD if its nondeterministic choices lead to equivalent states. Formally, if $\A=\zug{\Sigma,Q,q_0,\delta,\alpha}$, with a transition function $\delta:Q \times  \Sigma \rightarrow 2^Q$, then $\A$ is SD if for every state $q \in Q$, letter $\sigma \in \Sigma$, and states $q_1,q_2 \in \delta(q,\sigma)$, the set of words accepted from $\A$ with initial state $q_1$ is equal to  the set of words accepted from $\A$ with initial state $q_2$. Since all nondeterministic choices lead to equivalent states, one may be tempted to think that SD automata are DBP or at least have similar properties to deterministic automata. This is indeed the case for SD-NFWs, namely when one considers automata on finite words. There, it is not hard to prove that any pruning of an SD-NFW to a DFW results in an equivalent DFW.  Thus, SD-NFWs are not more succinct than DFWs, and operations on them are not more complex than operations on DFWs.

Once, however, one considers automata on infinite words, the simplicity of SD automata is lost.
In order to understand the picture in the setting of infinite words, let us elaborate some more on HD automata, which are strongly related to SD automata.
Formally, a nondeterministic automaton $\A$ is HD if there is a strategy $g$ that maps each finite word $u \in \Sigma^*$ to a transition to be taken after $u$ is read, and following $g$ results in accepting all the words in the language of $\A$. 
Obviously, every DBP automaton is HD -- the strategy $g$ can suggest the same transition in all its visits in a state. On the other hand, while HD-NWWs are always DBP~\cite{KSV06,Mor03}, this is not the case for HD-NBWs and HD-NCWs~\cite{BKKS13}. There, the HD strategy may need to suggest different transitions in visits (with different histories) to the same state. 

It is easy to see that a strategy $g$ as above cannot choose a transition to states whose language is strictly contained in the language of states that are reachable by other transitions. Thus, all HD automata can be pruned in polynomial time to SD automata~\cite{KS15,BK18}. The other direction, however, does not hold: it is shown in \cite{AKL21} that an SD-NWW need not be HD (hence, SD-NBWs and SD-NCWs need not be HD too). Moreover, while all all HD-NWWs are DBP, this is not the case for all SD-NWWs.  

In this work we study the succinctness of  SD automata with respect to deterministic ones, as well as the complexity of operations on them and decision problems about them. Our goal is to understand the difference between determinism and semantic determinism, and to understand how this difference varies among different acceptance conditions. Our study is further motivated by the applications of automata with different levels of nondeterminism in algorithms for synthesis and for reasoning in a stochastic setting. In particular, beyond the connection to HD automata discussed above, as runs of an SD-NBW on words in the language are accepting with probability 1, all SD-NBWs are GFM \cite{AKL21}.

\stam{
\color{red}
Thus, 
one can consider a hierarchy of three levels of determinism: $\text{DBP} \subseteq  \text{HD} \subseteq \text{SD}$.
In terms of expressive power, the hierarchy collapses -- SD automata are not stronger than deterministic ones \cite{AKL21}. In terms of combinatorial and computational properties, the hierarchy collapses for NFWs (that is, as noted above, all SD-NFWs are DBP), but this is not the case for automata on infinite words. In particular, while all HD-NWWs are DBP, this is not the case for all SD-NWWs. 

While previous works studied the expressive power and strictness of the hierarchy, fundamental questions about SD automata and their succinctness were left open. Also, beyond SDness being an interesting natural relaxation of determinism, it is important for better understanding the combinatorial challenges in the application of automata in algorithms for synthesis and for reasoning in a stochastic setting. For example, as runs of an SD-NBW on words in the language are accepting with probability 1, all SD-NBWs are GFM.
\color{black}
}

We study semantic determinism for \buchi, co--\buchi, and weak automata. We consider automata with both {\em state-based\/} acceptance conditions, as defined above, and {\em transition-based\/} acceptance conditions. In the latter, the acceptance condition is given by a subset $\alpha$ of transitions, and a run is required to traverse transitions in $\alpha$ infinitely often (in B\"uchi automata, termed tD/tNBW) or finitely often (in co-B\"uchi automata, termed tD/tNCW). As it turns out, especially in the context of HD automata, automata with transition-based acceptance conditions may differ in their properties from automata with traditional state-based acceptance conditions. For example, while HD-tNCWs can be minimized in PTIME \cite{AK19}, minimization of HD-NCWs is NP-complete \cite{Sch20}. In addition, there is recently growing use of transition-based automata in practical applications, with evidences they offer a simpler translation of LTL formulas to  automata and enable simpler constructions and decision procedures  \cite{GO01,GL02,DLFMRX16,SEJK16,LKH17}. Our results for all types of acceptance conditions are summarized in Table~\ref{t1} below, where we also compare them with known results about deterministic and HD automata.

\begin{table}[h]
	\footnotesize
	\begin{center}
		\begin{tabular}{|c||c|c|c|}
			& Deterministic & HD   & SD  \\ \hline \hline
			Succinctness & $n$ (B,C,W) & $n^2$ (B), $2^n$ (C), $n$ (W) & $2^n$ (B,C), $n$ (W)\\
			& & \cite{KS15,KSV06,Mor03} & Theorems \ref{SD-tNBW succinct thm},~\ref{SD-tNCW succinct thm},~\ref{sd nww det} \\ \hline
			Complementation & $n$ (B,C,W) & $n$ (B), $2^n$ (C), $n$ (W) & $2^n$ (B,C), $n$ (W)\\
			& \cite{Kup18} & \cite{KS15} & Theorems \ref{SD-tNBW comp thm},~\ref{SD-tNCW comp thm},~\ref{sd nww comp}\\ \hline
			Universality & NL (B,C,W) & \ \ \ P (B,C,W) \ \ \ & \ \ PSPACE (B,C), P (W) \ \  \\
			& \cite{Kup18} & \cite{HKR02,KS15} & Theorems \ref{universality buchi thm},~\ref{universality SD-tNCWs thm},~\ref{sd nww dp} \\ \hline
			Minimization & NP (B,C), P (W)  & NP (B,C), P (W)  & PSPACE (B,C), P (W)  \\
			(state based)& \cite{Sch10,Lod01}  & \cite{Sch20,KSV06,Lod01} & Theorems \ref{min buchi is hard thm},~\ref{min co is hard thm},~\ref{sd nww dp} \\ \hline
			Minimization & open (B,C), P (W) & open (B), P (C,W) & PSPACE (B,C), P (W)  \\
			(transition based) & \cite{Lod01}   & \cite{AK19,KSV06,Lod01} & Theorems \ref{min buchi is hard thm},~\ref{min co is hard thm},~\ref{sd nww dp} \\ \hline
		\end{tabular}
		\normalsize
		\vspace{2mm}
		\caption{Succinctness (determinization blow-up), complementation (blow-up in going from an  automaton to a complementing one), universality (deciding whether an automaton accepts all words), and minimization (deciding whether an equivalent automaton of a given size exists, and the described results apply also for the case the given automaton is deterministic). All blow-ups are tights, except for HD-NBW determinization, where the quadratic bound has no matching lower bound; all NL, NP, and PSPACE bounds are complete.}
		\label{t1}
	\end{center}
\end{table}

Let us highlight the results we find the most interesting and surprising.
While all three types of SD automata are not DBP, we are able to determinize SD-NWWs in polynomial time, and end up with a DWW whose state space is a subset of the state space of the original SD-NWW. Essentially, rather than pruning transitions, the construction redirects transitions to equivalent states in deep strongly connected components of the SD-NWW, which we prove to result in an equivalent deterministic DWW\footnote{Our result implies that checking language-equivalence  between states in an SD-NWW can be done in PTIME, as the check can be performed on the equivalent DWWs. We cannot, however, use this complexity result in our algorithm, as this involves a circular dependency. Consequently, our construction of the equivalent DWWs involves a language-approximation argument.}.
This suggests that despite the ``not DBP anomaly" of SD-NWWs, they are very similar in their properties to DWWs. On the other hand, except for their expressive power, SD-NBWs and SD-NCWs are not at all similar to DBWs and DCWs: while HD-NBWs are only quadratically more succinct than DBWs, succinctness jumps to exponential for SD-NBWs. This also shows that SD-NBWs may be exponentially more succinct than HD-NBWs, and we show this succinctness gap also for co-\buchi automata, where exponential succinctness with respect to DCWs holds already in the HD level.  

The succinctness results are carried over to the blow-up needed for complementation, and to the complexity of decision procedures. Note that this is not the case for HD automata. There, for example, complementation of HD-NBWs results in an automaton with the same number of states \cite{KS15}. Moreover, even though HD-NCWs are exponentially more succinct than DCWs, language-containment for HD-NCWs can be solved in PTIME \cite{HKR02,KS15}, and HD-tNCWs can be minimized in PTIME \cite{AK19}.
For SD automata, we show that complementation involves an exponential blow-up. Also, universality and minimization of SD \buchi and co-\buchi automata, with either state-based or transition-based acceptance conditions, is PSPACE-complete, as it is for NBWs and NCWs.
We also study the {\em D-to-SD minimization problem}, where we are given a deterministic automaton $\A$ and a bound $k \geq 1$, and need to decide whether  
$\A$ has an equivalent SD automaton with at most $k$ states. Thus, the given automaton is deterministic, rather than SD. By \cite{JR93}, the D-to-N minimization problem for automata on finite words (that is,  given a DFW, minimize it to an NFW) is PSPACE-complete. It is easy to see that the D-to-SD minimization problem for automata on finite words can be solved in PTIME. We show that while this is the case also for weak automata, D-to-SD minimization for B\"uchi and co-B\"uchi automata is PSPACE-complete.

Our results show that in terms of combinatorial and computational properties, semantic determinism is very similar to determinism in weak automata, whereas for B\"uchi and co-B\"uchi automata, semantic determinism is very similar to nondeterminism. The results are interesting also in comparison with history determinism, whose affect on B\"uchi and co-B\"uchi automata is different.

\section{Preliminaries}
\label{prelim}

\subsection{Languages and Automata}
For a finite nonempty alphabet $\Sigma$, an infinite {\em word\/} $w = \sigma_1 \cdot \sigma_2 \cdots \in \Sigma^\omega$ is an infinite sequence of letters from $\Sigma$.
A {\em language\/} $L\subseteq \Sigma^\omega$ is a set of words.
%
For $1\leq i \leq  j$, we use $w[i,j]$ to denote the infix $\sigma_i \cdot \sigma_{i+1} \cdots \sigma_j$ of $w$, use $w[i]$ to denote the letter $\sigma_i$, and use  $w[i,\infty]$ to denote the infinite suffix $\sigma_i \cdot \sigma_{i+1} \cdots$ of $w$.
We also consider languages $R\subseteq  \Sigma^*$ of finite words, denote the empty word by $\epsilon$, and denote the set of nonempty words over $\Sigma$ by $\Sigma^+$; thus $\Sigma^+ = \Sigma^*\setminus \{\epsilon\}$.
For a set $S$, we denote its complement by $\comp{S}$. In particular,  for languages $R \subseteq \Sigma^*$ and $L\subseteq \Sigma^\omega$, we have $\overline{R} = \Sigma^*\setminus R$ and $\overline{L} = \Sigma^\omega  \setminus L$.

A \emph{nondeterministic automaton}  
is a tuple $\A = \langle \Sigma, Q, Q_0, \delta, \alpha  \rangle$,  where $\Sigma$ is an alphabet, $Q$ is a finite set of \emph{states}, $Q_0$ is a set of \emph{initial states}, $\delta: Q\times \Sigma \to 2^Q \setminus\emptyset$ is a \emph{transition function}, and $\alpha$ is an \emph{acceptance condition}, to be defined below. 
For states $q$ and $s$ and a letter $\sigma \in \Sigma$, we say that $s$ is a\emph{ $\sigma$-successor} of $q$ if $s \in \delta(q,\sigma)$.  
Note that the transition function of $\A$ is total, thus for all states $q\in Q$ and letters $\sigma\in \Sigma$, $q$ has at least one $\sigma$-successor.
If $|Q_0| = 1$ and every state $q$ has a single $\sigma$-successor, for all letters $\sigma$, then $\A$ is \emph{deterministic}.
The transition function $\delta$ can be viewed as a transition relation $\Delta \subseteq Q\times \Sigma \times Q$, where for every two states $q,s\in Q$ and letter $\sigma\in \Sigma$, we have that $\langle q, \sigma, s \rangle \in \Delta$ iff $s\in \delta(q, \sigma)$.
We define the \emph{size} of $\A$, denoted $|\A|$, as its number of states, thus, $|\A| = |Q|$. 

Given an input word $w = \sigma_1 \cdot \sigma_2 \cdots$, a \emph{run}  of $\A$ on $w$ is a sequence of states $r = r_0,r_1,r_2,\ldots$, such that $r_0 \in Q_0$, and for all $i \geq 0$, we have that $r_{i+1} \in \delta(r_i, \sigma_{i+1})$, i.e, the run starts in some initial state and proceeds according to the transition function. If the word in the input is infinite, then so is the run.
We sometimes view the run $r = r_0,r_1,r_2,\ldots$ on $w = \sigma_1 \cdot \sigma_2 \cdots$ as a sequence of successive transitions $\zug{r_0,\sigma_1,r_1}, \zug{r_1,\sigma_2,r_2},\ldots$. %
Note that a deterministic automaton has a single run on an input $w$.
We sometimes extend $\delta$ to sets of states and finite words. Then, $\delta: 2^Q\times \Sigma^* \to 2^Q$ is such that for every $S \in 2^Q$, finite word $u\in \Sigma^*$, and letter $\sigma\in \Sigma$, we have that $\delta(S, \epsilon) = S$, $\delta(S, \sigma) = \bigcup_{s\in S}\delta(s, \sigma)$, and $\delta(S, u \cdot \sigma) = \delta(\delta(S, u), \sigma)$. Thus, $\delta(S, u)$ is the set of states that $\A$ may reach when it reads $u$ from some state in $S$.

The acceptance condition $\alpha$ determines which runs are ``good''. For automata on finite words, $\alpha \subseteq Q$, and a run is {\em accepting} if it ends in a state in $\alpha$.  
For automata on infinite words, we consider {\em state-based} and {\em transition-based} acceptance conditions.
Let us start with {\em state-based} conditions. Here, $\alpha \subseteq Q$, and we use the terms {\em $\alpha$-states\/} and  {\em $\overline{\alpha}$-states\/} to refer to states in $\alpha$ and in $Q \setminus \alpha$, respectively. For a run $r \in Q^\omega$, let ${\it sinf}(r)\subseteq Q$ be the set of states that $r$ visits infinitely often. Thus, ${\it sinf}(r) = \{q: \text{$q=r_i$ for infinitely many $i$'s}\}$. In {\em \buchi} automata, $r$ is \emph{accepting} iff ${\it sinf}(r)\cap \alpha \neq\emptyset$, thus if $r$ visits states in $\alpha$ infinitely often. Dually, in {\em co-\buchi} automata, $r$ is \emph{accepting} iff ${\it sinf}(r)\cap \alpha = \emptyset$, thus if $r$ visits states in $\alpha$ only finitely often.

We proceed to {\em transition-based} conditions. There, $\alpha\subseteq \Delta$ and acceptance depends on the set of transitions that are traversed infinitely often during the run. We use the terms {\em $\alpha$-transitions\/} and  {\em $\bar{\alpha}$-transitions\/} to refer to  transitions in $\alpha$ and in $\Delta \setminus \alpha$, respectively. For a run $r \in \Delta^\omega$, we define ${\it tinf}(r) =  \{  \langle q, \sigma, s\rangle \in \Delta: q = r_i, \sigma = \sigma_{i+1}, \text{ and } s = r_{i+1}, \text{ for infinitely many $i$'s}   \}$. As expected, in transition-based \buchi automata, $r$ is accepting iff ${\it tinf}(r) \cap \alpha \neq \emptyset$, and in transition-based co-\buchi automata, $r$ is accepting iff ${\it tinf}(r) \cap \alpha = \emptyset$.

Consider an automaton $\A = \langle \Sigma, Q, Q_0, \delta, \alpha  \rangle$.
In all automata classes, a run of $\A$ that is not accepting is \emph{rejecting}.  A word $w$ is accepted by an automaton $\A$ if there is an accepting run of $\A$ on $w$. The language of $\A$, denoted $L(\A)$, is the set of words that $\A$ accepts.

Consider a directed graph $G = \langle V, E\rangle$. A \emph{strongly connected set\/} in $G$ (SCS, for short) is a set $C\subseteq V$ such that for every two vertices $v, v'\in C$, there is a path from $v$ to $v'$. An SCS is \emph{maximal} if for every non-empty set $C'\subseteq V\setminus C$, it holds that $C\cup C'$ is not an SCS. The \emph{maximal strongly connected sets} are also termed \emph{strongly connected components} (SCCs, for short). 
An automaton $\A = \langle \Sigma, Q, Q_0, \delta, \alpha\rangle$ induces a directed graph $G_{\A} = \langle Q, E\rangle$, where $\langle q, q'\rangle\in E$ iff there is a letter $\sigma \in \Sigma$ such that $\langle q, \sigma, q'\rangle \in \Delta$. The SCSs and SCCs of $\A$ are those of $G_{\A}$. 

An automaton $\A = \zug{\Sigma, Q, Q_0, \delta, \alpha}$ 
with a state-based acceptance condition $\alpha \subseteq Q$ is \emph{weak}~\cite{MSS88} if for each SCC $C$ of $\A$, either $C\subseteq \alpha$ in which case $C$ is \emph{accepting}, or $C\cap \alpha = \emptyset$ in which case $C$ is \emph{rejecting}. We view $\A$ as a B\"uchi automaton, yet note that a weak automaton can be viewed as both a \buchi and a co-\buchi automaton. Indeed, a run of $\A$ visits $\alpha$ infinitely often iff it gets trapped in an SCC that is contained in $\alpha$ iff it visits states in $Q\setminus \alpha$ only finitely often. Note also that when $\A$ uses a transition-based acceptance condition, we can ignore the membership in $\alpha$ of transitions between SCCs (indeed, such transitions are traversed only finitely often), and say that $\A$ is weak if the transitions in each SCC are all in $\alpha$ or all disjoint from $\alpha$. 

Consider two automata $\A_1$ and $\A_2$. We say that $\A_1$ is {\em contained in\/} $\A_2$ if $L(\A_1) \subseteq L(\A_2)$. Then, $\A_1$ and $\A_2$ are {\em equivalent\/} if $L(\A_1) = L(\A_2)$, and $\A_1$ is {\em universal\/} if $L(\A_1) = \Sigma^\omega$ (or $L(\A_1) = \Sigma^*$, in case it runs on finite words). Finally, $\A_1$ is \emph{minimal} (with respect to a class of automata, say state-based deterministic \buchi automata) if for all automata $\A_2$ equivalent to $\A_1$, we have that $|\A_1| \leq |\A_2|$.

\subsection{SD and HD Automata}

Consider an automaton $\A = \langle \Sigma, Q, Q_0, \delta, \alpha  \rangle$.
For a state $q\in Q$ of $\A$, we define $\A^q = \langle \Sigma, Q, \{q\}, \delta, \alpha \rangle$, as the automaton obtained from $\A$ by setting the set of initial states to be $\{q\}$.
We say that two states $q,s\in Q$ are \emph{equivalent}, denoted $q \sim_{\A} s$, if $L(\A^q) = L(\A^s)$. 
We say that {\em $q$ is reachable} if there is a finite word $x\in \Sigma^*$ with $q\in \delta(Q_0, x)$, and say that {\em $q$ is reachable  from $s$} if $q$ is reachable in $\A^s$.  
We say that $\A$ is \emph{semantically deterministic} (SD, for short) if different nondeterministic choices lead to equivalent states. Thus, all initial states are equivalent, and for every state $q\in Q$ and letter $\sigma \in \Sigma$, all the $\sigma$-successors of $q$ are equivalent. Formally, if $s,s' \in \delta(q, \sigma)$, then $s \sim_{\A} s'$. 
The following proposition, termed the \emph{SDness property}, follows immediately from the definitions and implies that in SD automata, for all finite words $x$, all the states in $\delta(Q_0, x)$ are equivalent. Intuitively, it means that a run of an SD automaton can take also bad nondeterministic choices, as long as it does so only finitely many times. 

\begin{proposition}\label{pruned-corollary}
	Consider an SD automaton $\A =\langle \Sigma, Q, Q_0, \delta, \alpha  \rangle$, and states $q,s \in Q$. If $q \sim_{\A} s$, then for every $\sigma\in \Sigma$, $q' \in \delta(q,\sigma)$, and $s' \in \delta(s,\sigma)$, we have that $q' \sim_{\A} s'$. 
\end{proposition}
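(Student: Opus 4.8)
The plan is to reduce the statement to a single observation about how the language of a state relates to the languages of its successors. For a language $L \subseteq \Sigma^\omega$ and a letter $\sigma \in \Sigma$, write $\sigma^{-1}L = \set{w' \in \Sigma^\omega : \sigma \cdot w' \in L}$ for its $\sigma$-derivative. The core claim I would isolate is this: for every state $p \in Q$, letter $\sigma \in \Sigma$, and successor $p' \in \delta(p,\sigma)$, it holds that $\sigma^{-1}L(\A^p) = L(\A^{p'})$. Once this is established, the proposition is immediate. Given $q \sim_{\A} s$, that is $L(\A^q) = L(\A^s)$, and successors $q' \in \delta(q,\sigma)$, $s' \in \delta(s,\sigma)$, I would simply compute $L(\A^{q'}) = \sigma^{-1}L(\A^q) = \sigma^{-1}L(\A^s) = L(\A^{s'})$, which is exactly $q' \sim_{\A} s'$.

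To prove the core claim I would first argue, \emph{without} using SDness, the successor decomposition $\sigma^{-1}L(\A^p) = \bigcup_{p'' \in \delta(p,\sigma)} L(\A^{p''})$. Both inclusions follow by splitting an infinite run on $\sigma \cdot w'$ from $p$ into its first transition $\zug{p,\sigma,p''}$ and the suffix run from $p''$ on $w'$. The only point needing care is that this first transition, equivalently the single visit to the state $p$, may be discarded when deciding acceptance: for each of the acceptance conditions we consider (\buchi and co-\buchi, state- or transition-based), membership of a run in the accepting set is determined by ${\it sinf}$ or ${\it tinf}$, that is, by the states or transitions seen \emph{infinitely often}. Hence a finite prefix, and in particular one extra state or transition at the front, never changes whether a run is accepting, so the run on $\sigma \cdot w'$ from $p$ is accepting iff the suffix run on $w'$ from $p''$ is. This yields the decomposition. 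Now SDness enters: since all $\sigma$-successors of $p$ are equivalent, every term $L(\A^{p''})$ in the union equals $L(\A^{p'})$, so the union collapses to $L(\A^{p'})$, giving the claim.

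The argument is essentially bookkeeping, and I expect the only genuine obstacle to be phrasing the ``finite prefixes are irrelevant'' observation uniformly across the acceptance conditions, since it is the one step that appeals to the specific definitions of ${\it sinf}$ and ${\it tinf}$ rather than to SDness; everything else is the two-line derivative computation. It is worth noting that the proof uses SDness only through the hypothesis that all $\sigma$-successors of a state are equivalent, and that the conclusion then propagates equivalence exactly one step forward. Iterating this propagation along the letters of a finite word $x$ is precisely what yields the informal consequence, advertised after the statement, that all states in $\delta(Q_0,x)$ are equivalent.
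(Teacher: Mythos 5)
Your proof is correct and matches the paper's intent: the paper gives no explicit proof, stating only that the proposition follows immediately from the definitions, and the argument it has in mind is exactly your derivative computation $L(\A^{q'})=\sigma^{-1}L(\A^q)=\sigma^{-1}L(\A^s)=L(\A^{s'})$, where SDness collapses the successor union $\sigma^{-1}L(\A^p)=\bigcup_{p''\in\delta(p,\sigma)}L(\A^{p''})$ to a single term. The only cosmetic remark is that the proposition is stated for general SD automata, so the ``finite prefixes are irrelevant'' step should also cover automata on finite words, where it holds for the equally trivial reason that prepending a transition does not change the last state of a run.
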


A nondeterministic automaton $\A$ is \emph{history deterministic} (\emph{HD}, for short) if its nondeterminism can be resolved based on the past, thus on the prefix of the input word read so far. Formally, $\A$ is \emph{HD} if there exists a {\em strategy\/} $f:\Sigma^* \to Q$ such that the following hold: 
\begin{enumerate}
	\item 
	The strategy $f$ is consistent with the transition function. That is, $f(\epsilon) \in Q_0$, and for every finite word $u \in \Sigma^*$ and letter $\sigma \in \Sigma$, we have that $f(u \cdot \sigma) \in \delta(f(u),\sigma)$. 
	\item
	Following $f$ causes $\A$ to accept all the words in its language. That is, for every word $w = \sigma_1 \cdot \sigma_2 \cdots$, if $w \in L(\A)$, then the run $f(\epsilon), f(w[1, 1])$, $f(w[1, 2]), \ldots$ 
	is an accepting run of $\A$ on $w$. 
\end{enumerate}
We say that the strategy $f$ \emph{witnesses} $\A$'s HDness. 
Note that, by definition, we can assume that every SD and HD automaton $\A$ has a single initial state. Thus, we sometimes abuse notation and write $\A$ as $\A = \zug{\Sigma, Q, q_0, \delta, \alpha}$, where $q_0$ is the single initial state of the SD (or HD) automaton $\A$.

For an automaton $\A$, we say that a state $q$ of $\A$ is \emph{HD}, if $\A^q$ is HD. Note that every deterministic automaton is HD. Also, while not all HD automata can be pruned to deterministic ones \cite{BKKS13}, removing of transitions that are not used by a strategy $f$ that witnesses $\A$'s HDness does not reduce the language of $\A$ and results in an SD automaton. Moreover, since every state that is used by $f$ is HD, the removal of non-HD states does not affect $\A$'s language nor its HDness. Accordingly, we have the following~\cite{KS15, BK18}. 

\begin{proposition}\label{gfg is SD}
	Every HD automaton $\A$ can be pruned to an equivalent SD automaton all whose states are HD.
\end{proposition}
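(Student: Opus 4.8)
The plan is to fix a strategy $f : \Sigma^* \to Q$ witnessing $\A$'s HDness and to obtain the pruned automaton $\A'$ by keeping exactly the states and transitions that $f$ ever uses. Formally, I would set $Q_f = \{f(u) : u \in \Sigma^*\}$ and define $\delta'(q,\sigma) = \{f(u\sigma) : u \in \Sigma^*,\ f(u) = q\}$ for $q \in Q_f$, taking $q_0 = f(\epsilon)$ as the single initial state and restricting $\alpha$ accordingly. Since $f(u\sigma) \in Q_f$ whenever $f(u) \in Q_f$, the set $Q_f$ is closed under $\delta'$, and because every $q \in Q_f$ is reached by some word $u$, each such $q$ has the successor $f(u\sigma)$ under every letter $\sigma$; hence $\delta'$ is total and $\A'$ is a genuine pruning of $\A$ (it only removes transitions and states, the latter being the non-HD ones).

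The technical engine of the argument is a residual-language characterization: for every finite word $u$, $L(\A^{f(u)}) = u^{-1}L(\A)$, where $u^{-1}L(\A) = \{w : uw \in L(\A)\}$. I would prove both inclusions by concatenating runs, using that acceptance depends only on the infinite suffix of a run. Prepending the $f$-run on $u$ to an accepting run from $f(u)$ on $w$ yields an accepting run on $uw$, giving one inclusion; conversely, when $uw \in L(\A)$ the accepting $f$-run on $uw$ splits after $u$ into an accepting run from $f(u)$ on $w$, giving the other.

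From this lemma the three required properties follow. For \emph{equivalence}, $L(\A') \subseteq L(\A)$ is immediate as $\A'$ is a sub-automaton, while for the reverse any $w \in L(\A)$ is accepted along its $f$-run, which lives entirely inside $Q_f$ and $\delta'$. For \emph{HDness of every} $q \in Q_f$, picking any $u$ with $f(u) = q$ and setting $f_q(x) = f(ux)$ gives a strategy for $(\A')^q$; the residual lemma shows $L((\A')^q) = L(\A^q) = u^{-1}L(\A)$ and that $f_q$ accepts all of it, so $q$ is HD and its language is preserved by the pruning. For \emph{SD}, I would take two surviving $\sigma$-successors $s = f(u\sigma)$ and $s' = f(u'\sigma)$ of a state $q$, where $f(u) = f(u') = q$; the lemma yields $L(\A^s) = \sigma^{-1}(u^{-1}L(\A)) = \sigma^{-1}((u')^{-1}L(\A)) = L(\A^{s'})$, and since languages of states in $Q_f$ are preserved, $L((\A')^s) = L((\A')^{s'})$, that is $s \sim_{\A'} s'$.

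I expect the main obstacle to be the bookkeeping distinction between equivalence in $\A$ and equivalence in the pruned $\A'$: the residual lemma only directly equates successor languages in the original automaton, so the SD claim requires knowing that pruning does not shrink the language of any surviving state. This is exactly what the per-state HDness argument (the step using $f_q$) supplies, so the order is essential — establish the residual lemma first, use it to prove that each $q \in Q_f$ is HD with $L((\A')^q) = L(\A^q)$, and only then deduce SDness.
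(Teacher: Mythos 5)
Your proposal is correct and takes essentially the same route as the paper: prune $\A$ to the states and transitions actually used by a strategy $f$ witnessing HDness, observe that the result is equivalent and SD, and that every surviving state is HD. Your residual-language lemma $L(\A^{f(u)})=u^{-1}L(\A)$ simply makes explicit the argument that the paper states as a one-paragraph sketch and otherwise defers to the cited prior work.
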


We use three-letter acronyms in $\{\text{D}, \text{N}\} \times \{\text{B}, \text{C}, \text{W}, \text{F}\}\times \{\text{W}\}$ to denote the different automata classes. The first letter stands for the branching mode of the automaton (deterministic or nondeterministic); the second for the acceptance condition type (\buchi, co-\buchi, weak, or an automaton that runs on  finite inputs); and the third indicates that we consider automata on words. For transition-based automata, we start the acronyms with the letter ``t",  and for HD or SD automata, we add an HD or SD prefix, respectively. For example, an HD-tNBW is a transition-based HD nondeterministic \buchi automaton, and an SD-NWW is a state-based weak SD automaton.

\stam{
For a tNCW $\A = \zug{\Sigma, Q, q_0, \delta, \alpha}$, we say that a run $r$ of $\A$ is \emph{safe} if it does not traverse $\alpha$-transitions, and we use the term \emph{safe-transition} to refer to a transition not in $\alpha$.
}


\stam{
	Consider a tNCW  $\A = \langle \Sigma, Q, q_0, \delta, \alpha\rangle$. We refer to the SCCs we get by removing $\A$'s $\alpha$-transitions as the \emph{safe components} of $\A$; that is, the \emph{safe components} of $\A$ are the SCCs of the graph $G_{\A^{\bar{\alpha}}} = \langle Q, E^{\bar{\alpha}} \rangle$, where $\zug{q, q'}\in E^{\bar{\alpha}}$ iff there is a letter $\sigma\in \Sigma$ such that $\langle q, \sigma, q'\rangle \in \Delta \setminus \alpha$. We denote the set of safe components of $\A$ by $\S(\A)$. 
	We say that $\A$ is \emph{normal} if
	there are no $\overline{\alpha}$-transitions connecting different safe components. That is,
	for all states $q$ and $s$ of $\A$, if there is a path of $\overline{\alpha}$-transitions from $q$ to $s$, then there is also a path of $\overline{\alpha}$-transitions from $s$ to $q$.
	Note an accepting run of $\A$ eventually gets trapped in one of $\A$'s safe components.
	We say that $\A$ is {\em nice\/} if\footnote{For those who are familiar with \cite{AK19}, we note that our definition of nice here is more strict.} all the states in $\A$ are reachable, $\A$ is deterministic and normal.
	We say that a run $r$ of $\A$ is \emph{safe} if it does not traverse $\alpha$-transitions, and we use the term \emph{safe-transition} to refer to a transition not in $\alpha$.
	The \emph{safe language} of $\A$, denoted $L_{\it safe}(\A)$, is the set of infinite words $w$, such that there is a safe run of $\A$ on $w$. 
	A tNCW $\A$ is \emph{safe-minimal} if every different states $q, s\in Q$ differ in their languages or safe-languages, that is, $L(q)\neq L(s)$ or $L_{\it safe}(q)\neq L_{\it safe}(s)$. 
	Then, $\A$ is \emph{safe-centralized} if for every two equivalent states $q, s\in Q$, if $L_{\it safe}(q) \subseteq  L_{\it safe}(s)$, then $q$ and $s$ are in the same safe component of $\A$. The following proposition follows from~\cite{AK19}.
	
	\begin{proposition}
		Consider a tNCW $\A$. If $\A$ is nice, safe-centralized and safe-minimal, then it is a minimal GFG-tNCW.
	\end{proposition}
}

\section{Semantically Deterministic \buchi Automata}

In this section we examine SD-tNBWs and SD-NBWs. Our results use the following definitions and constructions:
For a language $R \subseteq \Sigma^*$ of finite words, we use $\im{R}$ to denote the language of infinite words that contain infinitely many disjoint infixes in $R$. Thus, $w \in  \im{R}$ iff $\epsilon\in R$ or there are infinitely many indices $i_1 \leq i'_1 < i_2 \leq i'_2 < \cdots$ such that $w[i_j,i'_j] \in R$, for all $j \geq 1$. For example, taking $\Sigma=\{a,b\}$, we have that $\im{\{ab \}}$ is the language of words with infinitely many $ab$ infixes, namely all words with infinitely many $a$'s and infinitely many $b$'s.  We say that a finite word $x\in \Sigma^*$ is a {\em good prefix} for a language $R \subseteq \Sigma^*$ if for all finite words $y\in \Sigma^*$, we have that $x\cdot y\in R$. For example, while the language $(a+b)^* \cdot a$ does not have a good prefix, the word $a$ is a good prefix for the language $a \cdot (a+b)^*$. 

Theorem \ref{NFW to SD-tNBW operation thm} below states that one can translate an NFW that recognizes a language $R \subseteq \Sigma^*$ to an SD-tNBW for $\im{R}$. Theorem \ref{SD-tNBW to NFW operation thm} is more complicated and suggests that this translation can be viewed as a reversible encoding, in the sense that one can obtain an NFW for $R$ from an SD-tNBW for $\im{(\$\cdot R \cdot \$)}$, where $\$\notin \Sigma$. The translations involve no blow up, and both theorems play a major rule in the rest of this section.

\begin{theorem}\label{NFW to SD-tNBW operation thm}
	Given an NFW $\N$, one can obtain, in polynomial time, an SD-tNBW $\A$ such that $L(\A)=\infty L(\N)$ and $|\A|=|\N|$.
\end{theorem}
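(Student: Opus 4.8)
The plan is to set $R=L(\N)$ and to build $\A$ on the \emph{same} state space as $\N$, adding enough transitions so that \emph{every} state of $\A$ has language exactly $\infty R$. Semantic determinism is then immediate: all states, and in particular all $\sigma$-successors of any state and all initial states, are equivalent, so the SDness requirement holds trivially (and Proposition~\ref{pruned-corollary} becomes vacuous). This is also the ``right'' target, since $\infty R$ is a tail language (a finite prefix can be ignored, so $uw\in\infty R$ iff $w\in\infty R$), and hence in any SD automaton for $\infty R$ the residual at every reachable state is forced to be $\infty R$ anyway.

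For the construction I would first dispose of the degenerate case $\epsilon\in R$ (equivalently $Q_0\cap\alpha\neq\emptyset$), where $\infty R=\Sigma^\omega$: here any universal tNBW on $|Q|$ states, e.g.\ one with all transitions in $\alpha'$, works and is trivially SD. Assuming $\epsilon\notin R$, I would take $\A=\langle\Sigma,Q,Q_0,\delta',\alpha'\rangle$ with
\[ \delta'(q,\sigma)=Q_0\cup\delta(q,\sigma),\qquad \alpha'=\{\,\langle q,\sigma,q_0\rangle : q_0\in Q_0 \text{ and } \delta(q,\sigma)\cap\alpha\neq\emptyset\,\}. \]
Thus $\A$ keeps every transition of $\N$ as a non-accepting ``continue a run of $\N$'' transition, and from every state on every letter it adds ``reset'' transitions back to $Q_0$, which are marked in $\alpha'$ exactly when the letter just read completes a run of $\N$, i.e.\ enters $\alpha$. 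Then $\delta'$ is total, $|\A|=|\N|$, and $\A$ is computed in polynomial time.

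The substance is to show $L(\A^q)=\infty R$ for \emph{every} state $q$. For $\supseteq$, given $w\in\infty R$ I would take one reset transition to $Q_0$ on $w[1]$ (treating it as part of an initial gap; legitimate since $\infty R$ is a tail language, so it suffices to accept the suffix from $Q_0$), then, using a witnessing factorization of the suffix into disjoint infixes $x_1,x_2,\ldots\in R$ separated by gaps, stay in $Q_0$ via resets while reading each gap and follow an accepting run of $\N$ from a suitable initial state on each $x_j$, taking the reset on its last letter; this letter enters $\alpha$, so that reset is an $\alpha'$-transition, and the run visits $\alpha'$ infinitely often. For $\subseteq$, an accepting run $r$ from $q$ traverses $\alpha'$-transitions, hence visits $Q_0$, infinitely often; the segment of $r$ between its last visit to $Q_0$ preceding an $\alpha'$-transition and that transition uses only $\delta$-transitions with non-initial targets, so it is a genuine run of $\N$ from an initial state reaching $\alpha$, and the corresponding infix lies in $R$. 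Consecutive extracted infixes are disjoint and there are infinitely many, so $w\in\infty R$.

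The routine parts are totality, the size bound, and the $\epsilon\in R$ case. The delicate step is the $\subseteq$ direction: I must argue that the piece of $r$ strictly between its last $Q_0$-visit and the next $\alpha'$-transition is really an $\N$-run reaching $\alpha$, and that the extracted infixes are pairwise disjoint. The two points that make this go through are that \emph{every} entry into $Q_0$ (a reset, or an ordinary $\delta$-transition that happens to land in an initial state) counts as a $Q_0$-visit, so after the last such visit only continue-transitions with non-initial targets are used; and that each $\alpha'$-transition lands in $Q_0$, which forces the next segment to begin strictly later and hence guarantees disjointness.
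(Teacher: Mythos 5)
Your proposal is correct and follows essentially the same route as the paper: the same state space, the same added reset transitions to $Q_0$ marked accepting exactly when the letter read could have entered $F$, and the same two-directional argument that every state's residual language is $\infty L(\N)$ (extracting an $\N$-run between the last $Q_0$-visit and a non-first $\alpha$-transition for one inclusion, and simulating accepting $\N$-runs on the witnessing infixes for the other). The only cosmetic difference is that you treat the $\epsilon\in L(\N)$ case by a separate universal automaton, whereas the paper folds it into the definition of $\alpha$ via the disjunct $Q_0\cap F\neq\emptyset$.
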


\begin{proof}
	Let $\N  = \zug{\Sigma, Q, Q_0, \delta, F}$. Then, $\A = \zug{\Sigma, Q, Q_0, \delta', \alpha}$ is obtained  from $\N$ by adding transitions to $Q_0$ from all states with all letters. A new $\sigma$-transition is in $\alpha$ if $Q_0 \cap F \neq \emptyset$ or when $\N$ could transit with $\sigma$ to a state in $F$. Formally, for all $s \in Q$ and $\sigma \in \Sigma$, we have that $\delta'(s,\sigma)=\delta(s,\sigma) \cup Q_0$, and $\alpha =  \{\zug{s, \sigma, q}: q\in Q_0 \text{ and } (Q_0\cap F \neq \emptyset  \mbox{ or } \delta(s, \sigma) \cap F \neq \emptyset)\}$.
	
	It is easy to see that $|\A|=|\N|$. In order to prove that $\A$ is SD and $L(\A)=\infty L(\N)$, we prove next that for every state $q\in Q$, it holds that $L(\A^q)=\infty L(\N)$. 
	We distinguish between two cases. First, if $Q_0\cap F \neq \emptyset$, then $\epsilon \in L(\N)$, in which case $\infty L(\N) = \Sigma^\omega$. Indeed, every word in  $\Sigma^\omega$ has infinitely many empty infixes. Recall that when $Q_0\cap F\neq \emptyset$,  the tNBW $\A$ has $\Sigma$-labeled $\alpha$ transitions from every state to $Q_0$. Then, the run $q, (q_0)^\omega$, where $q_0\in Q_0$, is an accepting run of $\A^q$ on any word $w\in \Sigma^\omega$. So, $L(\A^q)  = \Sigma^\omega = \infty L(\N)$, and we are done. 
	
	Now, if $Q_0\cap F = \emptyset$, we first prove that for every state $q\in Q$, we have that $\infty L(\N)\subseteq L(\A^q)$. Consider a word $w \in \infty L(\N)$, and consider a run of $\A^q$ on $w$ that upon reading an infix $\sigma_1 \cdot u \cdot \sigma_2$ such that $u \cdot \sigma_2 \in  L(\N^{q_0})\subseteq L(\N)$, for some $q_0\in Q_0$, moves to $q_0$ while reading the letter $\sigma_1$, and then follows an accepting run of $\N$ on $u \cdot \sigma_2$ except that, upon reading the letter $\sigma_2$, it moves to a state in  $Q_0$, traversing an $\alpha$ transition, instead of traversing a transition to $F$. Note that the transitions in $\A$ enable such a behavior. Also note that $\infty L(\N) \subseteq \infty \Sigma \cdot L(\N)$, and so such a run traverses $\alpha$ infinitely often and is thus accepting.
	
	It is left to prove that $L(\A^q)\subseteq \infty L(\N)$ for every state $q \in Q$.
	Let $r=r_0, r_1, \ldots$ be an accepting run of $\A^q$ on a word $w =  \sigma_1\cdot  \sigma_2\cdots$. We show that $w \in \infty L(\N)$. For that, we show that for every $i \geq 0$, there are $k_1,k_2$ such that $i \leq k_1 \leq k_2$ and $w[k_1+1, k_2+1]$ is an infix of $w$ in $L(\N)$. 
	Being accepting, the run $r$ traverses infinitely many $\alpha$-transitions. Given $i \geq 0$, let $k_2 \geq i$ be such that 
	$t=\zug{r_{k_2}, \sigma_{k_2+1}, r_{k_2+1}}$ is an $\alpha$-transition, but not the  first $\alpha$-transition that $r$ traverses when it reads the suffix $w[i+1,\infty]$, and let $k_1$ be the maximal index such that $i \leq k_1\leq  k_2$ and $r_{k_1} \in Q_0$. Since $t$ is not the first $\alpha$-transition that $r$ traverses when it reads the suffix $w[i+1,\infty]$, and  since $\alpha$-transitions lead to $Q_0$, then $k_1$ exists. Note that the maximality of $k_1$ imply that the run  $r_{k_1}, r_{k_1 + 1}, \ldots, r_{k_2}$ is a run of $\N$ on the infix $w[k_1+1, k_2]$. Now, as $\zug{r_{k_2}, \sigma_{k_2+1}, r_{k_2+1}} \in \alpha$ and $Q_0\cap F = \emptyset$, it follows by the definition of $\delta'$  that there is a state $s \in \delta(r_{k_2}, \sigma_{k_2+1}) \cap F$. Hence, $r_{k_1}, r_{k_1 + 1}, \ldots, r_{k_2}, s$ is an accepting run of $\N$ on the infix $w[k_1+1, k_2+1]$, which is thus in $L(\N)$. 
\end{proof}

We proceed to the other direction, translating an automaton for $\infty (\$\cdot R\cdot \$ )$ to an automaton for $R$. Note that the language $R$ is well defined. That is, if $R_1 \subseteq \Sigma^*$ and $R_2 \subseteq \Sigma^*$ are such that $ \infty (\$\cdot R_1\cdot \$ )= \infty (\$\cdot R_2\cdot \$ )$, then it must be that $R_1=R_2$. 

\begin{theorem}\label{SD-tNBW to NFW operation thm}
	Consider a language $R \subseteq \Sigma^*$ and a letter $\$\notin \Sigma$. For every SD-tNBW $\A$ such that \mbox{$L(\A)=\infty (\$\cdot R\cdot \$ )$}, there exists an NFW $\N$ such that $L(\N) = R$ and $|\N|\leq |\A| + 1$. In addition, if $R$ has no good prefixes, then $|\N| \leq |\A|$.
\end{theorem}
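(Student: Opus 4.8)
The plan is to first turn the $\omega$-condition into a membership test on a single periodic word, and then read an NFW off the transition graph of $\A$. The key observation is that $\im{(\cdot)}$ is invariant under adding or removing a finite prefix, so by the SDness property (Proposition~\ref{pruned-corollary}) every reachable state $p$ of $\A$ satisfies $L(\A^p)=\{w: uw\in L(\A)\}=\im{(\$\cdot R\cdot\$)}$, where $u$ is any word reaching $p$; in particular all states reached after a $\$$ are equivalent to the initial state. I would then prove the characterization
\[
x\in R \quad\Longleftrightarrow\quad (\$x)^\omega\in L(\A),
\]
for every $x\in\Sigma^*$: in $(\$x)^\omega$ the only $\$$-free infixes flanked by $\$$'s are the full blocks $x$, so $(\$x)^\omega$ has infinitely many disjoint $\$\cdot R\cdot\$$ infixes iff $x\in R$ (the case $x=\epsilon$, giving $\$^\omega$, is checked directly). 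Hence it suffices to build an NFW $\N$ over $\Sigma$ with $L(\N)=\{x:(\$x)^\omega\in L(\A)\}$.

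For the construction I would reuse $\A$'s states. Set the initial states of $\N$ to $\delta(q_0,\$)$ (the states ``just after a $\$$''), let $\N$ read the letters of $\Sigma$ using $\delta$ unchanged, and choose the accepting states so that reaching one at the end of $x$ certifies that reading the next $\$$ closes an $\alpha$-carrying loop back into the $\$$-region. The intuition is that an accepting run on $(\$x)^\omega$ is eventually trapped in one SCC of $\A$ and there loops over $(x\$)^{*}$ through an $\alpha$-transition; $\N$ should accept $x$ exactly when such a recurrent loop exists. I would define the accepting states using $\A$'s SCC decomposition, so that the global ``recurrence'' information is precomputed and does not have to be carried in $\N$'s state.

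Correctness splits into two inclusions. For $L(\N)\subseteq R$ I would take an accepting run of $\N$ on $x$, and, using the finiteness of the states visited at the $\$$-boundaries together with the fact that all $\$$-states are equivalent, assemble a genuine lasso $p\xrightarrow{(x\$)^m}p$ that traverses $\alpha$, giving an accepting run on $(\$x)^\omega$ and hence $x\in R$. The hard direction, $R\subseteq L(\N)$, is the main obstacle: from an accepting run on $(\$x)^\omega$ one must produce a \emph{single}-pass accepting run of $\N$ on $x$. Two difficulties must be reconciled. First, SDness only equates the \emph{languages} of states and says nothing about which transitions lie in $\alpha$, so a lone $\alpha$-segment $p\xrightarrow{x\$}p'$ does not by itself certify recurrence (indeed, for $x\notin R$ there may be transient $\alpha$-segments). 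Second, an NFW detects acceptance only at the end of the input, whereas the witnessing $\alpha$-transition may occur inside $x$; certifying it without an extra ``seen-$\alpha$'' bit is essential, since such a bit would double the state count and violate the $|\A|+1$ bound. The plan is to overcome both by combining equivalence of all $\$$-states (to re-synchronize and reuse a recurrent segment) with $\A$'s finite SCC structure (so that an $\alpha$-transition inside a bottom recurrent component automatically lies on a cycle read by $(x\$)^{*}$); this is exactly the point where semantic determinism, rather than mere containment in $\im{(\$\cdot R\cdot\$)}$, is used.

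Finally I would bound the size and handle the degenerate case. Well-definedness of $R$ is immediate from the displayed characterization. The count $|\N|\le|\A|+1$ comes from reusing $\A$'s states plus at most one fresh accepting state, which is needed only to account for \emph{good prefixes} of $R$: a word $x_0$ with $x_0\cdot\Sigma^{*}\subseteq R$ corresponds to a point at which $\A$ has already ``committed'' to accepting every continuation, and a single accepting sink lets $\N$ recognize the whole cone $x_0\cdot\Sigma^{*}$. When $R$ has no good prefix no such sink is required, and the construction fits into $|\A|$ states, as claimed. I expect the bookkeeping isolating precisely when this extra state is necessary, and the SCC-based definition of the accepting states, to be the most delicate parts of the write-up.
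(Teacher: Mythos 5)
Your overall architecture (reuse $\A$'s states, add one fresh accepting sink that certifies an $\alpha$-traversal, and mark certain states of $\A$ as accepting according to precomputed ``recurrence'' information) matches the shape of the paper's construction, and your periodic-word characterization $x\in R \iff (\$x)^\omega\in L(\A)$ is correct and settles well-definedness. But there is a genuine gap exactly at the point you flag as the main obstacle, and the tools you propose (equivalence of all $\$$-states plus the SCC decomposition) do not close it. The problem is the choice of anchor: you start the NFW from $\delta(q_0,\$)$ and want an $\alpha$-traversal on $x$ (or arrival in a ``recurrence-capable'' state) to certify $x\in R$. This is unsound. An SD-tNBW for $\im{(\$\cdot R\cdot\$)}$ may have \emph{transient} $\alpha$-transitions reachable from $q_0$ along $\$\cdot x$ for $x\notin R$ --- a single such traversal does not make any infinite run accepting, so it contradicts nothing about $L(\A)$. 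Your proposed repair for soundness, assembling a lasso $p\xrightarrow{(x\$)^m}p$ from one $\alpha$-carrying segment by re-synchronizing at equivalent $\$$-states, fails because $\sim_\A$ equates languages only: the state reached after one block need not admit any further $\alpha$-traversing run on $x\$$. Likewise, for completeness, an $\alpha$-transition lying in the SCC in which an accepting run on $(\$x)^\omega$ is trapped need \emph{not} lie on a cycle readable by $(x\$)^{*}$; SCCs are strongly connected via arbitrary words, so a purely structural, $x$-independent definition of the accepting set cannot certify that the specific word $x$ drives a recurrent $\alpha$-loop.

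What is missing is the existence of a correct anchor, and this is the technical heart of the paper's proof. The paper defines a set $S$ to be \emph{hopeful} if no run from $S$ on any word in $(\$\cdot\overline{R})^{\omega}$ ever traverses $\alpha$, and \emph{good} if for all $x$: $x\in\overline{R}$ iff all runs from $S$ on $\$\cdot x$ avoid $\alpha$ and $\delta(S,\$\cdot x)$ is hopeful. For a good set, a single $\alpha$-traversal on $\$\cdot x$ \emph{does} certify $x\in R$ (by hopefulness, not by building a lasso), and otherwise $x\in R$ iff the run ends in a non-hopeful state --- which is a precomputed, end-of-input-checkable property, playing the role you wanted your SCC-based accepting set to play. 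Crucially, $\{q_0\}$ and $\delta(q_0,\$)$ need not be good, and the paper must prove that \emph{some} good set exists via a nontrivial contradiction argument (iteratively building sequences of sets and witness words and analyzing three cases using SDness). Your proposal contains no substitute for this existence argument, so as written the construction cannot be completed; the rest of your outline (the $+1$ state, its removability when $R$ has no good prefixes) is consistent with the paper once the good set is in hand.
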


\begin{proof}

If $R$ is trivial, then one can choose $\N$ to be a one-state NFW. Assume that $R$ is nontrivial. Let $\A = \zug{\Sigma\cup \{\$\}, Q, q_0, \delta, \alpha}$ be an SD-tNBW for $\infty (\$ \cdot R \cdot \$)$,   W.l.o.g we assume that all the states of $\A$ are reachable. For a nonempty set of states $S\in 2^Q \setminus \emptyset$, we
define the \emph{universal $\overline{\alpha}$ language} of $S$ as $L_{\it u\overline{\alpha}}(S) = \{ w\in (\Sigma \cup \{\$\})^\omega: \text{for all  $q\in S$, all the runs of $\A^q$ on $w$ do not traverse $\alpha$}\}$.
We say that $S$ is \emph{hopeful} when $(\$\cdot \overline{R})^\omega \subseteq L_{\it u\overline{\alpha}} $.
Note that $S$ is hopeful iff for every state $q\in S$, it holds that $\{q\}$ is hopeful. 
Also, if $S$ is hopeful, $x\in \Sigma^*$ is a finite word, and there is a run from $S$ on $\$\cdot x$ that traverses $\alpha$, then $x\in R$.
Then, we say that $S$ is \emph{good} when for all words $x\in \Sigma^*$, it holds that $x\in \overline{R}$ iff all the runs from $S$ on $\$\cdot x$ do not traverse $\alpha$, and the set $\delta(S, \$\cdot x)$ is hopeful.
Note that as $R$ is nontrivial, there exists a word $x$ in $\overline{R}$, and thus by definition, all the $\$$-labled transitions going out from a good set $S$ are in $\overline{ \alpha}$.

Note that good sets in $\A$ characterize the language $R$. We first show that a good set exists. 
Assume towards contradiction that every nonempty set $S\in 2^Q\setminus \emptyset$ is not good. We define iteratively a sequence $S_1, S_2, S_3, \ldots $ of nonempty sets in $2^Q$, and a sequence $x_1, x_2, x_3,  \ldots$ of finite words in $(\Sigma\cup \{\$\})^+$, as follows. The set $S_1$ is chosen arbitrarily. For all $i\geq 1$, given the set $S_i$, we define $x_i$ and $S_{i+1}$ as follows.  
By the assumption, $S_i$ is not good. Therefore at least one of the following holds:
\begin{enumerate}[label=(\alph*)]
	\item  
	There is a witness in $R$ that $S_i$ is not good: there is a word $a_i\in R$ such that all the runs from $S_i$ on $\$\cdot a_i$ do not traverse $\alpha$, and the set $\delta(S_i, \$\cdot a_i)$ is hopeful.
	
	\item 
	There is a witness in $\overline{R}$ that $S_i$ is not good: there is a word $b_i\in \overline{ R}$ such that there is a run from $S_i$ on $\$\cdot b_i$ that traverses $\alpha$, or the set $\delta(S_i, \$\cdot b_i)$ is not hopeful.
	
\end{enumerate}
If there is a witness $a_i\in R$ that $S_i$ is not good, we take $x_i = \$\cdot  a_i$ and $S_{i+1} = \delta(S_i, x_i)$. 
Otherwise, there is a witness $b_i\in \overline{ R}$ that $S_i$ is not good. Then, as we argure below, there exists a word $z_i \in (\$\cdot \overline{ R})^+$, 
and there is a run $r_i$ from $S_i$ on $z_i$ that traverses $\alpha$. In this case, we take $S_{i+1}$ to be the set that contains only the state that $r_i$ ends in, and we take $x_i = z_i$. 
To see why such $z_i$ exists, we distinguish between two cases. First, 
if there is a run from $S_i$ on $\$\cdot b_i$ that traverses $\alpha$, then we take $z_i= \$\cdot b_i$. Otherwise, the set $\delta(S_i, \$\cdot b_i)$ is not hopeful. Therefore, it contains a state $q$ that has a run that traverses $\alpha$ on a word in $ (\$\cdot \overline{ R})^\omega$. The latter implies that there is a word $z' \in (\$\cdot \overline{ R})^+$ such that there is a finite run $r'$ of $q$ on $z'$ that traverses $\alpha$. Then, since $q$ is reachable from $S_i$ upon reading $\$\cdot b_i$, we take $z_i = \$\cdot b_i \cdot z'$. Indeed, a run $r_i$ from $S_i$ on $z_i= \$\cdot b_i \cdot z'$ that traverses $\alpha$, starts from $S_i$ and reaches $q$ upon reading $\$\cdot b_i$, and then follows the run $r'$ upon reading $z'$.

So, let $S_1, S_2, S_3 \ldots$ and $x_1, x_2, x_3, \ldots $ be sequences as above.
We distinguish between three cases:
\begin{enumerate}
	\item 
	There exists $i \geq 1$ such that for all $j\geq i$, there is a witness $a_j\in R$ that $S_j$ is not good: 
	consider $j\geq i$. In this case, $x_j = \$ \cdot a_j \in \$\cdot R$, $S_{j+1} = \delta(S_j, x_j)$, and all the runs of $S_j$ on $x_j$ do not traverse $\alpha$.
	Hence, all the runs from $S_i$ on the word $w = x_i\cdot x_{i+1} \cdot x_{i+2} \cdots \in (\$\cdot R)^\omega$ do not traverse $\alpha$, in particular, if we consider a state $q\in S_i$, we get that the word $w$ is not accepted from $q$.
	As all the states of $\A$ are reachable, there is a word $y\in (\Sigma\cup \{\$\})^*$ such that $q\in \delta(q_0, y)$. Then, the SDness property of $\A$ implies that all the states in $\delta(q_0, y)$ do not accept $w$, and hence 
	the word $y\cdot w$ is not accepted by $\A$ even though it has infinitely many infixes in $\$\cdot R \cdot \$$, and we have reached a contradiction. 
	
	\item 
	There exists $i \geq 1$ such that for all $j\geq i$, no word in  $R$ witnesses that $S_j$ is not good: consider $j\geq i$. In this case, as argued above, there is a witness $b_j\in \overline{ R}$ that $S_j$ is not good. So,  $x_j = z_j \in (\$\cdot \overline{ R})^+$, and there is a run $r_j$ from $S_j$ on $z_j$ that traverses $\alpha$ and ends in a state $s_{j+1}$, where $S_{j+1} = \{s_{j+1}\}$. The concatenation of the runs $r_i, r_{i+1}, \ldots$ is an accepting run of $\A^q$, for some $q\in S_i$, on the word $z_i\cdot z_{i+1}  \cdot z_{i+2}\cdots \in (\$\cdot \overline{ R})^\omega$. Let $y\in (\Sigma\cup \{\$\})^*$ be such that $q\in \delta(q_0, y)$; then the word $w = y\cdot z_i\cdot z_{i+1} \cdot z_{i+2} \cdots$ is accepted by $\A$ even though it has only finitely many infixes in $\$\cdot R\cdot \$$, and we have reached a contradiction.

	\item If the above two cases do not hold, then there are infinitely many indices $i\geq 1$ such that no word in $R$ witnesses that $S_i$ is not good, and there are infinitely many indices $i\geq 1$ such that there is a witness in $R$ that $S_i$ is not good. Therefore, there must be $i\geq 1$ such that there is a witness $a_i\in R$ that $S_i$ is not good, and no word in $R$ witnesses that $S_{i+1}$ is not good.
	On the one hand, the set $S_{i+1} = \delta(S_i, \$\cdot a_i)$ is hopeful. On the other hand, the fact that no word in $R$ witnesses that $S_{i+1}$ is not good, implies that there is a witness $b_{i+1}\in \overline{ R}$ that $S_{i+1}$ is not good, and thus there 
	is a run from $S_{i+1}$ that traverses $\alpha$ on the word $x_{i+1} = z_{i+1}\in (\$\cdot \overline{ R})^+$. Thus, we have reached a contradiction to the fact that $S_{i+1}$ is hopeful.
\end{enumerate}

We show now that a good set in $\A$ induces an NFW $\N$ for $R$ with the required properties. Let $S\in 2^Q\setminus\emptyset$ be a good set. We define the NFW $\N = \zug{\Sigma, Q\cup \{q_{acc}\}, Q^S_0, \delta_S, F_S}$, where $Q^S_0 = \delta(S, \$)$, $F_S = \{q_{acc}\} \cup \{q\in Q: \text{the set $\{q\}$ is not hopeful} \}$, and the transition function $\delta_S$ is defined as follows. For every two states $q, s\in Q$ and letter $\sigma\in \Sigma$, it holds that $s\in \delta_S(q, \sigma)$ iff $\zug{q, \sigma, s}\in \overline{ \alpha}$. Also, $q_{acc}\in \delta_S(q, \sigma)$ 
iff there is a $\sigma$-labeled $\alpha$-transition going out from $q$ in $\A$. Also, for all letters $\sigma \in \Sigma$, it holds that $\delta(q_{acc}, \sigma) = \{q_{acc}\}$; that is, $q_{acc}$ is an accepting sink.
Thus, $\N$ behaves as the states in $\delta(S, \$)$ as long as it reads $\overline{\alpha}$ transitions of $\A$, moves to the accepting sink $q_{acc}$ whenever an $\alpha$-transition is encountered, and accepts also whenever it reaches a state in $Q$ that is not hopeful.

%
We prove next that $L(\N) = R$. We first prove that $L(\N)\subseteq R$. Consider a word $x= \sigma_1\cdot \sigma_2\cdots \sigma_n \in L(\N)$. 
If there is a state $q\in \delta(S, \$)$ such that $\A^q$ has a run on $x$ that traverses $\alpha$, then as $S$ is good, we get that $x\in R$. 
Otherwise, for all states $q\in \delta(S, \$)$, it holds that all the runs of $\A^q$ on $x$ do not traverse $\alpha$.
Let $r = r_0, r_1, \ldots r_n$ be an accepting run of $\N$ on $x$.
By the definition of $\N$, we have that $r_0\in \delta(S, \$)$; in particular, all the runs of $\A^{r_0}$ on $x$ do not traverse $\alpha$. Hence,
since $r$ follows a run of $\A^{r_0}$ as long it does not visit $q_{acc}$, and since $r$ visits $q_{acc}$ only when it follows a run of $\A^{r_0}$ that traverses $\alpha$, we get that the run $r$ exists in $\A$. Hence, $r_n\in Q$ and as $r$ is an accepting run of $\N$, we get that the set $\{r_n\}$ is not hopeful. The fact that $r$ exists in $\A$ implies also that  $r_n\in \delta(S, \$\cdot x)$, and so $\delta(S, \$\cdot x)$ is not hopeful. Hence, as $S$ is a good set, we conclude that $x\in R$.

For the other direction, namely $R\subseteq L(\N)$, consider a word $x\in R$. If there is a state $q\in \delta(S, \$)$ and a run $r$ of $\A^q$ on $x$ that traverses $\alpha$, then an accepting run $r'$ of $\N$ on $x$ can be obtained by following the run $r$, and moving to the accepting sink $q_{acc}$ when $r$ traverses $\alpha$ for the first time. Indeed, $q\in Q^S_0$,  $\overline{\alpha}$ transitions of $\A$ exist in $\N$, and $\N$ moves to $q_{acc}$ whenever $\A$ traverses $\alpha$.
Otherwise, for all $q\in \delta(S, \$)$, it holds that all the runs of $\A^q$ on $x$ do not traverse $\alpha$. Also, recall that all the $\$$-labeled transitions from a good set are in $\overline{\alpha}$. Hence, all the runs from $S$ on $\$\cdot x$ do not traverse $\alpha$. In this case, as $S$ is a good set, and $x\in R$, it follows that $\delta(S, \$\cdot x)$ is not hopeful. Therefore, there is a state $s\in \delta(S, \$\cdot x)$ such that $\{s\}$ is not hopeful. Hence, by the definition of $\N$, it holds that $s\in F_S$. Hence, a run in $\A$ from $\delta(S, \$)$ that reaches $s$ upon reading $x$ and does not traverse $\alpha$, is an accepting run of $\N$ on $x$, and we are done.

Since the state space of $\N$ is $Q\cup \{q_{acc}\}$, then $|\N| = |\A|+1$. Moreover, 
as $q_{acc}$ is an accepting sink, a word $x\in L(\N)$ that has a run that ends in $q_{acc}$ is  a good prefix for $L(\N)$. Hence, as $L(\N) = R$, if $R$ has no good prefixes, then $q_{acc}$ is not reachable in $\N$ and thus can be removed without affecting $\N$'s language. Thus, in this case, we get an NFW $\N$ for $R$ whose size is at most $|\A|$. 
\end{proof}

\subsection{Succinctness and Complementation}\label{SD-tNBW succinct sec}

In this section we study the succinctness of SD B\"uchi automata with respect to deterministic ones, and the blow-up involved in their complementation. 
We show that SD-tNBWs are exponentially more succinct than tDBWs, matching the known upper bound \cite{AKL21}, and in fact, also from HD-tNBWs. 
We also prove an exponential lower bound for complementation.
Similar results for SD-NBWs follow, as the transition between the two types of acceptance conditions is linear.

\begin{theorem}\label{SD-tNBW succinct thm}
	There is a family $L_1,L_2,L_3,\ldots$ of languages such that for every $n \geq 1$, there is an SD-tNBW with $3n+3$ states that recognizes $L_n$, yet every tDBW or HD-tNBW that recognizes $L_n$ needs at least $2^n$ states.
\end{theorem}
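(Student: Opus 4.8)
The plan is to realize each $L_n$ as $\infty(\$\cdot R_n\cdot \$)$ for a carefully chosen family $R_n\subseteq \Sigma^*$ of finite-word languages that is \emph{exponentially succinct for nondeterminism}: each $R_n$ is recognized by an NFW with $O(n)$ states while its minimal DFW has $2^n$ states, and $R_n$ has no good prefixes. A convenient choice is the classical ``remember a window of length $n$'' family (e.g.\ the language in which the $n$-th letter from the right is a fixed symbol), suitably padded so that an NFW $\N_n$ for $\$\cdot R_n\cdot\$$ has exactly $3n+3$ states. The upper bound is then immediate: applying Theorem~\ref{NFW to SD-tNBW operation thm} to $\N_n$ yields an SD-tNBW $\A_n$ with $L(\A_n)=\infty(\$\cdot R_n\cdot\$)=L_n$ and $|\A_n|=|\N_n|=3n+3$.

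For the lower bound it suffices to treat HD-tNBWs, since every tDBW is in particular an HD-tNBW; I first describe the cleaner deterministic case. Given a tDBW $\A$ for $L_n$, note that $\A$ is trivially SD, so Theorem~\ref{SD-tNBW to NFW operation thm} produces an NFW $\N$ for $R_n$ with $|\N|\le|\A|$ (using that $R_n$ has no good prefixes). The key point is that when $\A$ is deterministic, $\N$ can be taken to be \emph{deterministic}: I would run the good-set construction of that proof starting from a singleton $S_1=\{q\}$ and observe that in a deterministic automaton each step keeps the set a singleton (in case~(a), $\delta(S_i,x_i)$ is a singleton, and case~(b) already produces a singleton), so the same contradiction argument yields a \emph{singleton} good set $S$. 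With $S$ a singleton, the initial set $\delta(S,\$)$ of $\N$ is a single state, and since $\N$'s transitions are exactly the $\overline{\alpha}$-transitions of $\A$ together with the sink $q_{acc}$ reached precisely on $\alpha$-transitions, $\N$ is a DFW. Hence $2^n\le|\N|\le|\A|$.

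The harder, and to my mind central, case is an arbitrary HD-tNBW $\A$ for $L_n$. Here I would first invoke Proposition~\ref{gfg is SD} to prune $\A$ to an equivalent SD-tNBW $\A'$ in which every state is HD and $|\A'|\le|\A|$, and then apply Theorem~\ref{SD-tNBW to NFW operation thm} to obtain an NFW $\N$ for $R_n$ with $|\N|\le|\A'|\le|\A|$. The obstacle is that $\N$ is now genuinely nondeterministic, so $|\N|$ alone only forces the (small) \emph{nondeterministic} complexity of $R_n$; to recover the $2^n$ bound I must show that $\N$ is \emph{determinizable by pruning}. The plan is to exploit that every state of $\A'$ is HD: an HD strategy for $\A'$ induces, on the $\overline{\alpha}$-part, a consistent selection of successors, which I would use to single out at each state and letter of $\N$ one transition so that the pruned automaton still recognizes $R_n$. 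Concretely, I expect to argue that $\N$ is an HD-NFW and then use the fact that HD automata on finite words are determinizable by pruning, obtaining a DFW for $R_n$ of size at most $|\N|\le|\A|$, whence $|\A|\ge 2^n$. Verifying that the SD/HD structure of $\A'$ descends to an HD (equivalently, DBP) structure on the finite-word automaton $\N$ --- i.e.\ that resolving the infinite-word nondeterminism of $\A'$ also resolves the finite-word nondeterminism of $\N$ consistently --- is the step I expect to require the most care, and it is precisely this step that distinguishes the HD-tNBW bound (which determinization alone would only push to $2^{n/2}$) from the elementary deterministic case.
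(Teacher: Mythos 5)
Your upper bound and your treatment of the deterministic case are fine. In particular, the observation that for a tDBW $\A$ the good-set argument of Theorem~\ref{SD-tNBW to NFW operation thm} can be run over singletons (case~(a) preserves singletons under a deterministic $\delta$, case~(b) produces one by construction), so that the extracted $\N$ is a DFW of size at most $|\A|$, is correct and gives a clean, modular route to the $2^n$ bound for tDBWs that differs from the paper's direct argument. The gap is in the HD case, which is exactly the case the theorem is really about. Your entire lower bound there rests on the claim that the NFW $\N$ extracted from the pruned SD-tNBW $\A'$ (all of whose states are HD) is itself HD, hence DBP, hence admits an equivalent DFW of the same size. You flag this as the delicate step but give no argument for it, and there is no evident reason it holds. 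The HD strategy of $\A'$ resolves nondeterminism so as to traverse $\alpha$ infinitely often over an \emph{infinite} word; on a single block $\$\cdot x$ it is under no obligation to traverse $\alpha$, and the state from which it begins reading the $k$-th block of $(\$ x)^\omega$ varies with the cross-block history, whereas $\N$ must accept every $x\in R_n$ starting from the fixed set $\delta(S,\$)$. Moreover, acceptance in $\N$ can also occur by \emph{ending in a non-hopeful state}: the goodness of $S$ only guarantees that $\delta(S,\$\cdot x)$ \emph{contains} a non-hopeful state, and a deterministic pruning (or an online strategy) may consistently steer into the hopeful ones; the HD strategy of $\A'$ gives no handle on this, since non-hopefulness concerns the existence of bad runs rather than the strategy's own run. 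Finally, the good set $S$ is obtained non-constructively and need not be of the form $\{f(u)\}$ for any history $u$, so there is no canonical way to transport the strategy $f$ to $\N$ at all.

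The paper avoids this entirely: it never tries to determinize the extracted NFW. Instead it assumes an HD-tNBW $\A$ for $L_n$ with fewer than $2^n$ states, uses the HD strategy to build two consistent runs $r_u$ and $r_d$ that, by pigeonhole over the $2^n$ subsets of $[n]$, merge infinitely often at common states after reading $\$\cdot S_{j+1}$ and $\$\cdot T_{j+1}$ for distinct sets with $i_{j+1}\in S_{j+1}\setminus T_{j+1}$; one of the two runs reads a word in $L_n$ and hence traverses $\alpha$ infinitely often, and by always selecting the block $A_l\in\{U_l,D_l\}$ with $i_l\notin A_l$ one splices out finite segments with no good infixes that nevertheless traverse $\alpha$. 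Concatenating these segments yields an accepting run on a word outside $L_n$, a contradiction. If you want to keep your architecture, you must either prove that $\N$ is DBP for every HD-tNBW recognizing $\infty(\$\cdot R\cdot\$)$ (which I doubt is true in the generality you need), or replace that step by a direct argument of the paper's kind; otherwise your bound for HD-tNBWs degrades to the $2^{n/2}$-type bound you yourself note is insufficient.
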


\begin{proof}
	For $n \geq 1$, let $[n]=\{1,\ldots,n\}$, and let $\Sigma_n = \{1,\ldots,n,\$, \#\}$. 
	We say that a word $z \in \Sigma^*_n$ is \emph{good} if $z = \$\cdot  x  \cdot \#  \cdot i$, where $x\in [n]^+$ and $i$ appears in $x$. Let $R_n \subseteq \Sigma^*_n$ be the language of all good words. 
	We define $L_n=\infty R_n$.
	First, it is not hard to see that $R_n$ can be recognized by an NFW $\N_n$ with $3n+3$ states. 
	Essentially, $\N_n$ guesses the last letter $i$ in the input word and then checks that the guess is correct.
	A sketch of $\N_n$ appears in Figure~\ref{N_n} below. 
	By Theorem~\ref{NFW to SD-tNBW operation thm}, there is an SD-tNBW for $L_n$ with $3n+3$ states.

	\begin{figure}[htb]	
		\begin{center}
			\includegraphics[width=0.5\textwidth]{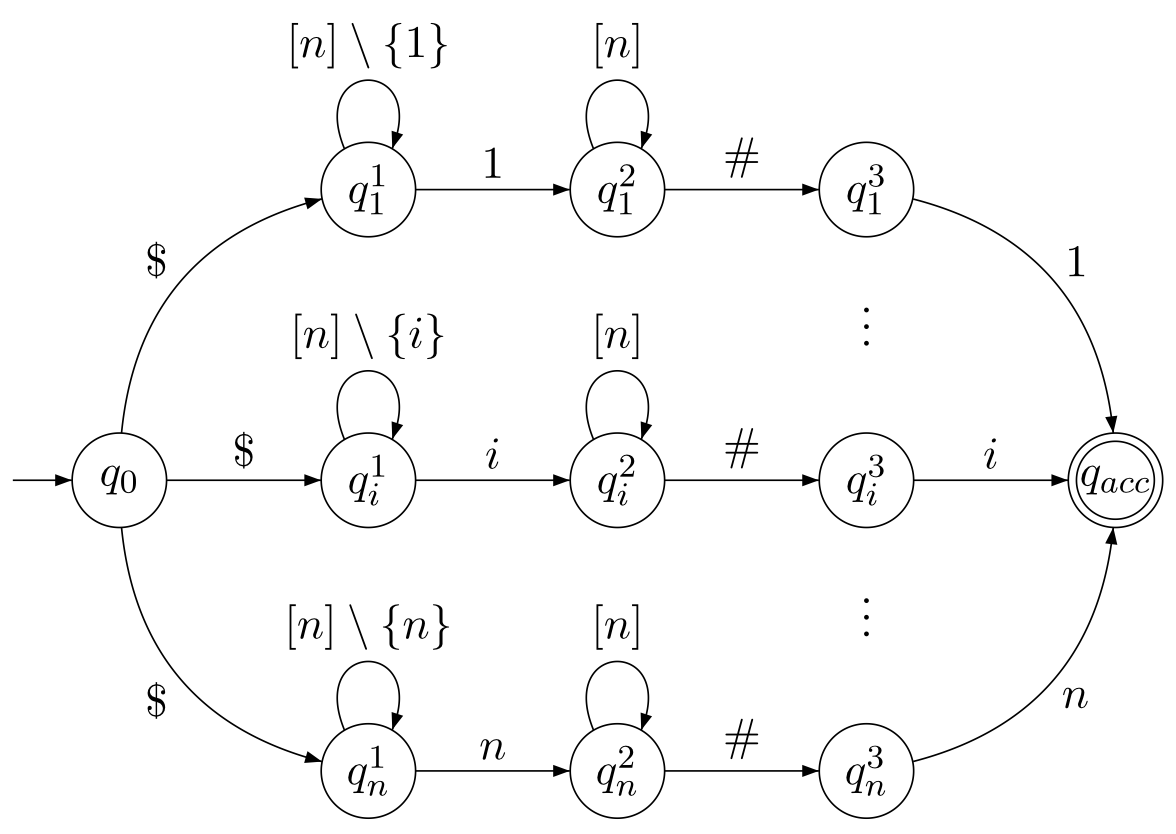}
			\caption{The NFW $\N_n$. Missing transitions lead to a rejecting-sink}
			\label{N_n}
		\end{center}
	\end{figure}

	Before we prove that every tDBW or HD-tNBW that recognizes $L_n$ needs at least $2^n$ states, let us note that it is already known that going from a DFW for a language $R \subseteq \Sigma^*$ to a tDBW
	for $\infty R$, may involve a blow-up of $2^{n-2 - log_2 (n)}$\cite{KL20}. Thus, Theorem~\ref{NFW to SD-tNBW operation thm} implies 
	an exponential gap between SD-tNBWs and tDBWs.
	Moreover, as HD-tNBWs are at most quadratically more succinct than tDBWs~\cite{KS15}, the above can be extended to a $2^{\frac{n- 2 - log_2(n)}{2}}$ lower bound for the succinctness of SD-tNBWs with respect to HD-tNBWs. 
	Our example here is tighter. 
	
	We now prove that every HD-tNBW that recognizes $L_n$ needs at least $2^n$ states. Assume towards contradiction that $\A = \zug{\Sigma_n,Q, q_0, \delta, \alpha}$  is an HD-tNBW for $L_n$ with $|Q|<2^n$ states. 
	Below, we
	iteratively define infinite sequences of finite words $x_1, x_2, x_3,\ldots $ and states $q_0, q_1, \ldots$ such that for all $k \geq 1$,
	the word $x_k$ starts with $\$$, has no good infixes, and there is a run of the form $r_k = q_{k-1} \xrightarrow{x_k} q_k$ in $\A$ on $x_k$ that traverses $\alpha$. To see why such sequences imply a contradiction, note that the concatenation of the runs $r_1, r_2, \ldots$ is an accepting run of $\A$  on the word $x = x_1\cdot x_2\cdots$. As $\A$ recognizes $L_n = \infty R_n$, it follows that $x\in \infty R_n$. On the other hand, $x$ has no good infixes,  and so $x\notin \infty R_n$. Indeed, if there is a good infix in $x$, then it must contain letters from different $x_k$'s; in particular, it must contain at least two $\$$'s. 

	We proceed by describing how, given $x_1, x_2, \ldots, x_{k-1}$ and $q_0, q_1, \ldots, q_{k-1}$, we define $x_k$ and $q_k$.  
	The challenging part in the construction is to make it valid also for HD (and not only deterministic) automata. For this, the definition of the words in the sequence is defined with respect to a strategy that attempts to witness the HDness of $\A$.
	First, recall that by Proposition~\ref{gfg is SD}, we can assume that $\A$ is SD and all its state are HD. Also, note that $q_0 \xrightarrow{x_1}  q_1  \xrightarrow{x_2} q_2 \cdots \xrightarrow{x_{k-1}} q_{k-1}$ is a run in $\A$; in particular, $q_{k-1}$ is reachable in $\A$. Hence, $L(\A) \subseteq L(\A^{q_{k-1}})$. Indeed, if $y$ is a word in $L(\A) = \infty R_n$, then $x_1\cdot x_2 \cdots x_{k-1} \cdot y\in \infty R_n$, and so the SDness of $\A$ implies that all the states in $\delta(q_0, x_1\cdot x_2 \cdots x_{k-1})$; in particular $q_{k-1}$, accept $y$.
	We abuse notation and use $S$, for a subset $S\subseteq [n]$, to denote a finite word over $[n]$ consisting of exactly the numbers appearing in $S$.
	We can now define $x_k$ and $q_k$.
	Fix $u_0, d_0 = q_{k-1}$, and let $f$ be a strategy witnessing $\A^{q_{k-1}}$'s HDness. We define inductively two infinite runs $r_u$ and $r_d$ on distinct words, and of the form $$r_u = u_0 \xrightarrow{\$\cdot 2^{[n]}} p_0 \xrightarrow{\#\cdot [n]} u_1 \xrightarrow{\$\cdot 2^{[n]}} p_1 \xrightarrow{\#\cdot [n]} u_2 \xrightarrow{\$\cdot 2^{[n]}} \cdots $$ $$r_d = d_0 \xrightarrow{\$\cdot 2^{[n]}} p_0 \xrightarrow{\#\cdot [n]} d_1 \xrightarrow{\$\cdot 2^{[n]}} p_1 \xrightarrow{\#\cdot [n]} d_2 \xrightarrow{\$\cdot 2^{[n]}} \cdots $$ Thus, while the states $u_i$ and $d_i$ are not necessarily equal, the runs meet infinitely often at the states $p_i$ upon reading infixes of the form $\$\cdot 2^{[n]}$. When we define the runs, we make sure that both are runs of $\A^{q_{k-1}}$ that are consistent with the strategy $f$.
	Consider $j\geq 0$, and assume that we have defined the following sub-runs $u_0 \xrightarrow{z^j_u} u_j$ and $d_0 \xrightarrow{z^j_d} d_j$ that are consistent with the strategy $f$. 
	A finite run $r$ of the state $u_j$ on a word $z = z_1\cdot z_2 \cdots z_n$ is \emph{consistent} if it is of the form the form $r = f(z^j_u), f(z^j_u \cdot z_1), \ldots, f(z^j_u z_n)$; that is, $r$ is consistent with the strategy $f$ w.r.t the prefix $z^j_u$. Similarly, we consider runs of $d_j$ that are consistent with the strategy $f$ w.r.t the prefix $z^j_d$.  
	As $|Q| < 2^n$, there are subsets $S_{j+1}, T_{j+1} \subseteq [n]$ with $i_{j+1}\in S_{j+1}\setminus T_{j+1}$ such that 
	$$((u_j \xrightarrow{\$\cdot S_{j+1}} p_j \xrightarrow{\#\cdot i_{j+1}} u_{j+1}) \mbox{ and } (d_j \xrightarrow{\$\cdot T_{j+1}} p_j \xrightarrow{\#\cdot i_{j+1}} d_{j+1}  ))$$ $$\mbox{ or }$$ $$((u_j \xrightarrow{\$\cdot T_{j+1}} p_j \xrightarrow{\#\cdot i_{j+1}} u_{j+1})  \mbox{ and } (d_j \xrightarrow{\$\cdot S_{j+1}} p_j \xrightarrow{\#\cdot i_{j+1}} d_{j+1} ) )$$
	where the runs are consistent. 
	Indeed, if we consider all consistent runs of $u_j$ on words of the form $\$\cdot 2^{[n]}$ and all consistent runs of $d_j$ on words of the form $\$\cdot  2^{[n]}$, we get that there are two runs, of $u_j$ and of $d_j$, corresponding to distinct subsets of $[n]$,  that reach the same state $p_j$.  
	So far, we have defined two infinite runs of $\A^{q_{k-1}}$ that are consistent with the strategy $f$:
	
	$$r_u = u_0 \xrightarrow{\$\cdot U_1} p_0 \xrightarrow{\#\cdot i_1} u_1 \xrightarrow{\$\cdot U_2} p_1 \xrightarrow{\#\cdot i_2} u_2 \xrightarrow{\$\cdot U_3} \cdots $$ $$r_d = d_0 \xrightarrow{\$\cdot D_1} p_0 \xrightarrow{\#\cdot i_1} d_1 \xrightarrow{\$\cdot D_2} p_1 \xrightarrow{\#\cdot i_2} d_2 \xrightarrow{\$\cdot D_3} \cdots $$
	
	where for all $j\geq 1$, $i_j \in (U_j \setminus D_j) \cup (D_j \setminus U_j)$. In particular, exactly one of $\$\cdot U_j \cdot \#\cdot i_j$ and $\$\cdot D_j \cdot \#\cdot i_j$ is good. Therefore, at least one of the runs $r_u$ and $r_d$ runs on a word in $\infty R_n$. Since both runs are consistent with the strategy $f$ and $\infty R_n \subseteq L(\A^{q_{k-1}})$, we get that one of the runs traverses $\alpha$ infinitely often.
	W.l.o.g consider some $j > 0$ such that the sub-run $p_j \xrightarrow{\#_{i_{j+1}}} u_{j+1} \xrightarrow{\$\cdot U_{j+2}} p_{j+1}$ traverses $\alpha$. 
	We define 
	$$x_k = (\$ \cdot A_1 \cdot \# \cdot i_1) \cdot (\$ \cdot A_2 \cdot \# \cdot i_2) \cdots (\$ \cdot A_j \cdot \# \cdot i_j) \cdot (\$\cdot A_{j+1} \cdot \# \cdot {i_{j+1}}) \cdot \$\cdot U_{j+2}$$
	where for all $1\leq l \leq j+1$, it holds that $A_l \in \{U_l, D_l\}$ and $i_l \notin A_l$. In particular, $x_k$ starts with $\$$ and has no good infixes. To conclude the proof, we show that there is a run $r_k$ of $\A^{q_{k-1}}$ on $x_k$ that traverses an $\alpha$-transition (in particular, we can choose $q_k$ to be the last state that $r_k$ visits). The run $r_k$ can obtained from the runs $r_u$ and $r_d$ as follows. First, $r_k$ starts by following one of the sub-runs $u_0 \xrightarrow{\$\cdot U_1} p_0$ or $d_0 \xrightarrow{\$\cdot D_1} p_0$, depending on whether $A_1 = U_1$ or $A_1 = D_1$, respectively. Then, if the run $r_k$ is in a state of the form $p_m$, for $m < j$, the run $r_k$ proceeds, upon reading $\#\cdot i_{m+1}$, to one of the states $u_{m+1}$ or $d_{m+1}$, depending on whether the next two letters to be read are $\$\cdot U_{m+2}$ or $\$\cdot D_{m+2}$, respectively. Finally, once $r_k$ reaches $p_j$, it ends by traversing $\alpha$ while following the sub-run $p_j \xrightarrow{\#_{i_{j+1}}} u_{j+1} \xrightarrow{\$\cdot U_{j+2}} p_{j+1}$.
\end{proof}

\begin{remark}\label{alt min proof remark}
	{\rm In order to get a slightly tighter bound, one can show that a minimal tDBW for $L_n$ needs at least $2^{n+1}$ states and that the language $L_n$ is not {\em HD-helpful}. That is, a minimal HD-tNBW for $L_n$ is not smaller than a minimal tDBW for $L_n$, and so the $2^{n+1}$ bound holds also for a minimal HD-tNBW. The proof starts with a tDBW for $L_n$ that has $2^{n+1}$ states, considers its complement tDCW, and shows that the application of the polynomial minimization algorithm of \cite{AK19} on it, namely the algorithm that returns an equivalent minimal HD-tNCW, does result in a smaller automaton. The result then follows from the fact that a minimal HD-tNCW for a language is smaller than a minimal HD-tNBW  for its complement \cite{KS15}. The proof is easy for readers familiar with \cite{AK19}, yet involves many notions and observations from there. We thus leave it for the appendix. Specifically,   
	in Appendix~\ref{app alt min proof remark}, we describe a tDBW with $2^{n+1}$ states for $L_n$, and readers familiar with \cite{AK19} can observe that the application of the HD-tNCW minimization algorithm on its dual tDCW does not make it smaller. 
	}\hfill \qed
\end{remark}

\begin{theorem}\label{SD-tNBW comp thm}
	There is a family $L_1,L_2,L_3,\ldots$ of languages such that for every $n \geq 1$, there is an SD-tNBW with $O(n)$ states that recognizes $L_n$, yet every SD-tNCW that recognizes $\overline{L_n}$ needs at least $2^{O(n)}$ states.
\end{theorem}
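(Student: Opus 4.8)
The strategy is to reuse the languages $L_n = \infty R_n$ from Theorem \ref{SD-tNBW succinct thm}, for which we already have an SD-tNBW with $O(n)$ states, and to establish the exponential lower bound on SD-tNCWs for $\overline{L_n}$ by leveraging the reversible encoding of Theorem \ref{SD-tNBW to NFW operation thm}. The first step is to set up $\overline{L_n}$ in the form to which that theorem applies. Concretely, I would introduce a fresh marker $\$$ (or reuse the $\$$ already in $\Sigma_n$) and work with a language of the shape $\infty(\$\cdot R'_n \cdot \$)$ for a suitable $R'_n$, so that an SD-tNCW for $\overline{L_n}$ can be dualized into an SD-tNBW, and then Theorem \ref{SD-tNBW to NFW operation thm} extracts an NFW of essentially the same size. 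The point is that a small SD-tNCW for $\overline{L_n}$ would, after dualization and applying the NFW-extraction, yield a small NFW for the associated finite-word language.

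The heart of the argument is then a lower bound on NFWs (or on the state complexity of the relevant finite-word language). I would show that the finite-word language implicitly encoded in $\overline{L_n}$ requires $2^{\Omega(n)}$ states for any NFW. Here the subset structure of $R_n$ — good words being $\$\cdot x \cdot \#\cdot i$ where $i$ occurs in $x\in[n]^+$ — is tailor-made for a fooling-set or distinguishability argument over the $2^n$ subsets of $[n]$: distinct subsets $S,T\subseteq[n]$ are separated by a letter $i\in S\triangle T$, forcing $2^n$ pairwise-inequivalent behaviors. This is exactly the combinatorial core already exploited in the proof of Theorem \ref{SD-tNBW succinct thm}, where the $2^n$ subsets and a separating index $i_{j+1}\in S_{j+1}\setminus T_{j+1}$ drove the lower bound; I would adapt that fooling-set analysis to the NFW setting for the complement language.

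The main obstacle I anticipate is the dualization step: complementing a transition-based co-\buchi condition against a \buchi condition is clean only because a weak/deterministic structure makes the two conditions dual, but here the automaton is merely SD, not deterministic, so one must be careful that dualizing an SD-tNCW actually produces an SD-tNBW recognizing the complement language — in general, complementation of nondeterministic automata does not commute with swapping the acceptance mode. The safe route is to avoid literal dualization of a nondeterministic object and instead argue directly: assume an SD-tNCW $\B$ for $\overline{L_n}$ with few states, and use the SDness property (Proposition \ref{pruned-corollary}) together with the infinitely-often structure of $\im{(\cdot)}$ to read off an NFW for the encoded finite language, mirroring the construction in the proof of Theorem \ref{SD-tNBW to NFW operation thm} but adapted so that co-\buchi ``visit $\alpha$ finitely often'' plays the role there played by ``traverse $\alpha$ infinitely often.''

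Finally, I would assemble the pieces: the $O(n)$-state SD-tNBW for $L_n$ from Theorem \ref{SD-tNBW succinct thm}, combined with the chain ``small SD-tNCW for $\overline{L_n}$ $\Rightarrow$ small NFW for the encoded finite language $\Rightarrow$ contradiction with the $2^{\Omega(n)}$ NFW lower bound,'' yields the claimed $2^{O(n)}$ separation. I would also double-check that the additive $+1$ in Theorem \ref{SD-tNBW to NFW operation thm} and any linear overhead in the transition-to-state conversion are absorbed into the $2^{O(n)}$ and $O(n)$ bounds, which they clearly are.
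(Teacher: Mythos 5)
Your high-level strategy (encode an NFW into a small SD-tNBW for $L_n$, then argue that a small SD-tNCW for $\overline{L_n}$ would decode into a small NFW for an associated finite-word language) is sound in spirit, and your worry about dualizing a nondeterministic co-B\"uchi automaton is well placed; the ``safe route'' you sketch is essentially what the paper develops as Theorem~\ref{SD-tNCW to NFW operation thm} (for languages of the form $\bowtie_\$\!\!(R)$). However, there is a genuine gap in your choice of language. If you reuse $L_n=\infty R_n$ with $R_n$ the good words $\$\cdot x\cdot\#\cdot i$ (massaged into the shape $\infty(\$\cdot R'_n\cdot\$)$), then $\overline{L_n}=\bowtie_\$\!\!(\overline{R'_n})$, so the NFW you extract from a small tNCW for $\overline{L_n}$ recognizes the \emph{complement} $\overline{R'_n}$ --- and that complement has an NFW with only $O(n)$ states: to accept $x\cdot\#\cdot i$ with $i$ not occurring in $x$, the NFW guesses $i$ up front, verifies that $i$ never appears while reading $x$, and checks the last letter (the remaining cases, such as wrong $\#$-structure, cost $O(1)$ states). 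The $2^n$ subset-distinguishability you invoke is a lower bound for deterministic and HD automata (which is exactly how it is used in Theorem~\ref{SD-tNBW succinct thm}); it does not survive as a fooling-set bound for NFWs of the complement, so no contradiction is reached and the argument collapses.

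The paper avoids this by switching to a different family: $\Sigma=\{0,1\}$ and $R_n = 0\cdot(0+1)^{n-1}\cdot 1 + 1\cdot(0+1)^{n-1}\cdot 0$ (length-$(n+1)$ words whose first and last letters differ), with $L_n=\infty R_n$. Here $\overline{L_n}$ is the set of words with a suffix $u^\omega$ for some $u\in(0+1)^n$, and the paper proves directly that \emph{every} tNCW (not only an SD one) for $\overline{L_n}$ needs $2^{O(n)}$ states, since it must remember the last $n$ letters read --- an argument in the style of the NFW lower bound for $\{u\cdot u : u\in(0+1)^n\}$. No decoding theorem is needed for the lower bound at all; only Theorem~\ref{NFW to SD-tNBW operation thm} is used, for the $O(n)$ upper bound. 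If you want to salvage your reduction-based route, you must pick $R'_n$ so that it is the \emph{complement} $\overline{R'_n}$ that is provably hard for NFWs, not $R'_n$ itself.
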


\begin{proof}
Let $\Sigma=\{0,1\}$. For $n \geq 1$, let $R_n=\{w:  w \in 0 \cdot (0+1)^{n-1} \cdot 1 + 1 \cdot (0+1)^{n-1} \cdot 0\}$. Thus, $R_n$ contains all words of length $n+1$ whose first and last letters are different. 
	It is easy to see that $R_n$ can be recognized by an NFW (in fact, a DFW) $\N_n$ with $O(n)$ states. We define $L_n=\infty R_n$. First, by Theorem~\ref{NFW to SD-tNBW operation thm}, there is an SD-tNBW with $O(n)$ states for $L_n$.
	In order to prove that an SD-tNCW for $\overline{L_n}$ needs at least $2^{O(n)}$ states, we prove that in fact every tNCW for $\overline{L_n}$ needs that many states. For this, note that $\overline{L_n}$ consists of all words $w$ for which there is $u \in (0+1)^n$ such that $w \in (0+1)^* \cdot u^\omega$. Indeed, for such words $w$, the suffix $u^\omega$ contains no infix in $R_n$. Also, if a word contains only finitely many infixes in $R_n$, then it must have a suffix with no infixes in $R_n$, namely a suffix of the form $u^\omega$ for some $u \in (0+1)^n$. Then, the proof that a tNCW for $\overline{L_n}$ needs exponentially many states is similar to the proof that an NFW for $\{u \cdot u : u \in (0+1)^n\}$ needs exponentially many states. Indeed, it has to remember the last $n$ letters read. 
\end{proof}

\stam{
	We first describe an example with a fixed alphabet, where the SD-tNBW for $L_n$ requires $O(n^2)$ states, and then move to an alphabet that depends on $n$ and reduce the size of the SD-tNBW for $L_n$ to $O(n)$ states (see \cite{KR10} for a similar approach in analyzing the blow up in translating LTL to deterministic automata).
	
	Let $\Sigma=\{0,1\}$. For $n \geq 1$, let $R_n=\{w \in (0+1)^{2n} : w[1,n]  \neq w[n+1,2n]\}$.
	It is not hard to see that $R_n$ can be recognized by an NFW $\N_n$ with $O(n^2)$ states. Indeed, $\N_n$ accepts an input word $w$ if $w$ is in $(0+1)^{2n}$, and there is a position $i \in \{1,\ldots,n\}$ such that $w[i] \neq w[i+n]$. For this, $\N_n$ counts the number of letters and also uses its nondeterminism in order to guess the position $i$, remember the letter it reads there, and accepts if the letter that comes $n$ letters after it is different.
	We define $L_n=\infty R_n$.
	
	First, by Theorem~\ref{NFW to SD-tNBW operation thm}, there is an SD-tNBW with $O(n^2)$ states for $L_n$.
	In order to prove that an SD-tNCW for $\overline{L_n}$ needs at least $2^{O(n)}$ states, we prove that in fact every tNCW for $\overline{L_n}$ needs that many states. For this, note that $\overline{L_n}$ contains all words $w$ for which there is $u \in (0+1)^n$ such that $w \in (0+1)^* \cdot u^\omega$. Indeed, for such words $w$, the suffix $u^\omega$ contains no infix in $R_n$. Also, if a word contain only finitely many infixes in $R_n$, then it must  has a suffix with no such infixes, namely a suffix of the form $u^\omega$ for some $u \in (0+1)^n$. Then, the proof that a tNCW for $\overline{L_n}$ is similar to the proof that an NFW for $\overline{R_n}$ needs exponentially many states. Indeed, it has to remember the word $u$ in states reachable after reading the first $n$ letters.
	
	Now, tightening the size of the SD-tNBW for $L_n$, recall that the NFW for $R_n$ needs $O(n^2)$ states as it needs to check both the length of the input word and events that are $n$ letters apart.  By going to an alphabet $\Sigma_n=[n] \times \{0,1\}$, we can simplify the length check. For a word $w \in \Sigma_n^*$, let $w_1 \subseteq [n]^*$ and $w_2 \subseteq \{0,1\}^*$ by the projection of $w$ on $[n]$ and $\{0,1\}$, respectively. We define $R'_n$ as all words in which $w_1$ contains an error in a way in which a $1$ to $n$ counter proceeds or $w_2$ contains an error in an attempt to repeat the same vector in $(0+1)^n$. Formally, $w \in R'_n$ iff $w_1$ includes an infix $i \cdot j \in [n]^2$ where $j \neq (i \mbox{ mod } n)+1$ or $w_2$ includes a subword in $0 \cdot (0+1)^{n-1} \cdot 1 +  1 \cdot (0+1)^{n-1} \cdot 0$. It is easy to see that there is an NFW $\N'_n$ with $O(n)$ states that recognizes $R'_n$, and so, by Theorem~\ref{NFW to SD-tNBW operation thm}, there is an SD-tNBW with $O(n)$ states for $L'_n=\infty R'_n$. Now, $\overline{L'_n}$ contains all words $w$ that eventually do not contain errors, which are words for which there is $u \in (0+1)^n$ such that $w \in \Sigma_n^* \cdot (\zug{1,u[1]},\zug{2,u[2]},\ldots,\zug{n,u[n]})^\omega$, and a tNCW for $\overline{L'_n}$ need at least $2^{O(n)}$ states.
	}

\stam{
\begin{theorem}\label{SD-tNBW comp thm}
	There is a family $L_1,L_2,L_3,\ldots$ of languages such that for every $n \geq 1$, there is an SD-tNBW with $O(n^2)$ states that recognizes $L_n$, yet every SD-tNCW that recognizes $\overline{L_n}$ needs at least $2^{O(n)}$ states.
\end{theorem}

\begin{proof}
	For $n \geq 1$, let $R_n=\{w \in (0+1)^{2n} : w[1,n]  \neq w[n+1,2n]\}$.
	It is not hard to see that $R_n$ can be recognized by an NFW $\N_n$ with $O(n^2)$ states. Indeed, $\N_n$ accepts an input word $w$ if $w$ is in $(0+1)^{2n}$, and there is a position $i \in \{1,\ldots,n\}$ such that $w[i] \neq w[i+n]$. For this, $\N_n$ counts the number of letters and also uses its nondeterminism in order to guess the position $i$, remember the letter it reads there, and accepts if the letter that comes $n$ letters after it is different.
	We define $L_n=\infty R_n$.
	
	First, by Theorem~\ref{NFW to SD-tNBW operation thm}, there is an SD-tNBW with $O(n^2)$ states for $L_n$.
	In order to prove that an SD-NCW for $\overline{L_n}$ needs at least $2^{O(n)}$ states, we prove that in fact every NCW for $\overline{L_n}$ needs that many states. For this, note that $\overline{L_n}$ contains all words $w$ for which there is $u \in (0+1)^n$ such that $w \in (0+1)^* \cdot u^\omega$. Indeed, for such words $w$, the suffix $u^\omega$ contains no infix in $R_n$. Also, if a word contain only finitely many infixes in $R_n$, then it must  has a suffix with no such infixes, namely a suffix of the form $u^\omega$ for some $u \in (0+1)^n$. Then, the proof that an NCW for $\overline{L_n}$ is similar to the proof that an NFW for $\overline{R_n}$ needs exponentially many states. Indeed, it has to remember the word $u$ in states reachable after reading the first $n$ letters.
\end{proof}
}

\stam{

	To show that a GFG-tNBW for $\infty L(\N_n)$ needs at least $2^{n+1}$ states, we define a tDBW $\D_n  =\zug{\Sigma_n, 2^{[n]}\times \{a, c\}, \zug{\emptyset, c}, \delta, \alpha}$ for $\infty L(\N_n)$ with $2^{n+1}$ states and prove that it is a minimal GFG-tNBW. 
	The tDBW $\D_n$ (See Figure~\ref{Dn}) consists of two copies of subsets of $[n]$: the $a$-copy  $2^{[n]}\times \{a\}$ and the $c$-copy $2^{[n]}\times \{c\}$ - the ``a" stands for accumulate, and the ``c" stands for check.
	Essentially, the state $\zug{S, a}$ in the $a$-copy remembers that we have read only numbers in $[n]$ after the last $\$$ and also remembers the set $S\subseteq [n]$ of these numbers.  The state $\zug{S, c}$ in the $c$-copy remmbers, in addition, that we have just read $\#$. 
	Accordingly, $\zug{S, c}$  checks whether the next letter is in $S$ in order to traverse $\alpha$.
	Effectively, $\D_n$ traverses $\alpha$ when a good infix is detected.
	Formally,  for every state $\zug{S, o} \in  2^{[n]}\times \{a, c\}$, and letter $\sigma \in \Sigma_n$, we define:
	
	$$
	\delta(\zug{S,a}, \sigma)=
	\begin{cases}
	\zug{S \cup \{\sigma\}, a},  &  \text{if $\sigma \in [n]$} \\
	\zug{S , c}, & \text{if $\sigma = \#$}\\
	\zug{ \emptyset, a}, & \text{if $\sigma = \$$}
	\end{cases}
	\hspace{1cm}
	\delta(\zug{S, c}, \sigma)=
	\begin{cases}
	\zug{\emptyset, c}, & \text{if $\sigma \neq \$$}\\
	\zug{ \emptyset, a}, & \text{if $\sigma = \$$}
	\end{cases}
	$$

	Also, $\alpha=\{\zug{\zug{S,c},\sigma,\zug{\emptyset,c}} : \sigma \in S\}$. 
	
	Note that reading $\$$ from any state we move to $\zug{\emptyset, a}$ to detect a good infix,  and if at some point we read an unexpected letter, we move to $\zug{\emptyset, c}$, where we wait for $\$$ to be read. 
	
	Below we argue that $\D_n$ is a minimal GFG-tNBW for $\infty L(\N_n)$.

	\begin{figure}[htb]	
		\begin{center}
			\includegraphics[width=0.6\textwidth]{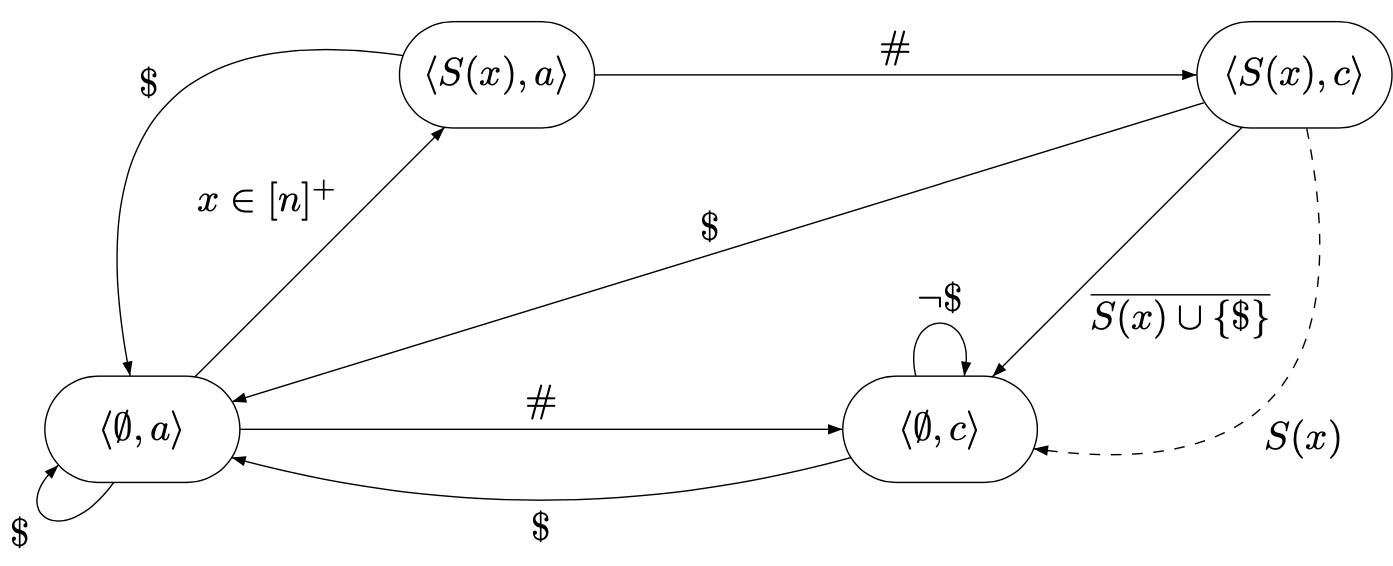}
			\caption{The tDBW  $D_n$. For a finite word $x\in [n]^+$, we denote the set of numbers that appear in $x$ by $S(x)$. 
				Dashed transitions are $\alpha$-transitions.}
			\label{Dn}
		\end{center}
	\end{figure}

	The following proposition implies that the state $\zug{\emptyset, c}$ detects good infixes.
	
	\begin{proposition}\label{Dn SCs}
		\begin{enumerate}
			\item 
			$\D_n$ is nice, and it has a single safe component.
			
			\item
			$\overline{L_{\it safe}(\zug{\emptyset, c})}$ consists of all infinite words that have a good infix.
		\end{enumerate}
	\end{proposition}

	\begin{proof}
		
		\begin{enumerate}

			\item
			We show first that there is a safe cycle that traverses all the states in $\D_n$.
			For a subset $S\subseteq [n]$,  let $z_S\in [n]^*$ be  a word consisting of exactly all the numbers in $S$.
			Then, by the definition of $\delta$, it is easy to see that $\zug{\emptyset, c} \xrightarrow{\$\cdot  z_S }\zug{S, a} \xrightarrow{\#} \zug{S, c,}  \xrightarrow{\#} \zug{\emptyset, c}$ is a safe cycle. Therefore, by concatenating all cycles corresponding to all subsets $S$ of $[n]$, we get a safe cycle that traverses all the states in $\D_n$.  Hence, $\D_n$ has a single safe-component.
			In  particular, $\D$ is normal and all of its states are reachable. Hence, being normal and deterministic, $\D_n$ is also nice, and we are done.

			\item
			Consider a word $w = \sigma_1 \cdot \sigma_2 \cdots \in \overline{L_{\it safe}(\zug{\emptyset, c})}$ and let $r = r_0, r_1, \ldots$ be a run of $\zug{\emptyset, c}$ on $w$ that traverses $\alpha$. Let $j_2$ be such that $\zug{r_{j_2 - 1}, \sigma_{j_2}, r_{j_2}}$ is the first $\alpha$-transition that $r$ traverses. In particular, $r_{j_2} = \zug{\emptyset, c}$ and $r_0, r_1,  \ldots, r_{j_2-1}$ is a safe run. Now let $j_1$ be the maximal index with $j_1 < j_2$ and $\sigma_{j_1} = \$$, in particular $r_{j_1} = \zug{\emptyset, a}$. Note that since the only way to leave $\zug{\emptyset, c}$ and traverse $\alpha$ is by reading $\$$, and since  $\alpha$-transitions are not $\$$-labeled, then $j_1$ exists.
			By the maximality of $j_1$, it follows that $w[j_1+1, j_2]$ has no $\$$'s.  Hence, since the only way to leave $\zug{\emptyset, c}$ is by reading $\$$,  we get in total that $r[j_1, j_2] = r_{j_1}, r_{j_1+1}, \ldots, r_{j_2}$ is a run of $\zug{\emptyset, a}$, on the infix $w[j_1+1, j_2]$ which has no $\$$'s,  that traverses $\alpha$ only in its last transition and visits the state $\zug{\emptyset, c}$ only at the end.
			%
			%
			%
			%
			Hence, by the defition of $\delta$, whenever the run $r[j_1, j_2]$ is in a state of the form $\zug{S, a}$ it reads only letters from $[n]\cup \{\#\}$  and whenever it is in a state of the form $\zug{S, c}$ it reads only letters in $S$.
			Therefore, if $\zug{S, a}$ is the last state in $2^{[n]}\times \{a\}$ that $r[j_1, j_2]$ visits, it must be the case that $r[j_1, j_2]$ proceeds by reading $\#$ and then it traverses $\alpha$ upon reading a number $i$ in $S$. In particular, as $\zug{S, a}$ remembers the set of accumulated numbers after the last $\$$, then $\$\cdot w[j_1 +1, j_2]  = \$\cdot  x  \cdot \#\cdot i$ is good. Now as $\sigma_{j_1} = \$$, we get that $w[j_1, j_2] = \$\cdot w[j_1 +1, j_2]$ is a good infix of $w$.
			
			For the other direction, consider a word $w = \sigma_1\cdot \sigma_2 \cdots$ with indices $j_1 < j_2$ such that $w[j_1, j_2] = \$\cdot x \cdot \#\cdot i$ is a good infix of $w$. Consider the run $r = r_0, r_1, \ldots$ of $\zug{\emptyset, c}$ on $w$. If the run  $r_0, r_1, \ldots r_{j_1-1}$ did not traverse $\alpha$, then since $w[j_1, j_2] =\$\cdot x\cdot \# \cdot i$ is good, we get  by the defition of $\delta$ that the run $r_{j_1 - 1}, r_{j_1}, \ldots, r_{j_2}$ on $w[j_1, j_2]$ has the following form: $$ r_{j_1 - 1} \xrightarrow{\$} \zug{\emptyset, a} \xrightarrow{x} \zug{S(x), a} \xrightarrow{\#} \zug{S(x), c} \xrightarrow{i} \zug{\emptyset, c}$$  where $S(x)$ is the set of numbers that appear in $x$. Hence, as $i\in S(x)$, then $r$ traverses an $\alpha$-transition, and thus $w\in \overline{L_{\it safe}(\zug{\emptyset, c})}$.
		\end{enumerate}
		
	\end{proof}

	Now we can prove the following:
	
	\begin{proposition}
		$L(\D_n) = \infty L(\N_n)$. 
	\end{proposition}
	
	\begin{proof}
		For the first direction, consider a word $w \in  \infty L(\N_n)$.
		We show that the run $r$ of $\D_n$ on $w$ traverses $\alpha$ infinitely often. First, recall that the initial state is $\zug{\emptyset, c}$ and we also move to it after every visit in $\alpha$. Hence, it is sufficient to show that for every suffix $w[t, \infty]$ of $w$,  if we read $w[t, \infty]$ from the state $\zug{\emptyset, c}$, then we traverse $\alpha$: as $w\in \infty L(\N_n)$, then $w[t, \infty]$ has a good  infix. Hence, by Proposition~\ref{Dn SCs}, we have that the run of $\zug{\emptyset, c}$ on $w[t, \infty]$ traverses $\alpha$.
		
		For the other direction, consider a word $w\notin \infty{L(\N_n)}$. We show that $\D_n$ rejects $w$. Consider $t\geq 0$ such that $w[t, \infty]$ has no good infixes. Assume towards contradiction that the run $r=r_0, r_1, \ldots$ of $\D_n$ on $w = \sigma_1\sigma_2\cdots$ is accepting. In particular, since $r$ visits $\zug{\emptyset, c}$ after every visit in $\alpha$, we assume w.l.o.g that $r_t = \zug{\emptyset, c}$. By Proposition~\ref{Dn SCs}, we have that $w[t+1, \infty]\in L_{\it safe}(r_t)$; in particular $r$ eventually traverses only $\overline{\alpha}$ transitions, and we have reached a contradiction.
	\end{proof}

	\begin{proposition}
		$\D_n$ is a minimal GFG-tNBW.
	\end{proposition}
	
	\begin{proof}
		First, note that it is sufficient to show that $\widetilde{\D_n}$ is a minimal GFG-tNCW. Indeed, if $\widetilde{\D_n}$ is a minimal GFG-tNCW and by contradiction there is a GFG-tNBW for $\infty L(\N_n)$ with $<2^{n+1}$ states, then Proposition~\ref{GFG-tNBW comp prop} implies the existence of a GFG-tNCW with $<2^{n+1}$ states for $\overline{\infty L(\N_n)} = L(\widetilde{\D_n})$, contradicting the minimality of  $\widetilde{\D_n}$.
		
		Next, we show that 	$\widetilde{\D_n}$ is a minimal GFG-tNCW by showing that it is safe-centralized and safe-minimal. As $\widetilde{\D_n}$ has a single safe-component, it is safe-centralized. 
		Next, we show that $\widetilde{\D_n}$ is safe-minimal by showing that every two states $q$ and $s$ have distinct safe languages. We distinguish between several cases:
		\begin{itemize}
			\item $q = \zug{S, a}$ and $s = \zug{T, a}$ for $S\neq T$: w.l.o.g there is a number $i\in S\setminus T$. In this case, 
			$\# \cdot i ^\omega \in L_{\it safe}(s) \setminus L_{\it safe}(q)$.

			\item $q = \zug{S, c}$ and $s = \zug{T, c}$ for $S\neq T$: w.l.o.g there is a number $i\in S\setminus T$. In this case, $ i^\omega \in L_{\it safe}(s) \setminus L_{\it safe}(q)$.
			
			\item 
			$q = \zug{S, a}, s = \zug{T, c}$ and $T\neq [n]$: in this case the word $i\cdot \# \cdot i^\omega \in L_{\it safe}(s) \setminus L_{\it safe} (q)$, for all $i\notin T$. 
			
			\item
			$q = \zug{S, a}, s = \zug{T, c}$ and $T= [n]$: in this case the word $i^\omega\in L_{\it safe}(q) \setminus L_{\it safe} (s)$, for all $i\in T$.

		\end{itemize}
		
	\end{proof}
	
}

\subsection{Decision Problems}\label{dp buchi}

We continue to decision problems about SD-tNBWs and SD-NBWs, and show that the exponential succinctness comes with a price: the complexity of all the problems we study coincides with the one known for tNBWs and NBWs. Accordingly, we only prove lower bounds for SD-tNBWs. Matching upper bounds follow from the known complexity for tNBWs, and same bounds for SD-NBWs follow from linear translations between SD-tNBWs and SD-NBWs.
The problems we study are {\em language containment:}  given two SD-tNBWs $\A_1$ and $\A_2$, decide whether $L(\A_1) \subseteq L(\A_2)$, {\em universality:} given an SD-tNBW $\A_1$, decide whether $L(\A_1)=\Sigma^\omega$, and {\em minimization: } given an SD-tNBW $\A_1$ and an integer $k \geq 1$, decide whether there is an SD-tNBW $\A_2$ such that $L(\A_1) = L(\A_2)$ and $|\A_2| \leq k$. 

The exponential succinctness of SD automata motivates also the study of the {\em D-to-SD minimization problem}. Here, we are given a tDBW $\A_1$ and an integer $k \geq 1$, and we need to decide whether there is an SD-tNBW $\A_2$ such that $L(\A_1) = L(\A_2)$ and $|\A_2| \leq k$. For automata on finite words, the D-to-N minimization problem is known to be PSPACE-complete~\cite{JR93}.

Note that a lower bound for universality implies a lower bound also for language containment. We still start with language containment, as it is much simpler.

\begin{theorem}\label{contain SD-tNBWs hard thm}
	The language-containment problem for SD-tNBWs is PSPACE-hard.	
\end{theorem}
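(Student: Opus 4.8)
The plan is to reduce from a known PSPACE-complete problem about finite automata and to exploit the reversible encoding between NFWs and SD-tNBWs established in Theorems \ref{NFW to SD-tNBW operation thm} and \ref{SD-tNBW to NFW operation thm}. The natural source problem is NFW universality (equivalently, NFW language containment), which is PSPACE-complete. So I would start with two NFWs $\N_1, \N_2$ recognizing languages $R_1, R_2 \subseteq \Sigma^*$, and seek to decide $R_1 \subseteq R_2$ via containment of two SD-tNBWs of polynomial size.

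The first key step is to observe that the $\infty(\cdot)$ operator is monotone: if $R_1 \subseteq R_2$ then clearly every word with infinitely many disjoint infixes in $R_1$ also has infinitely many in $R_2$, so $\infty R_1 \subseteq \infty R_2$. The converse, however, is where care is needed, since $\infty R_1 \subseteq \infty R_2$ does not in general force $R_1 \subseteq R_2$ (a single finite word contributes nothing by itself to membership in $\infty R$). To make the encoding faithful I would instead use the \emph{marked} version $\$ \cdot R \cdot \$$ with a fresh separator $\$ \notin \Sigma$, exactly as in Theorem \ref{SD-tNBW to NFW operation thm}. The reversibility statement there — that $R$ is recovered uniquely from $\infty(\$ \cdot R \cdot \$)$ — strongly suggests that $R_1 \subseteq R_2$ iff $\infty(\$ \cdot R_1 \cdot \$) \subseteq \infty(\$ \cdot R_2 \cdot \$)$. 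I would prove this equivalence directly: the forward direction is monotonicity again, and for the backward direction I would take a witness word $x \in R_1 \setminus R_2$, build the infinite word $(\$ \cdot x \cdot \$)^\omega$ (or $w \cdot (\$ x \$)^\omega$ for a suitable prefix), argue it lies in $\infty(\$ R_1 \$)$, and show that because the separators $\$$ delimit the only candidate infixes, it cannot lie in $\infty(\$ R_2 \$)$ when $x \notin R_2$.

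The second key step is to apply Theorem \ref{NFW to SD-tNBW operation thm} to the NFWs for $\$ \cdot R_1 \cdot \$$ and $\$ \cdot R_2 \cdot \$$ (which are obtained from $\N_1, \N_2$ by prepending and appending a $\$$-transition, adding only $O(1)$ states). This yields, in polynomial time, SD-tNBWs $\A_1$ and $\A_2$ with $L(\A_i) = \infty(\$ \cdot R_i \cdot \$)$ and $|\A_i| = O(|\N_i|)$. By the equivalence of the previous paragraph, $L(\A_1) \subseteq L(\A_2)$ iff $R_1 \subseteq R_2$, completing the polynomial-time many-one reduction from the PSPACE-hard NFW containment problem.

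The main obstacle I anticipate is the clean direction of the equivalence $R_1 \subseteq R_2 \iff \infty(\$ R_1 \$) \subseteq \infty(\$ R_2 \$)$, specifically the backward implication and the delimiting argument around the $\$$ markers: I must ensure that membership of the constructed periodic word in $\infty(\$ R_1 \$)$ genuinely forces the repeated block to be parsed as $\$ x \$$ and rules out spurious infixes spanning multiple copies. This is precisely the kind of well-definedness/parsing argument already invoked in the remark preceding Theorem \ref{SD-tNBW to NFW operation thm} (``the language $R$ is well defined''), so I expect to be able to lean on the same reasoning — that the positions of the $\$$'s uniquely locate the $R$-infixes — to close the argument.
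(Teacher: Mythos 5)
Your proposal is correct and follows essentially the same route as the paper: the paper also reduces from NFW universality via the $\infty(\$\cdot R\cdot\$)$ encoding of Theorem~\ref{NFW to SD-tNBW operation thm}, taking the contained automaton to be the one-state tDBW for $\infty\$=\infty(\$\cdot\Sigma^*\cdot\$)$ (i.e., your reduction specialized to $R_1=\Sigma^*$), and separates the languages with exactly the periodic witness you describe. One small fix: use $(\$\cdot x)^\omega$ rather than $(\$\cdot x\cdot\$)^\omega$ as the witness, since the latter contains infinitely many infixes $\$\$=\$\cdot\epsilon\cdot\$$ and would lie in $\infty(\$\cdot R_2\cdot\$)$ whenever $\epsilon\in R_2$.
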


\begin{proof}
	We describe a reduction from the universality problem for NFWs. Given an NFW $\N$ over $\Sigma$, let $\N'$ be an NFW over $\Sigma\cup \{\$\}$ such that $L(\N')= \$\cdot L(\N)\cdot \$$. 
	Now, let $\A_1$ be a 1-state tDBW over $\Sigma\cup \{\$\}$ such that $L(\A_1)= \infty \$$, and let $\A_2$ be the SD-tNBW obtained by applying the operation from Theorem \ref{NFW to SD-tNBW operation thm} on $\N'$. 
	Note that $L(\A_2) = \infty (\$\cdot L(\N) \cdot \$)$ and $|\A_2|=|\N|+3$.
	
	We claim that $\N$ is universal iff $L(\A_1) \subseteq L(\A_2)$. 
	First, if $L(\N)=\Sigma^*$, then $L(\A_2) = \infty (\$\cdot \Sigma^* \cdot \$ )= \infty \$$ and so $L(\A_1) \subseteq L(\A_2)$. Conversely, if there is a word $x\in \Sigma^*\setminus L(\N)$, then the word $w = (\$\cdot x)^\omega$ is in  $L(\A_1) \setminus L(\A_2)$.  Indeed, $w$ has infinitely many $\$$'s, yet for every infix $\$ \cdot y \cdot \$$ of $w$, we have that $y \not \in L(\N)$, and so $w \not \in L(\A_2)$. 
\end{proof}

The proof in Theorem~\ref{contain SD-tNBWs hard thm} uses $\infty \$$ as the ``contained language". For the universality problem, we cannot relay on hints from words in the contained language. In particular, taking the union of $\A_2$ there with an automaton for ``only finitely many $\$$'s" results in an automaton that is not SD, and it is not clear how to make it SD. Consequently, we have to work much harder. Specifically, we prove PSPACE hardness by a generic reduction from polynomial space Turing machines. Such reductions associate with a Turing machine $T$ an automaton $\A$ that recognizes the language $R$ 
of words that do not encode legal rejecting computations of $T$, and so $R=\Sigma^*$ iff the machine has no rejecting computations.
The automaton $\A$ is nondeterministic, as it has to guess violations of attempts to encode legal accepting computations. In order to replace $\A$ by an SD automaton, we manipulate the Turing machine so that the language of the generated automaton is of the form $\infty R$, for which we can construct an SD-tNBW.

\begin{theorem}\label{universality buchi thm}
	The universality problem for SD-tNBWs is PSPACE-hard.	
\end{theorem}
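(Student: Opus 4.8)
The plan is to give a generic reduction from an arbitrary language in \textsc{PSPACE}. So fix a deterministic polynomial-space Turing machine $M$ and an input $x$; deciding whether $x\in L(M)$ is \textsc{PSPACE}-hard. First I would clock $M$: add a polynomial-length step counter to each configuration so that after $N=2^{\mathrm{poly}(|x|)}$ steps the machine is forced to halt. This keeps configurations of polynomial length and guarantees that the (deterministic) trajectory from \emph{every} configuration reaches a halting configuration in finitely many steps. I encode computations as words $c_0\#c_1\#c_2\cdots$ over a configuration alphabet $\Sigma$, and I would build, in polynomial time, an NFW $\N$ over $\Sigma$ for a finite-word language $R$, and then apply Theorem~\ref{NFW to SD-tNBW operation thm} to obtain an SD-tNBW $\A$ with $L(\A)=\infty R$ and $|\A|=|\N|$. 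The whole point is to arrange that $\A$ is universal, i.e.\ $L(\A)=\Sigma^\omega$, iff $x\in L(M)$.

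The observation that drives the choice of $R$ is that $\infty R=\Sigma^\omega$ iff there is \emph{no} infinite word that avoids $R$, i.e.\ none of whose infixes lies in $R$ (using $\epsilon\notin R$): if some suffix of $w$ contains no $R$-infix then that suffix is an $R$-avoiding word witnessing $w\notin\infty R$, and conversely greedily extracting disjoint $R$-infixes from an infinite word all of whose suffixes meet $R$ shows it lies in $\infty R$. I therefore take $R$ to be the language of finite words containing at least one \emph{local violation} of the encoding: a constant-width window that is malformed, or that exhibits a position $j$ at which $c_{i+1}[j]$ is inconsistent with the neighbourhood $c_i[j-1]c_i[j]c_i[j+1]$ under the transition relation. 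With this choice, an infinite word avoids $R$ exactly when it is a well-formed infinite sequence of configurations each obtained from its predecessor by a legal step, that is, a genuine forward trajectory of the machine from some starting configuration.

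The manipulation of $M$ (the step the discussion above flags) is what ties such infinite trajectories to the fate of $x$. I would build a machine $T'$ that simulates the clocked $M$ but (i) on reaching the \emph{rejecting} halting state resets to the fixed initial configuration $c_0$ of $x$, and (ii) on reaching the \emph{accepting} halting state dead-ends, so that any window placing a successor after an accepting configuration is itself a local violation, hence in $R$. Now an $R$-avoiding infinite word is a forward $T'$-trajectory that never accepts; since every configuration halts within $N$ steps, such a trajectory must eventually reject, reset to $c_0$, and be forced through $c_0$ again, so it exists iff the trajectory of $c_0$ rejects, i.e.\ iff $x\notin L(M)$. Consequently $\infty R\neq\Sigma^\omega$ iff $x\notin L(M)$, so $\A$ is universal iff $x\in L(M)$, giving \textsc{PSPACE}-hardness.

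It remains to confirm that $\N$ has polynomially many states and is built in polynomial time, which is the usual reason these encodings work: the NFW need not store a whole configuration, since transition violations are local. It nondeterministically guesses the offending position, remembers the constant-size window, and uses a counter up to the (polynomial) configuration length to realign with the same position in the next block; format checks and the check that a reset block equals the fixed string $c_0$ are similarly polynomial. I expect the main obstacle to be exactly the gap the reduction must bridge: plain NFW-universality gives $R=\Sigma^*$ iff $M$ has no rejecting computation, but this does \emph{not} translate to $\infty R=\Sigma^\omega$, since the latter concerns infinite locally-consistent words rather than membership of every finite word. Getting the reset-on-reject/dead-end-on-accept manipulation right, so that infinite $R$-avoiding words correspond precisely to looping rejecting computations pinned at $c_0$, is the delicate part; the clocking (making every configuration halting) and the locality of the window check are what keep $\N$ polynomial.
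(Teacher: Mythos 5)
Your proposal is correct and follows essentially the same route as the paper: a generic reduction from polynomial-space Turing machines that clocks the machine so every configuration halts, loops the computation back to the initial configuration, encodes local encoding/transition violations as a polynomial-size NFW language $R$, and applies Theorem~\ref{NFW to SD-tNBW operation thm} together with the observation that $\infty R\neq\Sigma^\omega$ iff some infinite word has a violation-free suffix. The only (immaterial) difference is that the paper resets on both halting states and puts occurrences of the accepting state into $R$ as positive evidence, whereas you reset only on rejection and make accepting configurations dead-ends so that acceptance forces a downstream violation; both yield the same equivalence between universality and acceptance.
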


\begin{proof}
	We do a reduction from polynomial-space Turing machines. Given
	a Turing machine $T$ with space complexity $s:\mathbb{N}\rightarrow\mathbb{N}$, we construct in time polynomial in $|T|$ and $s(0)$, an SD-tNBW $\A$ of size polynomial in $T$ and $s(0)$, such that $\A$ is universal iff $T$ accepts the empty tape\footnote{This is sufficient, as one can define a generic reduction from every language $L$ in PSPACE as follows. Let $T_L$ be a Turing machine that decides $L$ in polynomial space $f(n)$. On input $w$ for the reduction, the reduction considers the machine $T_w$ that on every input, first erases the tape, writes $w$ on its tape, and then runs as $T_L$ on $w$. Then, the reduction outputs an automaton $\A$, such that $T_w$ accepts the empty tape iff $\A$ is SD. Note that the space complexity of $T_w$ is $s(n) = \max(n ,f(|w|))$, and that $w$ is in $L$ iff $T_w$ accepts the empty tape. Since $\A$ is constructed in time polynomial in $s(0)=f(|w|)$ and $|T_w|=\mathrm{poly}(|w|)$, it follows that the reduction is polynomial in $|w|$.}. Let $n_0=s(0)$. Thus, each configuration in the computation of $T$ on the empty tape uses at most $n_0$ cells. 
	We  assume that $T$ halts from all configurations (that is, not just from these reachable from an initial configuration of $T$); Indeed, by adding a polynomial-space counter to $T$, one can transform a polynomial-space Turing machine that need not halt from all configurations to one that does halt. 
	We also assume, without loss of generality, that once $T$ reaches a final (accepting or rejecting) state, it erases the tape, moves with its reading head to the leftmost cell, and moves to the initial state. Thus, all computations of $T$ are infinite and after visiting a final configuration for the first time, they eventually consists of repeating the same finite computation on the empty tape that uses at most $n_0$ tape cells.
	
	We define $\A$ so that it accepts a word $w$ iff (C1) no suffix of $w$ is an encoding of a legal computation of $T$ 
	that uses at most $n_0$ tape cells, or (C2) $w$ has infinitely many infixes that encode the accepting state of $T$.
	
	It is not hard to see that $T$ accepts the empty tape iff $\A$ is universal. Indeed, if $T$ accepts the empty tape, and there is a word $w$ that  does not satisfy (C1), thus $w$ has a suffix that is an encoding of a legal computation of $T$ that uses at most $n_0$ cells, then the encoded computation eventually reaches a final configuration, from which it eventually repeats the accepting computation of $T$ on the empty tape infinitely many times, and so $w$ satisfies (C2). Conversely, If $T$ rejects the empty tape, then the word $w$ that encodes the computation of $T$ on the empty tape does not satisfy (C1) nor (C2), and so $\A$ does not accept $w$.

	Finally, the fact that $T$ is a polynomial-space Turing machine enables us to define $\A$ with polynomially many states, as we detail next. 
	Let $T=\zug{\Gamma,Q,\rightarrow,q_0,q_{acc},q_{rej}}$ be a Turing machine with polynomial space complexity $s: \mathbb{N} \rightarrow \mathbb{N}$, where $\Gamma$ is the tape-alphabet, $Q$ is the set of states, $\rightarrow \subseteq Q \times
	\Gamma \times Q \times \Gamma \times \{L,R\}$ is the transition
	relation (we use $(q,a) \rightarrow (q',b,\Delta)$ to indicate that
	when $T$ is in state $q$ and it reads the input $a$ in the current
	tape cell, it moves to state $q'$, writes $b$ in the current tape
	cell, and its reading head moves one cell to the left/right, according
	to $\Delta$), $q_0$ is the initial state, $q_{acc}$ is the accepting state, and $q_{rej}$ is the rejecting one. 
	
	We encode a configuration of $T$ that uses at most $n_0 = s(0)$ tape cells by a word of the form 
	$\# \gamma_1 \gamma_2 \ldots (q,\gamma_i) \ldots \gamma_{n_0}$.
	That is, the encoding of a configuration starts with a special letter $\#$, and all its other letters
	are in $\Gamma$, except for one letter in $Q \times \Gamma$.
	The meaning of such a configuration is that the $j$'th cell in $T$, for
	$1 \leq j \leq n_0$, is $\gamma_j$-labeled, the head of $T$ points
	at $i$'th cell, and $T$ is in the state $q$. For example, the initial
	configuration of $T$ on the empty tape is encoded as
	$\# (q_0,b) b \ldots b$ (with $n_0-1$ occurrences of $b$'s) where $b$
	stands for an empty cell.
	We can now encode a computation of $T$ as a sequence of configurations.

	Let $\Sigma=\{\#\} \cup \Gamma \cup (Q \times \Gamma)$, and let
	$\# \sigma_1 \ldots \sigma_{n_0} \# \sigma'_1 \ldots \sigma'_{n_0}$
	be two successive configurations of $T$. We also set
	$\sigma_0$, $\sigma'_0$, and $\sigma_{n_0+1}$ to $\#$.
	For each triple $\zug{\sigma_{i-1},\sigma_i,\sigma_{i+1}}$ with $1
	\leq i \leq n_0$, we know, by the transition relation of $T$, what
	$\sigma'_i$ should be. In addition, the letter $\#$ should repeat
	exactly every $n_0+1$ letters.
	Let $next(\sigma_{i-1},\sigma_i,\sigma_{i+1})$ denote our
	expectation for $\sigma'_i$. That is,
	\begin{itemize}
		\item $next(\sigma_{i-1},\sigma_i,\sigma_{i+1})=\sigma_i$ if $\sigma_i=\#$, or if non of $\sigma_{i-1},\sigma_i$ and $\sigma_{i+1}$ are in $Q\times\Gamma$.

		\item
		$next((q,\gamma_{i-1}),\gamma_i,\gamma_{i+1})=
		next((q,\gamma_{i-1}),\gamma_i,\#)=$
		\begin{center}
			$\left\{
			\begin{array}{ll}
			\gamma_i & \mbox{If $(q,\gamma_{i-1}) \rightarrow
				(q',\gamma'_{i-1},L)$}\\
			(q',\gamma_i) & \mbox{If $(q,\gamma_{i-1}) \rightarrow
				(q',\gamma'_{i-1},R)$}
			\end{array}
			\right.$
		\end{center}
		\item
		$next(\gamma_{i-1},\gamma_i,(q,\gamma_{i+1}))=
		next(\#,\gamma_i,(q,\gamma_{i+1}))=$
		\begin{center}
			$\left\{
			\begin{array}{ll}
			\gamma_i & \mbox{If $(q,\gamma_{i+1}) \rightarrow
				(q',\gamma'_{i+1},R)$}\\
			(q',\gamma_i) & \mbox{If $(q,\gamma_{i+1}) \rightarrow
				(q',\gamma'_{i+1},L)$}
			\end{array}
			\right.$
		\end{center}
		\item
		$next(\gamma_{i-1},(q,\gamma_{i}),\gamma_{i+1})=
		next(\gamma_{i-1},(q,\gamma_{i}),\#)= 	next(\#,(q,\gamma_{i}),\gamma_{i+1})
		= \gamma'_i$ where $(q,\gamma_{i}) \rightarrow (q',\gamma'_{i},\Delta)$. \footnote{We note that in the case where the head of $T$ is at leaft most cell, and $T$ moves its head to the left should be handeled slightly differently, as we assume that the tape of $T$ is left bounded.}
		
	\end{itemize}
	
	\noindent
	Note that every word that encodes a legal computation of $T$ that uses at most $n_0$ tape cells must be consistent with $next$.

	Consider now an infinite word $w$. Note that if $w$ satisfies (C1), then $w$ has infinitely many infixes that witness a violation of the encoding of a legal computation of $T$ that uses at most $n_0$ tape cells. Accordingly, we can write the language of $\A$ as $\infty R$, where $R$ is the language of one-letter words of the form $(q_{acc},\gamma)$, 
	or words 
	that describe a violation of a legal computation that uses at most $n_0$ cells.
	Words that describe a violation, either are words of length $n_0+1$ that describe a violation of the encoding of a single configuration, or are words of length $n_0+3$ that describe a violation $next$.
	
	We can now define an NFW $\N$ for $R$ of size polynomial in $T$ and $n_0$, and thus the required SD-tNBW $\A$ can be obtained from $\N$ by applying the construction described in  Theorem~\ref{NFW to SD-tNBW operation thm}. 
	The NFW $\N$ is defined as follows. 
	From its initial states, $\N$ guesses to check whether the input word is a single-letter word of the form $(q_{acc},\gamma)$, or guesses to check whether the input is a violation of the encoding of a legal computation of $T$ that uses at most $n_0$ tape cells: this amounts to guessing that the input is a word  of length $n_0+1$ that has no $\#$'s, or starts with $\#$, yet does not encode a legal configuration, or amounts to guessing that the input is a word of length $n_0 + 3$ that describes a violation of $next$, where $\N$ checks whether the input start with three letters $\sigma_{1},\sigma_2, \sigma_3$ describing three successive letters in the encoding
	and ends with a letter $\sigma'_2$, which comes $n_0+1$ letters after $\sigma_2$ in the encoding, yet is different than $next(\sigma_1, \sigma_2, \sigma_3)$.
\end{proof}

We continue to the minimization problem. Note that here, a PSPACE upper bound does not follow immediately from the known PSPACE upper bound for tNBWs, as the candidate automata need to be SD. Still, as SDness can be checked in PSPACE \cite{AKL21}, a PSPACE upper bound follows.  Also note that here, the case of SD-NBWs is easy, as a non-empty SD-NBW is universal iff it has an equivalent SD-NBW with one state. For transition-based acceptance, the language of a single-state SD-tNBW need not be trivial, and so we have to examine the specific language used for the universality PSPACE-hardness proof: 

\begin{theorem}\label{min buchi is hard thm}
	The minimization problem for SD-tNBWs is PSPACE-hard.
\end{theorem}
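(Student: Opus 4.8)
The plan is to reduce from the universality problem for SD-tNBWs, which is PSPACE-hard by Theorem~\ref{universality buchi thm}. More precisely, I would reuse the very SD-tNBWs produced in that proof: given a polynomial-space Turing machine $T$, let $\A$ be the SD-tNBW with $L(\A)=\infty R$ constructed there, so that $\A$ is universal iff $T$ accepts the empty tape. The reduction outputs the pair $\zug{\A,1}$, i.e.\ it asks whether $\A$ has an equivalent SD-tNBW with a single state. I would then prove that $\A$ is universal iff it has an equivalent SD-tNBW of size at most $1$; since the construction of $\A$ is polynomial, PSPACE-hardness of minimization follows.

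The first step is to pin down the languages of single-state SD-tNBWs. A one-state tNBW is forced to consist of self-loops $\zug{q,\sigma,q}$ on every letter, is trivially deterministic and hence SD, and, writing $\Sigma_\alpha=\set{\sigma:\zug{q,\sigma,q}\in\alpha}$, it accepts exactly the words containing infinitely many letters from $\Sigma_\alpha$. The crucial feature is that membership in such a language depends only on the set of letters occurring infinitely often in the word. In particular, taking $\Sigma_\alpha=\Sigma$ gives $\Sigma^\omega$. This yields the easy direction: if $T$ accepts the empty tape, then $L(\A)=\Sigma^\omega$, which is recognized by the one-state SD-tNBW whose self-loop on every letter is an $\alpha$-transition, so $\A$ has an equivalent SD-tNBW of size $1$.

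For the converse I would show that if $T$ rejects the empty tape, then $\infty R$ is recognized by no one-state SD-tNBW, and hence needs at least two states. By the characterization above, it suffices to exhibit two words with the \emph{same} set of infinitely-occurring letters, exactly one of which lies in $\infty R$. For the word outside $\infty R$ I take $w^\ast$, the encoding of the computation of $T$ on the empty tape, which the construction rejects; recall that $w^\ast$ is eventually periodic, its periodic part $v$ being a legal (hence violation-free) finite computation, so the set $\Lambda$ of letters occurring infinitely often in $w^\ast$ equals the set of letters of $v$, which contains $\#$ and at least one non-$\#$ symbol $\ell$. For the word inside $\infty R$ I build $w_1$ that uses precisely the letters of $\Lambda$, each infinitely often, yet contains infinitely many disjoint infixes in $R$: for instance $(\ell^{n_0+1}\cdot\lambda_1\cdots\lambda_m)^\omega$, where $\lambda_1,\ldots,\lambda_m$ enumerate $\Lambda$ and $\ell^{n_0+1}$ is a $\#$-free word of length $n_0+1$, which the construction of $R$ in Theorem~\ref{universality buchi thm} flags as a configuration violation and therefore places in $R$. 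Then $w_1\in\infty R$ while $w^\ast\notin\infty R$, although both have infinitely-occurring-letter set $\Lambda$, contradicting $\infty R=\set{w: w \text{ has infinitely many letters in }\Sigma_\alpha}$ for every $\Sigma_\alpha$.

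The main obstacle is exactly this last step: guaranteeing that the letters $w^\ast$ repeats are rich enough to form an $R$-infix while not forcing any new letter to occur infinitely often in $w_1$. This is where I would lean on concrete features of the encoding --- that the periodic part $v$ necessarily contains $\#$ and a genuine, non-$\#$ configuration symbol $\ell$, and that $R$ declares \emph{every} $\#$-free word of length $n_0+1$ a violation --- so that the violation blocks can be built from letters of $\Lambda$ and the remaining letters of $\Lambda$ can be dumped in between without destroying the infinitely-many-disjoint-infixes property. Everything else is routine: polynomiality of the reduction, the SDness of the one-state witness, and the matching PSPACE upper bound, which follows from the known PSPACE bound for tNBWs together with the fact, noted before the theorem, that SDness can be checked in PSPACE.
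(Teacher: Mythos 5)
Your proposal is correct and follows essentially the same route as the paper: reduce from universality of the SD-tNBW $\A$ of Theorem~\ref{universality buchi thm} by asking whether $\A$ has an equivalent one-state SD-tNBW, observe that a one-state tNBW accepts exactly the words with infinitely many letters whose self-loop is in $\alpha$, and derive a contradiction from the rejected computation word $w^\ast$. The only (immaterial) difference is the witness pair: the paper uses $\#^\omega\in L(\A)$ to force the $\#$-loop into $\alpha$ and then notes that $w^\ast$ has infinitely many $\#$'s, whereas you build a second word over exactly the letters occurring infinitely often in $w^\ast$ — a slightly more laborious but equally valid way to reach the same contradiction.
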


\begin{proof}
	We argue that the SD-tNBW $\A$ in the proof of Theorem~\ref{universality buchi thm} has an equivalent one-state SD-tNBW iff $\A$ is universal. To see why, assume towards contradiction that $T$ rejects the empty tape and there is a one-state SD-tNBW $\D$ equivalent to $\A$. Recall that we encode computations of $T$ as a sequence of successive configurations of length $n_0$ each, separated by the letter $\#$.
	On the one hand, as the word $\#^\omega$ has infinitely many violations, then it is in $L(\A)$, and so the $\#$-labeled self-loop at the state of $\D$ must be an $\alpha$-transition. On the other hand, the word $w$ that encodes the computation of $T$ on the empty tape is not in $L(\A)$, yet it  has infinitely many $\#$'s.
\end{proof}

As we show below, the minimization problem stays hard even when we start from a deterministic automaton:

\begin{theorem}\label{tDBW to SD-tNBW thm}
	The D-to-SD minimization problem for B\"uchi automata is PSPACE-hard.
\end{theorem}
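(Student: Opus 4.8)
The goal is to show that the D-to-SD minimization problem for B\"uchi automata is PSPACE-hard. We are given a tDBW $\A_1$ and an integer $k$, and must decide whether there is an equivalent SD-tNBW with at most $k$ states. The plan is to reduce from the universality problem for NFWs, which is PSPACE-complete, reusing the machinery already developed in Theorems~\ref{NFW to SD-tNBW operation thm} and~\ref{SD-tNBW to NFW operation thm}.

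The plan is as follows. Given an NFW $\N$ over $\Sigma$, first build an NFW $\N'$ over $\Sigma \cup \{\$\}$ with $L(\N') = \$ \cdot L(\N) \cdot \$$, and consider the language $R = \$ \cdot L(\N) \cdot \$$ and the target language $\infty R = \infty(\$ \cdot L(\N) \cdot \$)$. The key point is that if $\N$ is universal, i.e.\ $L(\N) = \Sigma^*$, then $\infty(\$ \cdot \Sigma^* \cdot \$) = \infty \$$, the language of words with infinitely many $\$$'s, which is recognized by a tiny deterministic automaton (a one- or two-state tDBW). So universality of $\N$ collapses the target language to something with a small SD-tNBW. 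The first step is therefore to produce, in polynomial time, a tDBW $\A_1$ for $\infty(\$ \cdot L(\N) \cdot \$)$. This cannot be done directly with no blow-up from $\N$ (since $\N$ is nondeterministic and the determinization for $\infty R$ can blow up), but the reduction is allowed to spend polynomial space / the construction of the deterministic automaton is the subtlety; I would instead start the reduction from a \emph{deterministic} source so that $\A_1$ can be built in polynomial time. Concretely, I would reduce from universality of a \emph{DFW} augmented by the observation that the hardness comes from the exponential gap between $\A_1$ and its minimal SD-tNBW; one sets $k$ to be a small constant (say $2$ or $3$).

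The heart of the argument is the equivalence: $\A_1$ (for $\infty(\$\cdot L(\N)\cdot\$)$) has an equivalent SD-tNBW with at most $k$ states (for the chosen small $k$) \emph{iff} $\N$ is universal. The forward-easy direction is the ``if'': when $L(\N)=\Sigma^*$, the language is $\infty\$$, which has a constant-size SD-tNBW, so a small equivalent SD-tNBW exists. The hard and interesting direction is the contrapositive: if $\N$ is \emph{not} universal, then any SD-tNBW for $\infty(\$\cdot L(\N)\cdot\$)$ must be large. Here I would invoke Theorem~\ref{SD-tNBW to NFW operation thm}: from any SD-tNBW $\A_2$ with $L(\A_2) = \infty(\$\cdot R\cdot\$)$ one can extract an NFW $\N$ for $R$ of size at most $|\A_2|+1$. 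Thus a small SD-tNBW for the target would yield a small NFW for $L(\N)$, and one must argue that a genuinely nontrivial (non-universal) $L(\N)$ forces the NFW---hence the SD-tNBW---to be larger than $k$. This is exactly where the reversibility of the encoding $R \mapsto \infty(\$\cdot R\cdot\$)$ pays off.

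The main obstacle, I expect, is twofold. First, making $\A_1$ \emph{deterministic} and computable in polynomial time: since the outer problem insists the \emph{given} automaton be deterministic, I must choose the source language so that a tDBW of polynomial size is directly constructible (for instance by starting from a DFW for $L(\N)$ and building a deterministic automaton for $\infty(\$\cdot L(\N)\cdot\$)$ that tracks the single NFW state plus a flag), rather than from a general NFW. Second, and more delicate, is calibrating $k$ and proving the size lower bound in the non-universal case: I need a language family where non-universality provably forces more than $k$ states in \emph{every} equivalent SD-tNBW, which I would establish by combining the extraction of Theorem~\ref{SD-tNBW to NFW operation thm} with a fooling-set / distinguishability argument on the induced NFW, analogous to the succinctness arguments in Theorems~\ref{SD-tNBW succinct thm} and~\ref{SD-tNBW comp thm}. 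Getting these two requirements to coexist---a polynomial-size \emph{deterministic} $\A_1$ on one side and a forced super-$k$ blow-up on the other---is the crux of the reduction.
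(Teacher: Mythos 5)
Your proposal has the right ingredients in view (the reversible encoding $R\mapsto\infty(\$\cdot R\cdot\$)$ of Theorems~\ref{NFW to SD-tNBW operation thm} and~\ref{SD-tNBW to NFW operation thm}), but the choice of source problem is where it breaks, and you have not resolved what you yourself call the crux. Reducing from universality of NFWs does not give you a polynomial-size \emph{deterministic} $\A_1$ (determinizing the NFW can blow up exponentially), and your fallback of reducing from universality of a \emph{DFW} is a dead end: DFW universality is decidable in NL, so no PSPACE-hardness can come from it. Moreover, the equivalence you want -- ``$\N$ is universal iff there is an SD-tNBW with at most $k$ states for $\infty(\$\cdot L(\N)\cdot\$)$, for a small constant $k$'' -- is false in the backward direction: a non-universal $L(\N)$ can perfectly well admit a tiny NFW (and hence, via Theorem~\ref{NFW to SD-tNBW operation thm}, a tiny SD-tNBW for the encoded language), so non-universality does not force a super-$k$ blow-up. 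The analogous argument does work in Theorem~\ref{min buchi is hard thm}, but only because there the language is the specific Turing-machine encoding and one exploits concrete words such as $\#^\omega$; it does not survive as a generic reduction from universality.

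The paper takes a different and cleaner route: it reduces from the \emph{D-to-N minimization} problem for automata on finite words (given a DFW and $k$, is there an equivalent NFW with at most $k$ states?), which by \cite{JR93} is PSPACE-hard already for DFWs recognizing languages with no good prefixes. The reduction maps a DFW $\A$ and $k$ to a tDBW for $\infty(\$\cdot L(\A)\cdot\$)$ and the \emph{same} $k$; the integer $k$ is inherited from the instance rather than fixed to a constant, which is exactly what sidesteps your calibration problem. The one technical ingredient you are missing is a \emph{determinism-preserving} variant of the construction of Theorem~\ref{NFW to SD-tNBW operation thm}: the construction there adds $Q_0$ as extra successors on every letter and so destroys determinism, whereas the modified construction adds only $\$$-labeled transitions back to $Q_0$ (with membership in $\alpha$ determined by whether the source state is accepting), which keeps a DFW deterministic. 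Correctness is then the two-way transfer: an NFW with $\leq k$ states for $L(\A)$ yields an SD-tNBW with $\leq k$ states for the encoded language, and conversely Theorem~\ref{SD-tNBW to NFW operation thm} (using the no-good-prefix assumption to avoid the $+1$) extracts an NFW with $\leq k$ states from any small SD-tNBW. I recommend you rebuild the argument around D-to-N minimization rather than universality.
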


\begin{proof}
We start with B\"uchi automata with transition-based acceptance, and describe a reduction from the D-to-N minimization problem for automata on finite words: given a DFW $\A_1$, and an integer $k\geq 1$, decide whether there is an NFW $\A_2$ such that $L(\A_1) = L(\A_2)$ and $|\A_2| \leq k$. In \cite{JR93}, the authors prove that the problem is PSPACE-hard, in fact PSPACE-hard already for DFWs that recognize a language that has no good prefixes.
	 
	 Consider a language $R\subseteq \Sigma^*$. The reduction is based on a construction that turns an NFW for $ R$ into an SD-tNBW for $\infty (\$\cdot R \cdot \$)$, for a letter $\$\notin \Sigma$. Note that by applying the construction from Theorem~\ref{NFW to SD-tNBW operation thm} on the language $\$\cdot R \cdot \$ $, we can get an SD-tNBW for $\infty (\$\cdot R \cdot \$)$. The construction there, however, does not preserve determinism. Therefore, we need a modified polynomial construction, which takes advantage of the $\$$'s. We describe the modified construction below. 
	 
	 Given an NFW $\A = \zug{\Sigma, Q, Q_0, \delta, F}$, and a letter $\$\notin \Sigma$, we construct the tNBW $\A' = \zug{\Sigma \cup \{\$\}, Q, Q_0, \delta', \alpha}$, where for all states $q\in Q$ and letters $\sigma\in \Sigma$, we have that $\delta'(q, \sigma) = \delta(q, \sigma)$. Also, $\delta'(q, \$) = Q_0$, and $\alpha = \{\zug{s, \$, q}: q\in Q_0 \text{ and } s\in F\}$. Thus, $\A'$ is obtained from $\A$ by adding $\$$-transitions from all states to $Q_0$. A new transition is in $\alpha$ iff its source is an accepting state of $\A$. 
	 %
	 %
	 It is not hard to see that the construction of $\A'$ preserves determinism. Indeed, if $\A$ is deterministic, then all the $\$$-transitions in $\A'$ are deterministic since they are directed to the single initial state of $\A$. 
	 To show that $\A'$ is SD and $L(\A') = \infty (\$\cdot L(\A) \cdot \$)$, we prove below, similarly to Theorem~\ref{NFW to SD-tNBW operation thm}, that for all states $q\in Q$, it holds that $L(\A'^q) = \infty (\$\cdot L(\A) \cdot \$)$.

	 Consider a state $q\in Q$, we first show that $L(\A'^q) \subseteq (\$\cdot L(\A) \cdot \$)$. Consider a word $w\in L(\A'^q)$ and let $r=r_0, r_1, r_2, \ldots$ be an accepting run of $\A'^q$ on $w$. We show that for every $i\geq 1$, there are $ k_1, k_2$ such that $i\leq k_1\leq k_2$ and $w[k_1 , k_2+1]$ is an infix of $w$ in $\$\cdot L(\A) \cdot \$$. Given $i\geq 1$, let $k_2$ be such that $t = \zug{r_{k_2}, w[k_2+1], r_{k_2+1}}$ is an $\alpha$-transition, but not the first $\$$-transition that $r$ reads when it runs on the suffix $w[i, \infty]$, and let $k_1$ be the maximal index such that $i\leq k_1\leq k_2$ and $\zug{r_{k_1-1}, w[k_1], r_{k_1}} = \zug{r_{k_1-1}, \$, r_{k_1}}$. Since $t$ is not the first $\$$-transition that $r$ traverses when it reads the suffix $w[i, \infty]$, then $k_1$ exists. Note that the fact that new transitions lead to $Q_0$ in $\A'$, and the maximality of $k_1$, imply that the run $r_{k_1}, r_{k_1+1}, \ldots, r_{k_2}$ is a run of $\A$ on the infix $w[k_1+1, k_2]$. Also, as $t$ is an $\alpha$ transition, the latter run is an accepting run of $\A$. Now, as the transitions $\zug{r_{k_1-1}, w[k_1], r_{k_1}} $ and $t$ are $\$$-transitions, we get that $r_{k_1-1}, r_{k_1}, \ldots, r_{k_2}, r_{k_2+1}$ is a run on a word in $\$\cdot L(\A) \cdot \$$. Thus, $w[k_1, k_2+1]$ is an infix of $w$ in $\$\cdot L(\A) \cdot \$$. 
	 
	 Next, we show that $\infty (\$\cdot L(\A) \cdot \$) \subseteq L(\A'^q)$. Consider a word $w\in \infty (\$\cdot L(\A) \cdot \$) $, and consider a run of $\A'^q$ on $w$ that upon reading an infix $\$\cdot x \cdot \$$ in  $\$\cdot L(\A^{q_0}) \cdot \$$, for some $q_0\in Q_0$, moves to $q_0$ while reading $\$$, and then follows an accepting run of $\A^{q_0}$ on $x$, and finally moves to a state in $Q_0$, traversing $\alpha$ while reading $\$$. 
	 Note that the transitions in $\A'$ enable such a behavior. Thus, such a run traverses $\alpha$ infinitely often and is thus accepting.

	 We now describe the reduction. Given a DFW $\A$ over $\Sigma$ such that $L(\A)$ has no good prefixes, and given an integer $k$, the reduction returns the polynomial tDBW $\A'$ and the integer $k$. We prove next that the reduction is correct. Thus, the DFW $\A$ has an equivalent NFW with at most $k$ states iff the tDBW $\A'$ has an equivalent SD-tNBW with at most $k$ states. For the first direction, if $\B$ is an NFW equivalent to $\A$ whose size is at most $k$, then by the above construction, it holds that $\B'$ is an SD-tNBW whose size is at most $k$, and $L(\B') = \infty (\$ \cdot L(\B) \cdot \$) = \infty (\$ \cdot L(\A) \cdot \$) = L(\A')$.  
	 
	 Conversely, if $\B'$ is an SD-tNBW for $\infty (\$ \cdot L(\A) \cdot \$)$ whose size is at most $k$, then as $L(\A)$ has no good prefixes, we get by Theorem~\ref{SD-tNBW to NFW operation thm} that there is an NFW $\B$ for $L(\A)$ whose size is at most $k$, and we are done.
	 

Finally, we show next that despite the fact that transitions between automata with transition-based and state-based acceptance may involve a linear blow-up, it is possible to adapt the above arguments to automata with state-based acceptance, thus PSPACE-completeness holds also for B\"uchi automata with state-based acceptance.  Essentially, given a DFW $\A$ and an integer $k$, the reduction returns a DBW for $\infty (\$ \cdot L(\A) \cdot \$)$ and the integer $k+1$.

To see why such a reduction exists,
note that the construction of the SD-tNBW $\A'$ from the NFW $\A$ can be adapted to the state-based setting by adding a single new $\alpha$ state to $\A'$ that behaves as the initial states of $\A$, and captures infixes in $\$\cdot L(\A) \cdot \$$. 
Formally,  given an NFW $\A = \zug{\Sigma, Q, Q_0, \delta, F}$, and a letter $\$\notin \Sigma$, we construct the NBW $\A' = \zug{\Sigma \cup \{\$\}, Q \cup \{q_{acc}\}, Q_0, \delta', \{q_{acc}\}}$, where for all states $q\in Q$ and letters $\sigma\in \Sigma$, we have that $\delta'(q, \sigma) = \delta(q, \sigma)$. Also, $\delta'(Q\setminus F, \$) = Q_0$, $\delta'(F, \$) = \{q_{acc}\}$, $\delta(q_{acc}, \$) = Q_0$, and $\delta'(q_{acc}, \sigma) = \delta(Q_0, \sigma)$. Intuitively, the state $q_{acc}$ mimics traversal of $\alpha$ transitions.

Then, Theorem~\ref{SD-tNBW to NFW operation thm} can also be adapted to SD-NBWs as follows. If $\A$ is an SD-NBW for $\infty (\$\cdot R 
\cdot \$)$ for some nontrivial language $R\subseteq \Sigma^*$ that has no good prefixes, then one can construct an NFW $\N$ for $R$ with at most $|\A| - 1$ states. 
Essentially, we view $\A$ as a transition-based automaton $\A_t$ by considering  transitions that leave $\alpha$ states as $\alpha$ transitions. Then, we consider the proof of Theorem~\ref{SD-tNBW to NFW operation thm} when applied on $\A_t$.  
As $R$ has no good prefixes, then the state $q_{acc}$ of $\N$ is unreachable. Hence, all runs from $Q^S_0$ cannot reach $\alpha$ transitions in $\A_t$. In particular, $\alpha$ states of $\A$ are unreachable in $\N$ and they can be removed with $q_{acc}$ as well.
\end{proof}

\section{Semantically Deterministic co-\buchi Automata}

In this section we study SD co-B\"uchi automata. Here too, our results are based on constructions that involve encodings of NFWs by SD-tNCWs. Here, however, the constructions are more complicated, and we first need some definitions and notations.

Consider a tNCW  $\A = \langle \Sigma, Q, Q_0, \delta, \alpha\rangle$. We refer to the SCCs we get by removing $\A$'s $\alpha$-transitions as the \emph{$\overline{\alpha}$-components} of $\A$; that is, the \emph{$\overline{\alpha}$-components} of $\A$ are the SCCs of the graph $G_{\A^{\bar{\alpha}}} = \langle Q, E^{\bar{\alpha}} \rangle$, where $\zug{q, q'}\in E^{\bar{\alpha}}$ iff there is a letter $\sigma\in \Sigma$ such that $\langle q, \sigma, q'\rangle \in \overline{\alpha}$. 
We say that $\A$ is \emph{normal} if
there are no $\overline{\alpha}$-transitions connecting different $\overline{\alpha}$-components. That is,
for all states $q$ and $s$ of $\A$, if there is a path of $\overline{\alpha}$-transitions from $q$ to $s$, then there is also a path of $\overline{\alpha}$-transitions from $s$ to $q$.
Note that an accepting run of $\A$ eventually gets trapped in one of $\A$'s $\overline{\alpha}$-components. In particular, accepting runs in $\A$ traverse transitions that leave $\overline{ \alpha}$-components only finitely often. Hence, we can add transitions among $\overline{\alpha}$-components to $\alpha$ without changing the language of the automaton. Accordingly, in the sequel we assume that given tNCWs are normal.

We proceed to encoding NFWs by SD-tNCWs. For a language $R\subseteq \Sigma^*$, we define the language $\bowtie_\$\!\!(R) \subseteq (\Sigma \cup \{\$\})^\omega$, by  $$\bowtie_\$\!\!(R) = \{w: \text{if $w$ has infinitely many $\$$, then it has a suffix in $(\$ R)^\omega$}\}.$$ 

Also,  we say that a finite word $x\in \Sigma^*$ is a {\em bad infix} for $R$ if for all words $w\in \Sigma^*$ that have $x$ as an infix, it holds that $w\in \overline{R}$.

 Note that for every language $R \subseteq \Sigma^*$, we have that $\bowtie_\$\!\!(R)  = \overline{\infty (\$\cdot \overline{R} \cdot \$)}$. Thus, $\bowtie_\$\!\!(R)$ complements $\infty (\$\cdot \overline{R} \cdot \$)$. Yet, unlike tDCWs and tDBWs, which dualize each other, SD-tNBWs and SD-tNCWs are not dual. Hence, 
adjusting Theorems \ref{NFW to SD-tNBW operation thm} and \ref{SD-tNBW to NFW operation thm} to the co-\buchi setting, requires different, in fact more complicated, constructions. 

\begin{theorem}\label{NFW to SD-tNCW operation thm}
	Given an NFW $\N$, one can obtain, in linear time, an SD-tNCW $\A$ such that \mbox{$L(\A)=\bowtie_\$\!\!(L(\N))$} and $|\A|=|\N|$.
\end{theorem}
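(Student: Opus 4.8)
The plan is to build $\A$ directly from $\N=\zug{\Sigma,Q,Q_0,\delta,F}$ by keeping the same state space, letting $\A$ simulate $\N$ on $\Sigma$-letters, and using the fresh letter $\$$ both as a ``reset'' that sends every state back to $Q_0$ and as the place where co-\buchi rejection is recorded. Concretely, I would set $\A=\zug{\Sigma\cup\{\$\},Q,Q_0,\delta',\alpha}$, where $\delta'(s,\sigma)=\delta(s,\sigma)$ for $\sigma\in\Sigma$, $\delta'(s,\$)=Q_0$ for every $s\in Q$, and $\alpha=\{\zug{s,\$,q}:s\in Q\setminus F \text{ and } q\in Q_0\}$. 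Thus a $\$$-transition is an $\alpha$-transition precisely when its source is a non-accepting state of $\N$, i.e., when the $\Sigma$-block that has just ended was not certified to lie in $R=L(\N)$, while every $\Sigma$-transition lies outside $\alpha$. Clearly $|\A|=|\N|$ and $\A$ is produced in linear time.

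The heart of the argument is the single claim that for every state $q\in Q$ we have $L(\A^q)=\bowtie_\$\!\!(R)$. Once this is established, both required properties follow at once: taking $q\in Q_0$ yields $L(\A)=\bowtie_\$\!\!(R)$, and since all states of $\A$ then recognize the same language, any two states are equivalent, so in particular all $\sigma$-successors of any state are equivalent and $\A$ is SD. So the whole proof reduces to computing $L(\A^q)$.

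To prove the claim I would fix $q$ and $w$ and split on the number of $\$$'s in $w$. If $w$ has finitely many $\$$'s, then every run of $\A^q$ traverses only finitely many $\$$-transitions, hence (as all $\alpha$-transitions are $\$$-transitions) finitely many $\alpha$-transitions, so every run is accepting and $w\in L(\A^q)$; this matches $\bowtie_\$\!\!(R)$, which contains every word with finitely many $\$$'s vacuously. If $w$ has infinitely many $\$$'s, write it as $x_0\cdot\$\cdot x_1\cdot\$\cdot x_2\cdots$ with $x_i\in\Sigma^*$. For the ``$\supseteq$'' direction, if $w$ has a suffix in $(\$R)^\omega$ then from some point on each block $x_i\in R$, so I build an accepting run that, after an arbitrary finite prefix, reads each $\$$ into the state $q_0\in Q_0$ from which $x_i$ has an accepting $\N$-run, follows that run to a state in $F$, and hence takes every subsequent $\$$-transition out of an $F$-state, i.e. outside $\alpha$; such a run traverses $\alpha$ only finitely often and accepts. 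For the ``$\subseteq$'' direction I would prove the contrapositive: if no suffix of $w$ lies in $(\$R)^\omega$ then infinitely many blocks $x_i$ with $i\ge 1$ satisfy $x_i\notin R$, and here is the key point---each such block is read from a state of $Q_0$, because the preceding $\$$ resets every run to $Q_0$; since $x_i\notin L(\N)$ we have $\delta(Q_0,x_i)\cap F=\emptyset$, so \emph{every} run sits in a non-$F$ state when it reads the following $\$$ and is thus forced to take an $\alpha$-transition. Hence every run traverses $\alpha$ infinitely often and rejects, giving $w\notin L(\A^q)$.

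I expect the main obstacle to be exactly this last ``every run rejects'' direction, since it is where the nondeterminism of $\N$ (inherited on the $\Sigma$-letters of $\A$) must be tamed for \emph{all} runs simultaneously rather than for some run. The decisive structural feature that makes it go through is that the $\$$-letter flushes every run back to the common set $Q_0$, so that each inter-$\$$ block is genuinely a run of $\N$ from $Q_0$; this is precisely what lets a bad block $x_i\notin L(\N)$ force an $\alpha$-transition on all runs at once. I would also emphasize that this same flushing is what simultaneously yields $L(\A^q)=\bowtie_\$\!\!(R)$ for every $q$ and hence SDness, playing here the role that the ``bad choices only finitely often'' argument played in the \buchi construction of Theorem~\ref{NFW to SD-tNBW operation thm}.
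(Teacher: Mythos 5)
Your construction is identical to the paper's ($\$$-transitions from every state to $Q_0$, placed in $\alpha$ exactly when the source is outside $F$), and your proof takes the same route: establishing $L(\A^q)=\bowtie_\$\!\!(L(\N))$ for \emph{every} state $q$, which yields the language equality and SDness simultaneously. The only cosmetic difference is that you prove $L(\A^q)\subseteq\bowtie_\$\!\!(L(\N))$ contrapositively (infinitely many blocks outside $L(\N)$ force every run through $\alpha$ infinitely often) whereas the paper extracts the good blocks directly from an eventually $\alpha$-free accepting run; the decisive observation in both --- that $\$$ flushes all runs to $Q_0$, so each inter-$\$$ block is genuinely a run of $\N$ from $Q_0$ --- is the same.
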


\begin{proof}
	Given $\N  = \zug{\Sigma, Q, Q_0, \delta, F}$, we obtain $\A$ by adding $\$$-transitions from all states to $Q_0$. The new transitions are in $\alpha$ iff they leave a state in $Q\setminus F$. Formally, $\A= \zug{\Sigma \cup \{\$\}, Q, Q_0, \delta', \alpha}$, where for all $s \in Q$ and $\sigma \in \Sigma$, we have that $\delta'(s,\sigma)=\delta(s,\sigma)$, and $\delta'(s,\$) = Q_0$.  Then, $\alpha =  \{\zug{s, \$, q}: q\in Q_0 \text{ and } s \in Q\setminus F\}$.
	It is easy to see that $|\A|=|\N|$. In order to prove that $\A$ is SD and $L(\A)=\bowtie_\$\!\!(L(\N))$, we prove next 
	that for every state $q\in Q$, it holds that $L(\A^q)=\bowtie_\$\!\!( L(\N))$. 
	
	First, we show that $\bowtie_\$\!\!(L(\N))\subseteq L(\A^q)$. Consider a word $w\in \bowtie_\$\!\!(L(\N))$. If $w$ has finitely many $\$$'s, then as $\alpha$ transitions in $\A$ are $\$$-labeled, we have that all runs of $\A^q$ on $w$ traverse $\alpha$ finitely often and thus are accepting. Assume that $w = z \cdot y$, where $y$ is of the form  $\$ x_1 \$ x_2\$x_3 \cdots$, where for all $i$, $x_i\in L(\N)$. An accepting run $r$ of $\A^q$ on $w$ can be defined arbitrarily upon reading $z$. Then, upon reading an infix of the form $\$x_i$, $r$ moves to $q_0$ upon reading $\$$, where $q_0\in Q_0$ and $x_i\in L(\N^{q_0})$. Then, $r$ proceeds by following an accepting run of $\N^{q_0}$ on $x_i$. As accepting runs of $\N$ end in states in $F$, and $\$$-transitions from $F$ are in $\overline{ \alpha}$, we get that $r$ does not traverse $\alpha$ upon reading the $\$$'s after $x_1$. Also, as transitions of $\N$ are $\overline{\alpha}$ transitions in $\A$, we get that $r$ follows a run that does not traverse $\alpha$ when it reads $x_i$. Hence, $r$ eventually vists only $\overline{ \alpha}$ transitions and thus is accepting.
	
	Next, we show that $ L(\A^q)\subseteq \bowtie_\$\!\!(L(\N))$. Consider a word $w = \sigma_1\cdot \sigma_2 \cdots\in L(\A^q)$ and let $r=r_0, r_1, \ldots $ be an accepting run of $\A^q$  on $w$. If $w$ has finitely many $\$$'s, then $w\in \bowtie_\$\!\!(L(\N))$ and we are done. Assume that $w$ has infinitely many $\$$'s and
	let $t$ be such that $r[t-1, \infty] = r_{t-1}, r_t, r_{t+1}, \ldots$ is a run that does not traverse $\alpha$ on $w[t, \infty]$. To conclude the proof, we show that between every two consecutive $\$$'s in $w[t, \infty]$ there is a word in $L(\N)$. That is, we  show that for all  $t\leq t_1 < t_2$  such that $\sigma_{t_1} = \sigma_{t_2} = \$$ and $w[t_1+1, t_2-1]$ has no $\$$'s,  it holds that $w[t_1+1, t_2-1]\in L(\N)$.  
	As $\$$-transitions lead to $Q_0$, we have that $r_{t_1}, r_{t_2} \in   Q_0$.  Recall that $r[t-1, \infty]$ does not traverse $\alpha$, in particular, 
	we have that $r[t_1, t_2] = r_{t_1}, r_{t_1+1}, \ldots r_{t_2}$ is a run from $Q_0$ to $Q_0$ that traverses only $\overline{\alpha}$ transitions,  on the infix   $w[t_1+1, t_2]$. Also, $r[t_1, t_2]$ traverses $\$$ only in its last transition. As the transition $\zug{r_{t_2-1}, \sigma_{t_2}, r_{t_2}} $ is an $\overline{ \alpha}$ $\$$-transition that leads to $Q_0$,  then $r_{t_2-1}\in F$. Also, since all the transitions that $r[t_1, t_2-1]$ traverses are $\Sigma$-labeled, then they are transitions in $\N$, and so $r[t_1, t_2-1]$ is an accepting run of $\N$ on $w[t_1+1, t_2 - 1]$. Hence $w[t_1+1, t_2 - 1] \in L(\N)$ and we are done.
\end{proof}

\begin{theorem}\label{SD-tNCW to NFW operation thm}
	Consider a language $R \subseteq \Sigma^*$ and a letter $\$\notin \Sigma$. For every tNCW $\A$ such that \mbox{$L(\A)=\bowtie_\$\!\!(R)$}, there exists an NFW $\N$ such that $L(\N) = R$ and $|\N|\leq |\A| + 1$. In addition, if $R$ has bad infixes, then $|\N| \leq |\A|$. 
\end{theorem}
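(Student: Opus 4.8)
The plan is to mirror the structure of Theorem~\ref{SD-tNBW to NFW operation thm}, but to exploit the co-\buchi acceptance together with the normality of $\A$, which will replace the role played there by semantic determinism (indeed, here $\A$ is an arbitrary, not necessarily SD, tNCW).

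First I would reduce the problem to a clean characterization of membership in $R$ by $\overline{\alpha}$-cycles. Since $\bowtie_\$\!\!(R)$ depends only on the tail of a word, for every finite $y$ and every $x\in\Sigma^*$ we have that $y\cdot(\$x)^\omega\in\bowtie_\$\!\!(R)$ iff $(\$x)^\omega\in\bowtie_\$\!\!(R)$ iff $x\in R$ (all blocks of $(\$x)^\omega$ equal $x$). Combining this with a pigeonhole argument on the block-boundary states of an accepting run, I would prove: $x\in R$ iff there is a reachable state $p$ and an integer $m\ge 1$ such that $\A$ has a run from $p$ to $p$ reading $(\$x)^m$ that traverses no $\alpha$-transition. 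For the forward direction, an accepting run on $(\$x)^\omega\in L(\A)$ eventually avoids $\alpha$, and two equal block-boundary states past the last $\alpha$-transition close such a cycle; for the converse, one reaches $p$ from $q_0$ by an arbitrary word $y$ and loops the cycle forever, obtaining an accepting run on $y\cdot(\$x)^\omega$, whence $x\in R$ by tail-robustness. Because $\A$ is normal, every $\overline{\alpha}$-run stays inside a single $\overline{\alpha}$-component, so the witnessing cycle lies entirely in one reachable $\overline{\alpha}$-component.

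The difficulty is that $\N$ may use only $|\A|+1$ states, so it cannot afford a subset construction: reading $x$ once, it must detect the whole $(\$x)^m$-cycle, yet the set of states lying on such a cycle depends on $x$. To obtain an $x$-independent handle I would isolate a \emph{good set} $S\subseteq Q$, dualizing the good set of Theorem~\ref{SD-tNBW to NFW operation thm} to the co-\buchi setting: where the \buchi proof calls $S$ hopeful if all runs on $(\$\comp{R})^\omega$ avoid $\alpha$, here hopefulness should say that some state of $S$ admits an infinite $\overline{\alpha}$-run reading a word in $(\$R)^\omega$, and goodness of $S$ should make a \emph{single} safe block decisive, namely $x\in R$ iff some state of $S$ has an $\overline{\alpha}$-run reading $x$ ending in a state that carries an $\overline{\alpha}$-labeled $\$$-transition back into $S$. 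The main obstacle, exactly as in Theorem~\ref{SD-tNBW to NFW operation thm}, is to prove that a good set exists; I would do this by the same iterative argument, assuming no set is good, building sequences $S_1,S_2,\ldots$ of sets and connecting words, and deriving a contradiction by exhibiting either an accepting run on a word with only finitely many blocks in $R$, or a run that fails to accept a word in $(\$R)^\omega$. The two ingredients that substitute for the semantic determinism used in the \buchi proof are the tail-robustness of $\bowtie_\$\!\!(R)$ (which lets me freely prepend a word reaching any chosen state without changing membership) and normality (which confines $\overline{\alpha}$-runs to one component, so that a safe $\$$-return to $S$ genuinely closes a $(\$x)^m$-cycle rather than merely a single block).

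Finally, from a good set $S$ I would build $\N$ directly on $\A$'s states: its states are $Q$ together with one auxiliary accepting state, its initial states are the $\overline{\alpha}$-labeled $\$$-successors of $S$, its $\Sigma$-transitions are the $\overline{\alpha}$-transitions of $\A$, and acceptance marks those states from which an $\overline{\alpha}$-labeled $\$$-transition leads back into $S$ (the auxiliary state witnessing this closing $\$$, which $\N$ does not itself read). Proving $L(\N)=R$ then reduces to matching runs of $\N$ with safe blocks of $\A$ in both directions, invoking goodness of $S$ for each; this gives $|\N|\le|\A|+1$. For the refinement, I expect the auxiliary state to be dispensable precisely when $R$ has a bad infix: reading a bad infix forces an $\alpha$-transition, so the witness can be identified with an already-present state that no safe run can leave, yielding $|\N|\le|\A|$ — the co-\buchi counterpart of the good-prefix saving in Theorem~\ref{SD-tNBW to NFW operation thm}, with bad infixes in the role of good prefixes.
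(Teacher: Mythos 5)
Your final construction of $\N$ --- initial states are the $\overline{\alpha}$-labeled $\$$-successors, $\Sigma$-transitions are the $\overline{\alpha}$-transitions of $\A$, accepting states are those with an outgoing $\overline{\alpha}$-labeled $\$$-transition --- is essentially the paper's, and your cycle characterization of $R$ via tail-robustness and a pigeonhole on block boundaries is sound. But the good-set machinery you import from Theorem~\ref{SD-tNBW to NFW operation thm} is a detour that you yourself declare to be ``the main obstacle'' and then do not carry out, and it is not needed: in the co-\buchi setting, normality alone makes the set of \emph{all} (reachable) states ``good''. Concretely, if some run between reachable states traverses $\$\cdot x\cdot\$$ without touching $\alpha$, then all its states lie in a single $\overline{\alpha}$-component, so by normality the end state returns to the start state along an $\overline{\alpha}$-path on some $y$; pumping yields an accepted word $z\cdot(\$\cdot x\cdot \$\cdot y)^\omega$ with infinitely many $\$$'s, whose suffix in $(\$ R)^\omega$ forces $x\in R$. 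The converse is your pigeonhole argument on $(\$ x)^\omega$. Thus a single safe block between arbitrary reachable states is already decisive, and no distinguished set $S$, hence no iterative existence argument, is required. (This is exactly the simplification over the \buchi case that your plan misses: there, one $\alpha$-traversal on a finite segment certifies nothing, and the good set is genuinely needed.)

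The size accounting is where your sketch would actually fail. The extra state in the paper is a \emph{rejecting} sink, needed only because the paper's automata have total transition functions (when every $\sigma$-transition from $q$ lies in $\alpha$, the set $\delta_\N(q,\sigma)$ must still be nonempty); your ``auxiliary accepting state witnessing the closing $\$$'' has no clear role, since marking the sources of closing $\$$-transitions as accepting already suffices. More importantly, your justification of the refinement --- ``reading a bad infix forces an $\alpha$-transition'' --- is false: for a bad infix $x$, the word $x^\omega$ contains no $\$$'s, hence lies in $\bowtie_\$\!\!(R)$ and is \emph{accepted}, so some state $q$ has a run on $x^\omega$ trapped in its $\overline{\alpha}$-component. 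The correct argument is that this component can contain no $\$$-transition (otherwise one pumps out an accepted word with infinitely many infixes $\$\cdot u\cdot\$$ with $u\in\comp{R}$, contradicting $L(\A)=\bowtie_\$\!\!(R)$), so $q$ is unreachable in $\N$ and can absorb the role of the rejecting sink, giving $|\N|\le|\A|$. As written, your refinement step does not go through.
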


\begin{proof}
	Let $\A = \zug{ \Sigma \cup \{\$\}, Q, Q_0, \delta, \alpha}$ be a tNCW for \mbox{$\bowtie_\$\!\!(R)$}. We assume that $\A$ is normal and all of its states are reachable.
	We define the NFW $\N = \zug{\Sigma, Q\cup \{q_{rej}\}, Q^{\N}_0, \delta_\N, F_\N}$, where 
	$Q^{\N}_0 = \{q\in Q:  \text{there is a state $q'$ such that $\zug{q', \$, q}\in \overline{ \alpha}$}\}$, 	$F_{\N} = \{q\in Q:  \text{there is a state $q'$ such that $\zug{q, \$, q'}\in \overline{ \alpha}$}\}$,
	%
	%
	%
	and for every two states $q, s \in Q$ and letter $\sigma\in \Sigma$, it holds that $s\in \delta_\N(q, \sigma)$ iff $\zug{q, \sigma, s} \in \overline{\alpha}$.
	Also, if $\{q\}\times \{\sigma\} \times \delta(q, \sigma) \subseteq \alpha$, then $\delta_\N(q, \sigma) = q_{rej}$. Also, for all letters $\sigma \in \Sigma$, it holds that $\delta_\N(q_{rej}, \sigma) = q_{rej}$; that is, $q_{rej}$ is a rejecting sink.
	Thus, $\N$ tries to accept words $x\in \Sigma^*$ for which there is a run in $\A$ that does not traverse $\alpha$ on the word $\$\cdot x \cdot \$$. 
	
	We prove that $L(\N) = R$. 
	We first prove that $R\subseteq L(\N)$. Consider a word $x\in R$, and let $r = r_0, r_1, r_2, \ldots$ be an accepting run of $\A$ on $(\$\cdot x)^\omega$. As $r$ is accepting, there are $i < j$ such that $r_i, r_{i+1}, \ldots, r_j$ is a run that does not traverse $\alpha$ on $\$\cdot x \cdot \$$. 
	By the definition of $\delta_\N$, $r_{i+1}\in Q^{\N}_0, r_{j-1}\in F_\N$, and so $r_{i+1}, r_{i+2}, \ldots, r_{j-1}$ is an accepting run of $\N$ on $x$.

	
	We prove next that $L(\N)\subseteq R$. 
	Consider a word $x = \sigma_1\cdot \sigma_2 \cdots \sigma_n\in L(\N)$ and let $r = r_0, r_1, \ldots r_n$ be an accepting run of $\N$ on $x$. 
	As $r$ ends an accepting state of $\N$, then it does not visit $q_{rej}$. Hence,
	the definition of $\delta_\N$ implies that $r$ is a run that does not traverse $\alpha$ in $\A$. Also, as $r_0\in Q^{\N}_0$ and $r_n\in F_\N$, there are states $q_1, q_2\in Q$ such that $\zug{q_1, \$, r_0}, \zug{r_n, \$, q_2}\in \overline{ \alpha}$. Hence, $q_1, r_0, r_1, \ldots, r_n, q_2$ is a run that does not traverse $\alpha$ in $\A$ on the word $\$\cdot x \cdot \$$. As $\A$ is normal, 
	there is a word $y\in (\Sigma \cup \{\$\})^*$ such that there is a run that does not traverse $\alpha$ of $\A^{q_2}$ on $y$ that reaches $q_1$. Therefore, $(\$\cdot x \cdot \$ \cdot y)^\omega\in L(\A^{q_1})$.
	As all the states of $\A$ are reachable, it follows that there is a word $z\in  (\Sigma \cup \{\$\})^*$ such that $q_1\in \delta(Q_0, z)$, and thus $z\cdot (\$\cdot x \cdot \$ \cdot y)^\omega \in L(\A)$. Hence, as \mbox{$L(\A)=\bowtie_\$\!\!(R)$}, we get that $x\in R$. 
	
	Since the state space of $\N$ is $Q\cup \{q_{rej}\}$, we have that $|\N| = |\A|+1$. Moreover, if $R$ has a bad infix $x$, then $x\in \overline{R}$. 
	Hence, if we consider the word $x^\omega$, then as it has no $\$$'s, it is accepted by $\A$. Hence, there is a state $q\in Q$ such that $\A^q$ has a run on $x^\omega$ that does not leave the $\overline{ \alpha}$-component of $q$. As we detail below, 
	the fact that $x$ is a bad infix of $R$ implies that the $\overline{\alpha}$-component of $q$ does not contain $\$$-transitions. Hence, since the only states in $Q$ that are reachable in $\N$ lie in $\overline{ \alpha}$-components that contain a $\$$-transition, we 
	can remove the state $q_{rej}$ from $\N$ and make $q$ a rejecting sink instead. Hence, in this case, we have that $|\N| \leq |\A|$, and we are done.
	
	Let $x$ be a bad infix of $R$, and let $q\in Q$ be such that $\A^q$ has a run on $x^\omega$ that does not leave the $\overline{ \alpha}$-component of $q$. Let $S(q)$ denote the $\overline{ \alpha}$-component of $q$. 
	To conclude the proof, we show that $S(q)$ does not contain a $\$$-transition. 
	Assume by contradiction that $S(q)$ contains a $\$$-transition $s_1 \xrightarrow{\$} s_2$. As $q$ has a run on $x^\omega$ that does not leave $S(q)$, we get that $q$ has a finite run of the form $q \xrightarrow{x} q_x$ on $x$ that does not leave $S(q)$. As $\A$ is normal, we get that there is a run in $S(q)$ that starts at $q_x$, passes through the transition  $s_1 \xrightarrow{\$} s_2$, and finally ends in $q$: $q_x \xrightarrow{y_1} s_1 \xrightarrow{\$} s_2 \xrightarrow{y_2} q$. Hence, we get that there is a cycle in $S(q)$ of the form $q\xrightarrow{x\cdot y} q$ on a word of the form $x\cdot y$, where $y$ contains a $\$$. Hence, $(x\cdot y)^\omega$ is accepted from $q$ in $\A$. 

	Note that the word $(x\cdot y)^\omega = x\cdot y \cdot x \cdot y \cdots $ has infinitely many infixes of the form $\$\cdot u \cdot \$$, where $u\in \Sigma^*$ is a bad infix of $R$. To see why, consider the infix $z = y\cdot x \cdot y$ that appears infinitely often in $(x\cdot y)^\omega$, and consider the infix $\$ \cdot u \cdot \$$ of $z = y\cdot x \cdot y$ that starts at last $\$$ of the first $y$ and ends at the first $\$$ of the second $y$. Clearly, such $u$ is over $\Sigma$, and since it contains $x$ as an infix, then it is also a bad infix. In particular, $u\in \overline{ R}$.
	As all the states of $\A$ are reachable, there is a word $v\in (\Sigma\cup \{\$\})^*$ such that $q\in \delta(q_0, v)$. Hence, the word $v\cdot (x\cdot y)^\omega$ is accepted in $\A$ even though it has infinitely many infixes of the form $\$\cdot u \cdot \$$, where $u\in \overline{ R}$, and we have reached a contradiction to the fact that $L(\A) = \bowtie_\$\!\!( R)$. 
\end{proof}

\subsection{Succinctness and Complementation}\label{SD-tNCW succinct sec}

In this section we study the succinctness of SD co-B\"uchi automata with respect to deterministic ones, and the blow-up involved in their complementation. Recall that unlike B\"uchi automata, for co-B\"uchi automata, HD automata are exponentially more succinct than deterministic ones. Accordingly, our results about the succinctness of SD automata with respect to HD ones are more surprising than those in the setting of B\"uchi automata. 

\begin{theorem}
	\label{SD-tNCW succinct thm}
	There is a family $L_1,L_2,L_3,\ldots$ of languages such that for every $n \geq 1$, there is an SD-tNCW with $3n+3$ states that recognizes $L_n$, yet every tDCW or HD-tNCW that recognizes $L_n$ needs at least $2^n$ states.
\end{theorem}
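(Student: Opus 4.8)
The plan is to mirror the B\"uchi construction of Theorem~\ref{SD-tNBW succinct thm}, now routed through the co-\buchi encoding $\bowtie_\$$. Over $\Sigma_n=\{1,\dots,n,\#\}$ I take $R_n$ to be the set of \emph{good} words $x\cdot\#\cdot i$ with $x\in[n]^+$ and $i$ appearing in $x$, recognized by an NFW $\N_n$ with $3n+3$ states exactly as in the B\"uchi proof (guess the last letter $i$ and verify that it occurs in $x$), and set $L_n=\bowtie_\$(R_n)$. Applying Theorem~\ref{NFW to SD-tNCW operation thm} to $\N_n$ yields an SD-tNCW of size $3n+3$ for $L_n$. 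Unwinding the definition, a word with infinitely many $\$$ lies in $L_n$ iff from some point on every $\$$-delimited block is good; equivalently, $L_n$ asks that only finitely many blocks are bad.

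For a tDCW $\D$ I would argue directly on $\overline{\alpha}$-components (assuming, as in the preliminaries, that $\D$ is normal). Feed $\D$ the all-good word $w$ that cycles forever through every good block $\$\cdot S\cdot\#\cdot i$ (all $S\subseteq[n]$, all $i\in S$); since $w\in L_n$, the unique run is accepting and hence eventually trapped in a single $\overline{\alpha}$-component $C$, where it reads each of these blocks with no $\alpha$-transition. For each $S$ let $c_S\in C$ be the (unique, by determinism) state reached right after reading $\$\cdot S\cdot\#$. I claim the set of letters $i$ for which $c_S$ has an $\overline{\alpha}$ $i$-transition staying inside $C$ is exactly $S$: for $i\in S$ this holds because $w$ reads $\$\cdot S\cdot\#\cdot i$ safely, while for $i\notin S$ a safe $i$-transition inside $C$ would, using strong connectivity of $C$, close a safe cycle that reads the bad block $\$\cdot S\cdot\#\cdot i$ infinitely often, producing an $\alpha$-free run on a word with infinitely many bad blocks --- an accepted word outside $L_n$, a contradiction. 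Thus $S\mapsto c_S$ is injective, and $|\D|\ge|C|\ge 2^n$.

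The main obstacle is upgrading this to HD-tNCWs, which I expect to be the heart of the proof, just as for B\"uchi. The naive single-word argument breaks here: on the fixed word $w$ a witnessing strategy may predict the next letter $i$ from the position in the cycle (which is encoded in the history), so the post-$\$\cdot S\cdot\#$ state need not determine $S$. I would instead port the two-thread construction from the proof of Theorem~\ref{SD-tNBW succinct thm}: after reducing to an SD automaton all of whose states are HD (Proposition~\ref{gfg is SD}) and fixing a witnessing strategy $f$, I build, by a pigeonhole over the $2^n$ subsets against the $<2^n$ states, two $f$-consistent runs on distinct words that repeatedly meet at common states $p_j$ after reading $\$\cdot(\text{subset})$, with $i_j$ in the symmetric difference of the two subsets; Proposition~\ref{pruned-corollary} then gives that the split successors satisfy $u_j\sim_{\A}d_j$. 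The target contradiction is dual to the B\"uchi one: rather than stitching an $\alpha$-traversing run on a word with no good infixes, I stitch an $\alpha$-free (hence accepting) run on a word with infinitely many bad blocks, which lies outside $L_n$. The two delicate points, which I expect to require the $\overline{\alpha}$-component machinery together with the equivalences $u_j\sim_{\A}d_j$, are: (i) guaranteeing that swapping a good block for a bad one at a meeting state really yields a bad block --- this is precisely what fails for the succinct SD automaton, where the reached state already commits to the verified letter, so the construction must corner $f$ through its history-dependence exactly as in the B\"uchi proof; and (ii) ensuring the stitched run stays inside a single $\overline{\alpha}$-component, so that it traverses $\alpha$ only finitely often and is therefore accepting. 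Making these two points precise is where the real work lies.
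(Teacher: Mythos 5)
Your choice of $L_n=\bowtie_\$\!\!(R_n)$ and the $3n+3$-state upper bound via Theorem~\ref{NFW to SD-tNCW operation thm} match the paper exactly, and your direct argument for tDCWs (trap the run in an $\overline{\alpha}$-component on an all-good periodic word, then show $S\mapsto c_S$ is injective by closing a safe cycle through a bad block) is sound and corresponds to the counting half of the paper's proof. The gap is the HD-tNCW case, which you explicitly leave as a plan with two admitted unresolved ``delicate points'' --- and the plan itself heads in the wrong direction. The two-thread, strategy-stitching argument of Theorem~\ref{SD-tNBW succinct thm} is engineered to produce a run that traverses $\alpha$ \emph{infinitely often} on a bad word, which is a certificate of B\"uchi acceptance; for co-\buchi you must instead exhibit an \emph{eventually $\alpha$-free} run on a bad word, and the fact that the two $f$-consistent threads meet infinitely often at states $p_j$ gives you no control whatsoever over $\alpha$-avoidance inside the stitched segments. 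Your point (ii) is thus not a detail to be filled in but the place where the approach breaks.

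The missing idea is a co-\buchi-specific normalization: by \cite{KS15,AK19} an HD-tNCW may be assumed \emph{$\overline{\alpha}$-deterministic}, i.e., every state has at most one $\overline{\alpha}$ $\sigma$-successor for each letter $\sigma$ (there is no analogue for B\"uchi, which is why that proof needs the strategy gymnastics). With this in hand the paper never touches a witnessing strategy. It first shows there is a single state $q$ with $(\$\cdot R_n)^\omega\subseteq L_{\overline{\alpha}}(q)$; this step is itself nontrivial and is done by a subset construction $\D$ tracking $\overline{\alpha}$-runs, the containment $L(\A)\subseteq L(\D)$, and a decreasing-cardinality argument showing that otherwise $\D$ rejects some word of $(\$\cdot R_n)^\omega$. (Your trick of feeding one fixed all-good word only yields safe runs on that word, which suffices for tDCWs but not in general.) Once $q$ exists, $\overline{\alpha}$-determinism makes the safe run on each $\$\cdot S$ from $q$ unique, and the $2^n$ lower bound on the $\overline{\alpha}$-component of $q$ follows by essentially your injectivity argument. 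So the deterministic half of your proposal survives, but the HD half needs to be replaced by this normalization rather than by porting the B\"uchi construction.
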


\begin{proof}
	For $n \geq 1$, consider the alphabet $\Sigma_n = [n] \cup \{\#\}$, and the language $R_n  = \{x\cdot \# \cdot i: x\in [n]^+$ and $i$ appears in $x \} \subseteq \Sigma^*_n$. We define $L_n = \bowtie_\$\!\!(R_n)$.
	Recall that a word over $z \in (\Sigma_n \cup \{\$\})^*$ is good if $z = \$\cdot  x  \cdot \#  \cdot i$, where $x\in [n]^+$ and $i$ appears in $x$, and note that $L_n$ consists of exactly the words with finitely many $\$$'s, or that have a suffix that is a concatenation of good words. 
	First, it is not hard to see that $R_n$ can be recognized by an NFW $\N_n$ with $3n+3$ states, for example, a candidate for $\N_n$ can be obtained from the NFW in Figure~\ref{N_n} by removing the $\$$-transitions from $q_0$, and letting $q_0$ guess to behave as any state in $\bigcup_{i\in [n]} \{q^1_i\}$. By Theorem~\ref{NFW to SD-tNCW operation thm}, there is an SD-tNCW for $L_n$ with $3n+3$ states.
	
	%


	We show next that an HD-tNCW for $L_n$ needs at least $2^n$ states. Consider an HD-tNCW $\A = \zug{\Sigma_n \cup \{\$\}, Q, q_0, \delta, \alpha}$ for $L_n$. By \cite{KS15,AK19}, we can assume that $\A$ is {\em $\overline{\alpha}$-deterministic}, thus for all states $q$ and letters $\sigma$, we have that there is at most  one state $q'$ such that $\zug{q, \sigma, q'}\in \overline{ \alpha}$. We also assume that $\A$ is normal, and all its states are reachable.  To prove that $|Q|\geq 2^n$,
	we first prove that there is a state $q\in Q$ such that $(\$\cdot R_n)^\omega \subseteq L_{\overline{\alpha}} (q)$, where $L_{\overline{\alpha}} (q) = \{w: \text{there is a run from $q$ on $w$ that does not traverse $\alpha$}\}$. 
	Then, we argue that the $\overline{\alpha}$-component of $q$ has at least $2^n$ states.

	For a nonempty set $S\in 2^Q$, and a letter $\sigma \in \Sigma_n \cup \{\$\}$, define $\delta^{\overline{\alpha}} (S, \sigma) = \{q'\in Q: \text{there is a state $q\in S$} $ such that $\zug{q, \sigma, q'}\in \overline{ \alpha}\}$.
	Consider the tDCW $\D = \zug{\Sigma_n \cup \{\$\}, 2^Q\setminus \emptyset, Q, \delta_\D, \alpha_\D}$, where for every nonempty set $S\in 2^Q\setminus \emptyset$, and letter $\sigma \in \Sigma_n \cup \{\$\}$, it holds that $\delta_\D(S, \sigma) =\delta^{\overline{\alpha}} (S, \sigma)$, if $\delta^{\overline{\alpha}} (S, \sigma)\neq \emptyset$, and $\delta_\D(S, \sigma)  = Q$, otherwise. The former type of transitions are in $\overline{\alpha_\D}$, and the latter are in $\alpha_\D$. Thus, the state space of $\D$ encodes nonempty sets $S\in 2^Q$ of states, and $\D$ starts a run from the set of all states $Q$. Reading its input, $\D$ keeps track of runs in $\A$ that have not traversed $\alpha$ from the last traversal of $\alpha_\D$, and when all tracked runs in $\A$ have traversed $\alpha$, the tDCW $\D$ resets again to the set of all states $Q$. 
	
	We prove next that\footnote{In fact, using the fact that all the states of $\A$ are reachable, the fact that HD-tNCWs can be assumed to be SD \cite{KS15}, and the fact that $\A$ recognizes $L_n$, one can prove that $\D$ is equivalent to $\A$.} $L(\A) \subseteq L(\D)$. Consider a word $w\in L(\A) $, and let $r = r_0, r_1, r_2, \ldots$ be an accepting run of $\A$ on $w$. As $r$ is accepting, there is $i\geq 0$ such that $r[i, \infty] = r_i, r_{i+1}, r_{i+2}, \ldots$  does not traverse $\alpha$ on the suffix $w[i+1, \infty]$. Let $r_S = S_0, S_1, S_2, \ldots $ denote the run of $\D$ on $w$, and assume towards contradiction that $r_S$ traverses $\alpha_\D$ infinitely often. Then, consider an index $j\geq i$ such that $\zug{S_j, w[j+1], S_{j+1}} \in \alpha_\D$. By the definition of $\D$, we have that $S_{j+1} = Q$. Consider the run $r[j+1, \infty]$ on the suffix $w[j+2, \infty]$. As $r_{j+1}\in S_{j+1}$, $r[j+1, \infty]$ is a run of $r_{j+1}$ that does not traverse $\alpha$, and $\D$ keeps track of runs in $\A$ that do not traverse $\alpha$ from the last reset, then it must be the case that the run $r_S[j+1, \infty] = S_{j+1}, S_{j+2}, S_{j+3}, \ldots$ does not traverse $\alpha_\D$, and we have reached a contradiction.

	Next, we show that there is a state $q\in Q$ such that $(\$\cdot R_n)^\omega \subseteq L_{\overline{\alpha}} (q)$. Assume towards contradiction that for all states $q\in Q$, it holds that there is a word $w_q\in (\$\cdot R_n)^\omega \setminus L_{\overline{\alpha}} (q)$.
	We define a sequence $S_0, S_1, S_2, \ldots $ of states in $\D$, and a sequence $w_1, w_2, w_3, \ldots$ of finite words in $(\$\cdot R_n)^+$ such that for all $i \geq 0$, it holds that $\delta_\D({S_i}, w_{i+1}) = S_{i+1}$. 
	We take $S_0 = Q$.  For all $i\geq 0$, given $S_i$,  we define $w_{i+1}$ and $S_{i+1}$ as follows. 
	Consider a state $q_i \in S_i$ and the word $w_{q_i}\in (\$\cdot R_n)^\omega \setminus L_{\overline{\alpha}} (q_i)$. Then, as we argue below, there exists a prefix $h_q$ of $w_{q_i}$ in $(\$\cdot R_n)^+$ such that all runs of $q_i$ on $h_q$ traverse $\alpha$. 
	We take $w_{i+1}$ to be $h_q$, and take $S_{i+1} = \delta_\D(S_i, w_{i+1})$. We explain now why such a prefix exists. Assume contrarily that there are indices $k_1 < k_2 < k_3 < \cdots$ such that for all $j\geq 1$, the word $w_{q_i}[1, k_j] = \sigma_1\cdot \sigma_2 \cdots \sigma_{k_j}\in (\$\cdot R_n)^+$ is a prefix of $w_{q_i} = \sigma_1 \cdot \sigma_2 \cdot \sigma_3 \cdots $ such that $q_i$ has a run $r_{k_j}$ that does not traverse $\alpha$ on $w_{q_i}[1, k_j]$. Then, as $\A$ is $\overline{ \alpha}$ deterministic, we get that the run $r_{k_{j_2}}$ extends the run $r_{k_{j_1}}$ for all $j_2 > j_1$. Hence, there is an infinite run of $q_i$ that does not traverse $\alpha$ on $w_{q_i}$, and we have reached a contradiction to the fact that $w_{q_i} \notin L_{\overline{\alpha}} (q_i)$. 
	
	Consider sequences $S_0, S_1, S_2, \ldots$ and $w_1, w_2, w_3, \ldots$ as above. First, it holds that the word $w = w_1\cdot w_2 \cdot w_3 \cdots $ is in $(\$\cdot R_n)^\omega \subseteq L_n$, and, by definition, $r_S = S_0 \xrightarrow{w_1} S_1 \xrightarrow{w_2} S_2 \cdots$ is a run of $\D$ on $w$.  Recall that $L(\A) = L_n \subseteq L(\D)$; in particular $w\in L(\D)$. 
	To reach a contradiction, we prove below that the run $r_S$ of $\D$ on $w$, is rejecting. We claim that it is sufficient to show that for all $i\geq 0$, either the run $r_{i \to i+1} = {S_i}\xrightarrow{w_{i+1}} S_{i+1}$ traverses $\alpha_\D$, or $|S_i| > |S_{i+1}|$. Indeed, if the latter holds, and there is $i\geq 0$ such that the run $S_i \xrightarrow{w_{i+1}} S_{i+1} \xrightarrow{w_{i+2}}  S_{i+2}\cdots $ does not traverse $\alpha_\D$, then we get that $|S_j| > |S_{j+1}|$ for all $j\geq i$, which is a impossible.

	So we prove next that for all $i\geq 0$, either the run $r_{i \to i+1} = {S_i}\xrightarrow{w_{i+1}} S_{i+1}$ traverses $\alpha_\D$, or $|S_i| > |S_{i+1}|$.  
	Consider a state $S$ of $\D$, a letter $\sigma \in \Sigma_n \cup \{\$\}$, and note the following.
	If for all $q \in S$ it holds that $(\{q\} \times  \{\sigma\} \times \delta(q, \sigma) )\cap \overline{\alpha} \neq \emptyset$, 
	then as $S$ proceeds to the $\overline{\alpha}$ $\sigma$ successors of the states in $S$ upon reading $\sigma$, and as $\A$ is $\overline{\alpha}$ deterministic, we get that $|S| \geq |\delta_\D(S, \sigma)|$ and $\zug{S, \sigma, \delta_\D(S, \sigma)} \in \overline{\alpha_\D}$. Otherwise, if there is a state $q\in S$ such that $(\{q\} \times \{\sigma\} \times \delta(q, \sigma)) \cap \overline{\alpha} =\emptyset$, then either $\zug{S, \sigma, \delta_\D(S, \sigma)} \in \alpha_\D$, or $|S| > |\delta_D(S, \sigma)|$. 
	Consider $i\geq 0$. Let $w_{i+1} = \sigma_1\cdot \sigma_2 \cdots \sigma_{n_i}$, and consider the run 
	$r_{i \to i+1} = {S_i}\xrightarrow{w_{i+1}} S_{i+1} = T_0, T_1, \ldots, T_{n_i}$.
	If, by contradiction, the run $r_{i \to i+1}$ does not traverse $\alpha_\D$ and $|S_i| = |S_{i+1}|$, then by the above,  for all $0 \leq j < n_i$ and $q\in T_j$, we have that $( \{q\} \times \{\sigma_{j+1}\} \times \delta(q, \sigma_{j+1})) \cap \overline{\alpha} \neq \emptyset$. Hence, if we consider a state $t_0\in T_0$, we get that there is a transition  $\zug{t_0, \sigma_1, t_1}\in \overline{ \alpha}$, and by the defition of $\delta_\D$, we have that $t_1\in T_1$. Then, proceeding iteratively from $t_1$, we get that there is a run of $t_0$ in $\A$ that does not traverse $\alpha$ on the word $w_{i+1} = \sigma_1 \cdot \sigma_2 \cdots \sigma_{n_i}$. As the latter holds for all $t_0\in T_0 = S_i$; in particular,  for $t_0 = q_i$, we get a contradiction to the definition of $w_{i+1}$.

	What we have shown so far is that there exists a state $q\in Q$ such that $(\$\cdot R_n)^\omega \subseteq L_{\overline{\alpha}} (q)$. To complete the proof, we prove next that the $\overline{\alpha}$-component of $q$ has at least $2^n$ states.
	Assume towards contradiction that the $\overline{ \alpha}$-component of $q$ has strictly less than $2^n$ states. For every  subset $S\subseteq [n]$, 
	we abuse notation and use $S$ to also denote a word in $[n]^*$ consisting of exactly the numbers that appear in $S$.
	Then, consider the word $\$\cdot S$. The word $\$\cdot S$ can be extended to a word in $(\$\cdot R_n)^\omega \subseteq L_{\overline{\alpha}} (q)$. Hence, $q$ has a run $r_S$ that does not traverse $\alpha$ on the word $\$\cdot S$, and as $\A$ is normal, we have that $r_S$ ends in a state that belongs to the $\overline{ \alpha}$-component of $q$. As the $\overline{ \alpha}$-component of $q$ has strictly less than $2^n$ states, there are sets $S, T \subseteq [n]$ and $i\in [n]$ such that $i\in S\setminus T$, and $r_S$ and $r_T$ end in the same state $q'$. 
	As $\A$ is $\overline{ \alpha}$ deterministic, we get that the run $r_S$ is the only run that does not traverse $\alpha$ from the state $q$ on the word $\$ \cdot S$. Hence, as $\$\cdot S \cdot \#\cdot  i \cdot \$$ can be extended to a word in $(\$\cdot R_n)^\omega \subseteq L_{\overline{\alpha}} (q) $ and $r_S$ ends in $q'$, we get that there is a run of the form $q' \xrightarrow{\#\cdot i\cdot \$} q''$ that traverses only $\overline{ \alpha}$ transitions, in particular, it does not leave the $\overline{\alpha}$-component of $q$. Now consider the run $r = q \xrightarrow{\$\cdot T} q' \xrightarrow{\#\cdot i \cdot \$} q''$. The run $r$ is a run that does not traverse $\alpha$ from the state $q$ on the word $\$\cdot T\cdot \#\cdot i \cdot \$$.
	As $\A$ is normal, there is a word $y\in (\Sigma_n \cup \{\$\})^*$ such that there is a run that does traverse $\alpha$ of $\A^{q''}$ on $y$ that reaches $q$. Therefore, $(\$\cdot T\cdot \#\cdot i \cdot \$ \cdot y)^\omega \in  L_{\overline{\alpha}} (q) \subseteq L(\A^q)$. As all the states of $\A$ are reachable, there is a word $z\in (\Sigma_n \cup \{\$\})^*$ such that $q\in \delta(q_0, z)$. Hence, the word $z\cdot (\$\cdot T\cdot \#\cdot  i \cdot \$ \cdot y)^\omega$ is accepted by $\A$ even though it has infinitely many infixes in $\$ \cdot \overline{ R_n} \cdot \$$, and we have reached a contradiction.
\end{proof}

\begin{remark}\label{alt min co proof remark}
	{\rm As has been the case with \buchi automata, here too, an analysis of the application of the HD-tNCW minimization algorithm on a tDCW for $L_n$ leads to a slightly tighter bound. 
	Specifically, in Appendix~\ref{app alt min co proof remark}, we describe a tDCW for $L_n$ with $2^{n+1} + 1$ states, such that the application of the HD-tNCW minimization algorithm on it does not make  it smaller. 		
	}\hfill \qed
\end{remark}

\stam{
we define a tDCW $\D_n  =\zug{\Sigma_n \cup \{\$\}, (2^{[n]}\times \{a, c\} )\cup \{q_{pass}\}, \zug{\emptyset, c}, \delta, \alpha}$ for $L_n$ with $2^{n+1} + 1$ states and prove that it is a minimal GFG-tNCW. 
The tDCW $\D_n$ (See Figure~\ref{DnC}) consists of a state denoted $q_{pass}$ and two copies of subsets of $[n]$: the $a$-copy  $2^{[n]}\times \{a\}$ and the $c$-copy $2^{[n]}\times \{c\}$ - the ``a" stands for accumulate, and the ``c" stands for check.
Essentially, the state $\zug{S, a}$ in the $a$-copy remembers that we have read only numbers in $[n]$ after the last $\$$ and also remembers the set $S\subseteq [n]$ of these numbers.  The state $\zug{S, c}$ in the $c$-copy remmbers, in addition, that we have just read $\#$. 
Accordingly, the state $\zug{S, c}$, for a nonempty set $S$,  checks whether the next letter is in $S$ and moves to the state $q_{pass}$  via a safe transition only when the check passes. 
The fact that the above transitions are safe,  and we can move from $q_{pass}$ back to $\zug{\emptyset, a}$ upon reading $\$$ and without traversing $\alpha$, imply that $q_{pass}$ can trap safe runs on words of the form $z_1\cdot z_2\cdots $, where for all $l\geq 1$ $z_l$ is good.
Formally,  for every state $\zug{S, o} \in  2^{[n]}\times \{a, c\}$, and letter $\sigma \in \Sigma_n\cup \{\$\}$, we define:

$$
\delta(\zug{S, o}, \sigma)=
\begin{cases}
\zug{S \cup \{\sigma\}, a},  &  \text{if $o=a$ and $\sigma \in [n]$} \\
\zug{S , c}, & \text{if $o = a, S\neq \emptyset$ and $\sigma = \#$}\\
q_{pass}, & \text{if $o = c$ and $\sigma \in S$}\\
\zug{\emptyset, c}, & \text{if $\zug{S, o} = \zug{\emptyset, c}$ and $\sigma = \neg \$$}\\	
\zug{\emptyset, c}, & \text{if $o = c$ and $\sigma \in ([n]\cup \{\#\})\setminus S$}\\
\zug{\emptyset, a}, & \text{if $\sigma=\$$}\\
\zug{ \emptyset,c}, & \text{if $\zug{S, o} = \zug{\emptyset, a}$ and $\sigma = \#$}

\end{cases}
$$

where all types transitions are safe except for the last three. In addition, by reading a letter $\sigma$ from $q_{pass}$ we move to $\zug{\emptyset, a}$ via a safe transition when $\sigma=\$$, and otherwise, we move to $\zug{\emptyset, c}$ via an $\alpha$-transition.
Note that by  reading $\$$ from any state we move to $\zug{\emptyset, a}$ to detect a suffix of the form $z_1\cdot z_2\cdots $, where for all $l\geq 1$ $z_l$ is good, and if at some point we read an unexpected letter, we move to $\zug{\emptyset, c}$, where we wait for $\$$ to be read. 
In addition, $\D_n$ is defined in a way that guarantees its minimality as a GFG-tNCW, as we argue below. \badernew{Note that the minimality of $\D_n$ as a GFG-tNCW implies that $L_n$ is not GFG-helpfull, that is, a minimal GFG-tNCW for $L_n$ is not smaller than a minimal tDCW for $L_n$.}

\begin{figure}[htb]	
\begin{center}
	\includegraphics[width=0.8\textwidth]{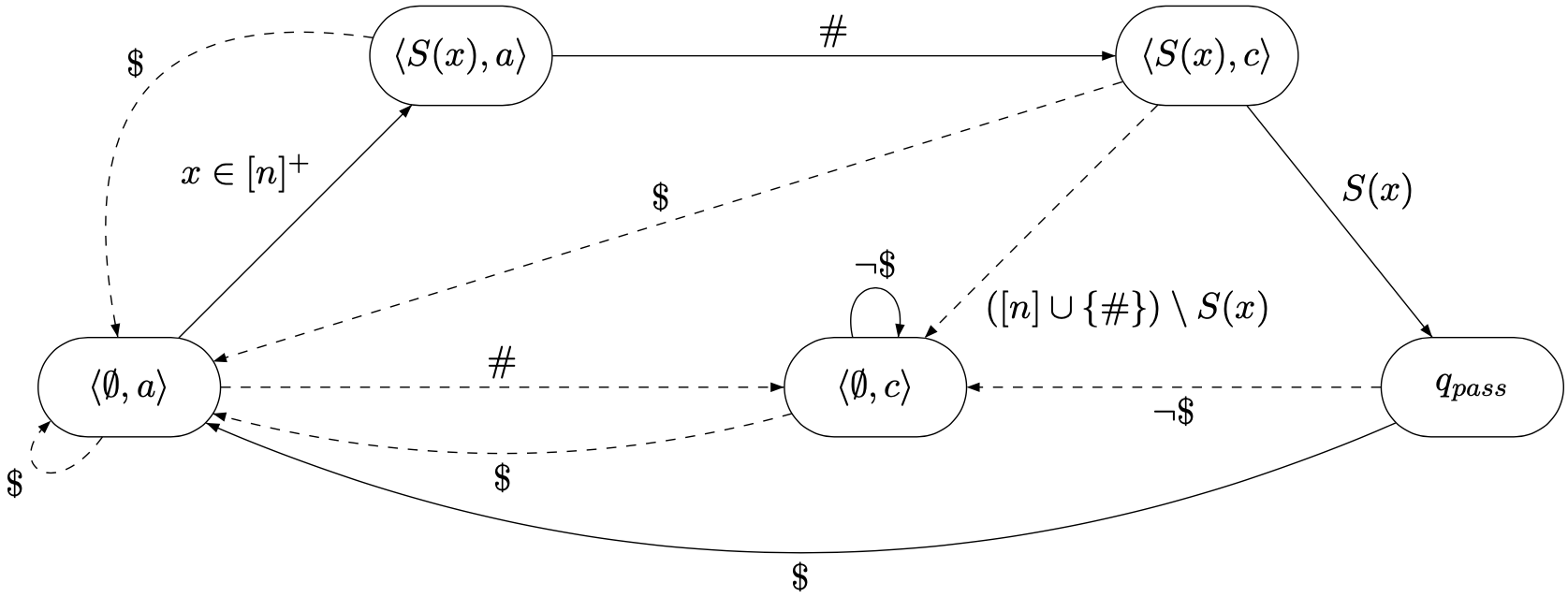}
	\caption{The tDCW $D_n$. For a finite word $x\in [n]^+$, we denote the set of numbers that appear in $x$ by $S(x)$. 
		Dashed transitions are $\alpha$-transitions.}
	\label{DnC}
\end{center}
\end{figure}

The following proposition charecterizes the safe language of the state $q_{pass}$.

\begin{proposition}\label{DnC SCs}
\begin{enumerate}
	\item 
	$\D_n$ is nice, and its safe components are $\S(\D_n) = \{ \{\zug{\emptyset, c}\}, Q_{\D_n}\setminus \{\zug{\emptyset, c}\}\}$.
	
	\item
	The safe language $L_{\it safe}(q_{pass})$ consists of all infinite words $w$ of the form: 
	
	\begin{enumerate}
		\item
		$ w = z_1\cdot z_2 \cdots z_t$, where for all $1\leq l \leq t-1$, $z_l$ is good, and $z_t = \$ \cdot [n]^\omega$. Or, 
		
		\item
		$w = z_1 \cdot z_2\cdots $, where for all $1\leq l$, $z_l$ is good. 
	\end{enumerate}
\end{enumerate}
\end{proposition}

\begin{proof}

\begin{enumerate}

	\item
	
	We first show that there is a safe cycle that visits all the states in $Q_{\D_n} \setminus \{\zug{\emptyset, c}\}$.
	For a nonempty subset $S\subseteq [n]$, let $z_S\in [n]^+$ be a word consisting of exactly all the numbers in $S$. Then, by the definition of $\delta$, it is easy to see that $q_{pass} \xrightarrow{\$} \zug{\emptyset, a} \xrightarrow{z_S }\zug{S, a} \xrightarrow{\#} \zug{S, c}  \xrightarrow{i} q_{pass}$, where $i\in S$, is a safe cycle that passes through $\zug{\emptyset, a}$. Therefore, by concatenating all cycles corresponding to all nonempty subsets $S$ of $[n]$, we get a safe cycle that traverses all the states in $Q_{\D_n} \setminus \{\zug{\emptyset, c}\}$. Finally, as the only way to enter and leave the state $\zug{\emptyset, c}$ is by traversing $\alpha$ transitions, we get that $\S(\D_n) = \{ \{\zug{\emptyset, c}\}, Q_{\D_n}\setminus \{\zug{\emptyset, c}\}\}$  and $\D_n$ is normal.
	Also, it is not hard to see that all the states of $\D_n$ are reachable. Hence, being normal and deterministic, $\D_n$ is nice, and we are done.

	\item
	We first show that words of the form $(a)$ or $(b)$, are in $L_{\it safe}(q_{pass})$. Consider a good word  $z = \$\cdot x\cdot \#\cdot i$, and let $S(x)$ denote the nonempty set of numbers appearing in $x$. As $S(x)$ is nonempty, then the run of $q_{pass}$ on $z$ is a safe cycle, denote $c_z$, $c_z =  q_{pass}  \xrightarrow{\$} \zug{\emptyset, a}\xrightarrow{x} \zug{S(x), a} \xrightarrow{\#} \zug{S(x), c} \xrightarrow{i} q_{pass}$.
	Hence, words $w= z_1\cdot z_2 \cdots z_t$ of form $(a)$ are in $L_{\it safe}(q_{pass})$. Indeed, a safe run of $q_{pass}$
	on $w$ is obtained by concatenating	the safe cycles $c_{z_1}, c_{z_2}, \ldots c_{z_{t-1}}$, and appending a safe run of $q_{pass}$ on a word of the form $\$[n]^\omega$ to them, which is possible as the $[n]$-transitions in $2^{[n]}\times \{a\}$ stay in $2^{[n]}\times \{a\}$ and are safe. Similarly, words $w= z_1\cdot z_2 \cdots$ of form $(b)$ are in $L_{\it safe}(q_{pass})$ as one can consider concatenating the safe cycles $c_{z_1}, c_{z_2}, \ldots$.

	For the other direction, consider a safe run $r = r_0, r_1,\ldots $ of $q_{pass}$ on a word $w$. Since $r$ is safe, then it starts by reading $\$$ and moving to the state $\zug{\emptyset, a}$, where one of the following happens. Either $r[1, \infty]$ is an infinite safe tail that is stuck in  $2^{[n]}\times \{a\}$, or $r[1, \infty]$ moves to a state of the form $\zug{S, c}$ for the first time, in which case, as it is safe, it closes a safe cycle and returns back to $q_{pass}$ upon reading a number $i$ in $S$. In the former case, as safe transitions in $2^{[n]}\times \{a\}$ are $[n]$-labeled, it follows that $w=\$[n]^\omega$, and in the latter case, as $S$ is the set of accumulated numbers that appeared in the nonempty word $x$ read after the last $\$$, it follows that $w$ has a good prefix $\$\cdot x \cdot \#\cdot i$. Therefore, as every safe run $r$ either eventually gets stuck in $2^{[n]}\times \{a\}$ after closing a finite number of cycles through $q_{pass}$, or keeps cycling through $q_{pass}$, it follows that $w$ is of the form $(a)$ or $(b)$, respectively.
\end{enumerate}

\end{proof}

\begin{corollary}
It holds that $L_{\it safe}(q_{pass})\subseteq L_n$.
\end{corollary}

Now we can prove the following:

\begin{proposition}
$L(\D_n) = L_n$. 
\end{proposition}

\begin{proof}

We first show that $L(\D_n) \subseteq L_n$. Consider a word $w\in L(\D_n)$, and let $r=r_0, r_1, \ldots $ be an accepting run of $\D_n$ on $w$. Let $t$ be such that $r[t, \infty]$ is a safe run on the suffix $w[t+1, \infty]$. If $w[t+1,  \infty]$ has no $\$$'s, then $w\in L_n$. Otherwise, $w[t+1, \infty]$ has at least  one $\$$.
Since the  only safe  transitions that are $\$$-labeled are transitions from the state $q_{pass}$, then 
$r[t, \infty]$ visits $q_{pass}$ at least once, and we assume w.l.o.g that $r_t = q_{pass}$. Then, $r[t, \infty]$ is a safe run of $q_{pass}$ and thus $w[t+1, \infty]\in L_{\it safe}(q_{pass}) \subseteq  L_n$. Hence, as $L_n = (\Sigma_n \cup \{\$\})^* \cdot L_n$, we get that $w\in L_n$.

For the other direction, consider a word $w\in L_n$, and let $r=r_0, r_1, \ldots$ be the run of $\D_n$ on $w$. We show that $r$ eventually becomes safe and thus is accepting. If $w$ has a suffix $w[t+1, \infty]$ that has no $\$$'s, then $r[t, \infty]$ is either stuck in $2^{[n]}\times \{a\}$, or eventually gets stuck in the self-loop at $\zug{\emptyset, c}$.  Indeed, 
reading two consecutive $\neg\$$ letters from a state in $(2^{[n]}\times \{c\}) \cup \{q_{pass}\}$ leads to the state $\zug{\emptyset, c}$. In the former case, $r[t, \infty]$ is safe as 
($\neg \$$)-labeled transitions from states in $2^{[n]}\times \{a\}$ are safe, and in the latter case $r[t, \infty]$ eventually becomes safe as the self-loop at $\zug{\emptyset, c}$ is safe. 
Finally, we are left with the case where $w$ has a suffix $w[t+1, \infty]$ of the form $z_1\cdot z_2 \cdots $, where for all $l\geq 1$, $z_l$ is good.
In this case, since $z_1$ starts with $\$$ and all $\$$-labeled transitions lead to the same state, then the run $r[t, \infty]$ and the run of $q_{pass}$ on $z_1 \cdot z_2 \cdots$ eventually coincide, and since $z_1 \cdot z_2 \cdots \in L_{\it safe} (q_{pass})$, $r$ eventually becomes safe.
\end{proof}

We show next that $\D_n$ is a minimal GFG-tNCW by showing that it is safe-centralized and safe-minimal. Indeed, $\D_n$ is nice, and nice safe-centralized, safe-minimal tDCWs are minimal GFG-tNCWs~\cite{AK19}. 

\begin{proposition}
$\D_n$ is a minimal GFG-tNCW.
\end{proposition}

\begin{proof}

Recall that $\S(\D_n) = \{ \{\zug{\emptyset, c}\}, Q_{\D_n}\setminus \{\zug{\emptyset, c}\}\}$.
First, we show that $\D_n$ is safe-centralized by showing that for all states $q\neq \zug{\emptyset, c}$, it holds that $q$ and $\zug{\emptyset, c}$ have incomparable safe languages. As the safe component $Q_{\D_n}\setminus \{\zug{\emptyset, c}\}$ contains a safe-transition labeled with $\$$, it follows that $q$ has a safe run on a word $w_q$ that contains $\$$. However, since $w_q$ contains $\$$, then the run of $\zug{\emptyset, c}$ on $w_q$ eventually leaves $\zug{\emptyset, c}$ and thus is not safe. In particular, $w_q\in L_{\it safe}(q) \setminus L_{\it safe}(\zug{\emptyset, c})$.
Also, it is not hard to verify that $\#^\omega \in L_{\it safe}(\zug{\emptyset, c})\setminus L_{\it safe}(q)$.

Next, we show that $\D_n$ is safe-minimal. As $\D_n$ is safe-centralized, it is sufficient to show that for all states $q, s\neq \zug{\emptyset,  c}$, it holds that $q$ and $s$ have distinct safe languages.
We distinguish between several cases:
\begin{itemize}
	\item $q = \zug{S, a}$ and $s = \zug{T, a}$ for $S\neq T$: w.l.o.g there is a number $i\in S\setminus T$. In this case, $ \#\cdot  i \cdot (\$\cdot 1\cdot \#\cdot 1 )^\omega\in L_{\it safe}(q) \setminus L_{\it safe}(s)$.

	\item $q = \zug{S, c}$ and $s = \zug{T, c}$ for $S\neq T$: w.l.o.g there is $i\in S\setminus T$. In this case, $  i \cdot (\$\cdot 1\cdot \#\cdot 1 )^\omega \in L_{\it safe}(q) \setminus L_{\it safe}(s)$.
	
	\item 
	$q = \zug{S, a}$ and $s = \zug{T, c}$: in this case the word $1\cdot \# \cdot 1 \cdot (\$\cdot 1\cdot \#\cdot 1 )^\omega \in L_{\it safe}(q) \setminus L_{\it safe} (s)$. 
	
	\item $q= q_{pass}$ and $s\notin \{q_{pass}, \zug{\emptyset, c}\}$: $q_{pass}$ is the only state that has a safe outgoing $\$$-labeled transition. In particular, $(\$\cdot 1\cdot \#\cdot 1 )^\omega \in L_{\it safe}(q_{pass}) \setminus L_{\it safe}(s)$.

\end{itemize}
}

\begin{theorem}\label{SD-tNCW comp thm}
There is a family $L_1,L_2,L_3,\ldots$ of languages such that for every $n \geq 1$, there is an SD-tNCW with $O(n)$ states that recognizes $L_n$, yet every SD-tNBW that recognizes $\overline{L_n}$ needs at least $2^{O(n)}$ states.
\end{theorem}

\begin{proof}
Let $\Sigma=\{0,1\}$, and let $R_n=\{w: w \in (0+1)^* \cdot(0 \cdot (0+1)^{n-1} \cdot 1 +  1 \cdot (0+1)^{n-1} \cdot 0) \cdot (0+1)^*\}$. Thus, $R_n$ is the language of all words that contain two letters that are at distance $n$ and are different. 
We define $L_n=\bowtie_\$\!\!(R_n)$.

It is easy to see that $R_n$ can be recognized by an NFW with $O(n)$ states. Hence, by Theorem~\ref{NFW to SD-tNCW operation thm}, there is an SD-tNCW with $O(n)$ states for $L_n$.
We prove next that every SD-tNBW for $\overline{L_n}$ needs at least $2^{O(n)}$ states. 
For this, note that $\overline{L_n}$ consists of all words $w$ such that $w$ contains infinitely many $\$$'s yet has no suffix in $(\$ R_n)^\omega$. Thus, $w$ contains infinitely many infixes in $\$ \overline{R_n} \$$. Therefore, $\overline{L_n} = \infty (\$ \cdot \overline{R_n} \cdot \$)$. 
It is not hard to prove that an NFW for $\overline{R_n}$ needs at least $2^{O(n)}$ states. Then, by Theorem~\ref{SD-tNBW to NFW operation thm}, this bound is carried over to a $2^{O(n)}$ lower bound on an SD-tNBW for $\overline{L_n}$. 
\end{proof}

\stam{
As in Theorem~\ref{SD-tNBW comp thm}, we first describe an example with a fixed alphabet, where the SD-tNCW for $L_n$ requires $O(n^2)$ states, and then move to an alphabet that depends on $n$ and reduce the size of the SD-tNBW for $L_n$ to $O(n)$ states. As there, let $\Sigma=\{0,1\}$ and for $n \geq 1$, let $R_n=\{w \in (0+1)^{2n} : w[1,n]  \neq w[n+1,2n]\}$, and $L_n=\bowtie_\$\!\!(R_n)$.

First, by Theorem~\ref{NFW to SD-tNBW operation thm}, there is an SD-tNBW with $O(n^2)$ for $L_n$.
In order to prove that an SD-NBW for $\overline{L_n}$ needs at least $2^{O(n)}$ states, we prove that in fact every NBW for $\overline{L_n}$ needs that many states. For this, note that $\overline{L_n}$ consists of all words $w$ such that $w$ contains infinitely many $\$$'s yet has no suffix in $(\$ R_n)^\omega$. Thus, $w$ contains infinitely many infixes in $\$ \overline{R_n}$. As in Theorem~\ref{SD-tNBW comp thm}, the proof of the latter follows from the fact an NFW for $\overline{R_n}$ needs at least $2^{O(n)}$ states.

Moving to alphabet $\Sigma=[n] \times \{0,1\}$, consider the languages $R'_n$ from Theorem~\ref{SD-tNBW comp thm}, and $L'_n=\bowtie_\$\!\!(R'_n)$. Note that $L'_n$ is over $\Sigma \cup \{\$ \}$. Now, $\overline{L'_n}$ consists of all words $w$ such that $w$ contains infinitely many $\$$'s yet has no suffix in $(\$ R'_n)^\omega$. Thus, eventually $w$ should reach a $\$$ such that the word after it is not in $R'_n$. For this, an NCW for $\overline{L'_n}$ should detect all words with no error, in particular words in $(0+1)^{2n}$ of the form $u \cdot u$. For this, as in the lower bound proof for an NFW for $\overline{R'_n}$, an NCW for $\overline{L'_n}$ needs at least $2^n$ states.
\end{proof}
}

\subsection{Decision Problems}

We continue to decision problems for SD-tNCWs and SD-NCWs. As in Section~\ref{dp buchi}, we state only the lower bounds, and for SD-tNCWs. Matching upper bounds and similar results for SD-NCWs follow from the known upper bounds for tNCWs and the known linear translations between state-based and transition-based automata. 

We start with language-containment and universality. 

\begin{theorem}\label{universality SD-tNCWs thm}
The language-containment and universality problems for SD-tNCWs are PSPACE-hard.
\end{theorem}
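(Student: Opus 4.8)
The plan is to reduce from the universality problem for NFWs, which is well known to be PSPACE-complete and was already used in the \buchi setting (Theorem~\ref{contain SD-tNBWs hard thm}). The key advantage here is that, unlike the \buchi case where universality required a detour through polynomial-space Turing machines, the $\bowtie_\$$ operator lets us hit universality of the target automaton \emph{directly}. Given an NFW $\N$ over $\Sigma$ and a fresh letter $\$\notin\Sigma$, I would apply the construction of Theorem~\ref{NFW to SD-tNCW operation thm} to obtain, in linear time, an SD-tNCW $\A$ over $\Sigma\cup\{\$\}$ with $L(\A)=\bowtie_\$\!\!(L(\N))$ and $|\A|=|\N|$. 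The entire reduction then rests on a single claim: for every $R\subseteq\Sigma^*$, we have $\bowtie_\$\!\!(R)=(\Sigma\cup\{\$\})^\omega$ iff $R=\Sigma^*$.

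For the claim, the easy direction is that if $R=\Sigma^*$ then every word lies in $\bowtie_\$\!\!(R)$: a word with finitely many $\$$'s qualifies vacuously, and for a word with infinitely many $\$$'s, the suffix beginning at its first $\$$ decomposes as $\$\cdot u_1\cdot\$\cdot u_2\cdots$ with each maximal $\$$-free block $u_i\in\Sigma^*=R$, hence lies in $(\$\cdot R)^\omega$. For the converse I would argue by contrapositive: if some $x\in\Sigma^*\setminus R$ exists, take the witness $w=(\$\cdot x)^\omega$. It has infinitely many $\$$'s, yet no suffix of it is in $(\$\cdot R)^\omega$ --- a suffix starting inside an $x$-block does not begin with $\$$, and a suffix starting at a $\$$ has the forced $\$$-block decomposition $\$\cdot x\cdot\$\cdot x\cdots$ (forced precisely because blocks of $R\subseteq\Sigma^*$ contain no $\$$), all of whose blocks equal $x\notin R$. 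Hence $w\notin\bowtie_\$\!\!(R)$, so $\A$ is not universal.

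Combining the two directions, $\A$ is universal iff $\N$ is universal; since the construction is polynomial, this is a valid reduction, establishing PSPACE-hardness of universality for SD-tNCWs. For language containment I would note that universality is the special instance $L(\B)\subseteq L(\A)$, where $\B$ is a fixed one-state deterministic (hence SD) tNCW for $(\Sigma\cup\{\$\})^\omega$ --- a single state with all self-loops in $\overline{\alpha}$ --- so the PSPACE-hardness transfers immediately to containment. The only point demanding care is the converse direction of the claim, where one must verify that the witness $(\$\cdot x)^\omega$ really avoids every suffix-decomposition into $(\$\cdot R)^\omega$; this is exactly where the freshness of $\$$ is exploited, since it pins down the block decomposition and prevents any block of $R$ from absorbing a $\$$.
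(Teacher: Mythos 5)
Your proposal is correct and follows essentially the same route as the paper: reduce from NFW universality by applying the construction of Theorem~\ref{NFW to SD-tNCW operation thm}, observe that $\bowtie_\$\!\!(L(\N))$ is universal iff $L(\N)=\Sigma^*$ using the witness $(\$\cdot x)^\omega$ for the converse direction, and note that hardness of universality transfers to containment. The paper's proof is just a more terse version of exactly this argument.
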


\begin{proof} 
We describe a reduction from universality of NFWs to universality of SD-tNCWs. Given an NFW $\N$ over $\Sigma$, the reduction returns the SD-tNCW$ \A$ over $\Sigma\cup \{\$\}$  that is obtained by applying the operation from Theorem~\ref{NFW to SD-tNCW operation thm} on $\N$. Thus, $L(\A) = \bowtie_\$\!\!(L(\N))$.

First, if $L(\N)=\Sigma^*$, then $(\Sigma\cup \{\$\})^*\cdot (\$ L(\N))^\omega = \infty \$$, and so $L(\A)=(\Sigma\cup \{\$\})^\omega$. Conversely, if there is a word $x\in \Sigma^*\setminus L(\N)$, then the word $w = (\$ x)^\omega$ is not in $L(\A)$.  Indeed, $w$ has infinitely many $\$$'s, yet it has a word not in $L(\N)$ between every two consecutive $\$$'s. 
\end{proof}

We continue to the minimization problem.
Note that while minimization is PSPACE-complete for NCWs, it is NP-complete for DCWs and in PTIME for HD-tNCWs. Thus, our PSPACE lower bound suggests a significant difference between SD and HD automata. 
Note also that, as has been the case with \buchi automata, the case of SD-NCWs is easy, as a non-empty SD-NCW is universal iff it has an equivalent SD-NCW with one state, which is not the case for SD-tNCWs: 

\begin{theorem}\label{min co is hard thm}
The minimization problem for SD-tNCWs is PSPACE-hard.
\end{theorem}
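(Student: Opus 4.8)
The plan is to reuse the universality reduction behind Theorem~\ref{universality SD-tNCWs thm} and to show that the SD-tNCW it produces has an equivalent \emph{one-state} SD-tNCW iff it is universal, so that minimization with bound $k=1$ already encodes universality. The point that makes this harder than the state-based case is that a one-state SD-tNCW need not recognize a trivial language, so the first step is to pin down exactly which languages one-state SD-tNCWs recognize. A one-state tNCW is necessarily deterministic: its unique state $q$ has, for each letter $\sigma$, a single self-loop $\zug{q,\sigma,q}$, which is either in $\alpha$ or not. Writing $\Gamma$ for the set of letters whose self-loop is an $\alpha$-transition, a run traverses $\alpha$ finitely often iff the word contains finitely many letters from $\Gamma$. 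Hence the one-state SD-tNCW languages are exactly the languages $L_\Gamma=\{w : w \text{ contains finitely many letters from } \Gamma\}$, one per $\Gamma$, and every such automaton is trivially SD.

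The reduction itself is from universality of NFWs, which is PSPACE-complete. I would first argue that we may assume w.l.o.g. that the input NFW $\N$ over $\Sigma$ satisfies $L(\N)\neq\emptyset$: if $L(\N)=\emptyset$ then $\N$ is not universal, so a polynomial-time preprocessing can replace every empty-language instance by a fixed non-empty, non-universal NFW, preserving the answer. Given such an $\N$, I apply the construction of Theorem~\ref{NFW to SD-tNCW operation thm} to obtain in linear time an SD-tNCW $\A$ over $\Sigma\cup\{\$\}$ with $L(\A)=\bowtie_\$\!\!(L(\N))$, and output the pair $(\A,1)$. It then remains to prove that $\A$ has an equivalent one-state SD-tNCW iff $\N$ is universal.

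For the easy direction, if $\N$ is universal then $L(\A)=\bowtie_\$\!\!(\Sigma^*)=(\Sigma\cup\{\$\})^\omega$, which is $L_\emptyset$ and hence one-state recognizable. For the converse, suppose $L(\A)=L_\Gamma$ for some $\Gamma\subseteq\Sigma\cup\{\$\}$. By the definition of $\bowtie_\$\!\!(\cdot)$ every word with finitely many $\$$'s lies in $L(\A)$; in particular $\sigma^\omega\in L(\A)$ for each $\sigma\in\Sigma$, which forces $\sigma\notin\Gamma$, so $\Gamma\subseteq\{\$\}$. If $\Gamma=\emptyset$ then $L(\A)$ is universal and so is $\N$, as desired. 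The case $\Gamma=\{\$\}$ is exactly where the assumption $L(\N)\neq\emptyset$ is needed: choosing $y\in L(\N)$, the word $(\$\cdot y)^\omega$ lies in $(\$\cdot L(\N))^\omega\subseteq L(\A)$ yet contains infinitely many $\$$'s, contradicting $L(\A)=L_{\{\$\}}$. Hence $\Gamma=\emptyset$ and $\N$ is universal, which establishes the reduction and PSPACE-hardness.

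I expect the main obstacle to be precisely this final case analysis on $\Gamma$. The vacuous membership of all words with finitely many $\$$'s immediately confines $\Gamma$ to $\{\$\}$, but excluding $\Gamma=\{\$\}$ genuinely relies on $L(\N)\neq\emptyset$, which is why the reduction must be arranged over non-empty instances; this is the subtlety that distinguishes the transition-based setting, where single-state languages are the nontrivial $L_\Gamma$, from the state-based one, where a one-state SD-NCW recognizes only $\emptyset$ or $(\Sigma\cup\{\$\})^\omega$ and the analogous argument is immediate. The same instance, read with the convention that transitions leaving an $\alpha$-state are the $\alpha$-transitions, carries the hardness over to state-based SD-NCWs as well.
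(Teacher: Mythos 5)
Your proof is correct and follows essentially the same route as the paper's: the same reduction (the SD-tNCW for $\bowtie_\$\!\!(L(\N))$ from Theorem~\ref{NFW to SD-tNCW operation thm} with bound $k=1$), the same observation that a one-state tNCW's language is determined by which self-loops lie in $\alpha$ and that only the $\$$-loop can be in $\alpha$, and the same contradiction via a word of the form $(\$\cdot y)^\omega$. The only (harmless) difference is how the last case is closed: the paper invokes the fact that NFW universality is PSPACE-hard already for NFWs accepting $\epsilon$ and uses the word $\$^\omega$, whereas you guarantee $L(\N)\neq\emptyset$ by a polynomial emptiness-check preprocessing and take $(\$\cdot y)^\omega$ for an arbitrary $y\in L(\N)$.
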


\begin{proof}
	We analyze the reduction from Theorem~\ref{universality SD-tNCWs thm} and show that an NFW $\N$ over $\Sigma$ is universal iff the SD-tNCW $\A$ over $\Sigma\cup \{\$\}$ that is constructed in Theorem~\ref{NFW to SD-tNCW operation thm} for $\bowtie_\$\!\!(L(\N))$ has an equivalent SD-tNCW with a single state. 
	
	First, note that $\N$ is universal iff $\bowtie_\$\!\!(L(\N))$ is universal. Hence, if $\N$ is universal, then $\A$ has an equivalent SD-tNCW with a single state.  
	For the other direction, assume that $\A$ has an equivalent SD-tNCW $\D$ with a single state, and assume towards contradiction that $\N$ is not universal. Then, there is a word $x \notin L(\N)$. Consider the word $(\$ x)^\omega$. By the definition of $\bowtie_\$\!\!(L(\N))$, we have that $(\$ x)^\omega \not \in L(\A)$. Hence, $(\$ x)^\omega$ is not accepted by $\D$ and so at least one of the letters $\sigma$ in the word $\$ x$ is such that the $\sigma$-labeled self-loop at the state of $\D$ is an $\alpha$-transition. 
	Now as $L(\A) = \bowtie_\$\!\!(L(\N))$, we get that $\D$ accepts all words that have no $\$$'s; in particular, the  $\alpha$-self-loop at the state of $\D$ must be $\$$-labeled.
	
	Recall that the universality problem for NFWs is PSPACE-hard already for NFWs $\N$ that accept $\epsilon$. For example, this is valid for the NFW that is generated in the generic reduction from a PSPACE Turing machine, and which  accepts all words that do not encode a valid computation of the Turing machine. Hence, for such an NFW $\N$, the word $\$^\omega$ is in $\bowtie_\$\!\!(L(\N))$, and should be accepted by $\D$, and we have reached a contradiciton.
\end{proof}

We continue to the D-to-SD minimization problem, showing it stays PSPACE-hard.

\begin{theorem}\label{tDCW to SD-tNCW thm}
	The D-to-SD minimization problem for co-B\"uchi automata is PSPACE-hard.
\end{theorem}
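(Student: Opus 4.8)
The plan is to establish hardness exactly as in the \buchi case (Theorem~\ref{tDBW to SD-tNBW thm}), replacing the encoding $\infty(\$\cdot R\cdot \$)$ by the co-\buchi encoding $\bowtie_\$\!\!(R)$ and using Theorems~\ref{NFW to SD-tNCW operation thm} and~\ref{SD-tNCW to NFW operation thm} in place of Theorems~\ref{NFW to SD-tNBW operation thm} and~\ref{SD-tNBW to NFW operation thm}. I would reduce from the D-to-N minimization problem for automata on finite words, which is PSPACE-hard by~\cite{JR93}. A pleasant difference from the \buchi setting is that the construction of Theorem~\ref{NFW to SD-tNCW operation thm} already preserves determinism: it leaves the $\Sigma$-transitions of $\N$ untouched and only adds $\$$-transitions, all directed to $Q_0$; hence when $\N$ is a DFW, $Q_0$ is a singleton and the resulting tNCW is a tDCW. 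Thus, unlike in Theorem~\ref{tDBW to SD-tNBW thm}, no modified construction is needed, and applying Theorem~\ref{NFW to SD-tNCW operation thm} to a DFW $\A$ for $R$ yields a tDCW $\A'$ for $\bowtie_\$\!\!(R)$ with $|\A'|=|\A|$.

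To use the sharp size bound ``$|\N|\le |\A|$'' of Theorem~\ref{SD-tNCW to NFW operation thm}, I need the starting language $R$ to have a bad infix, so I would first argue that D-to-N minimization stays PSPACE-hard on DFWs recognizing a language with a bad infix. Given a hard instance $\A_0$ over $\Sigma_0$ for $R_0$ (which by~\cite{JR93} can be taken to have no good prefixes) together with a budget $k$, introduce a fresh letter $\#\notin\Sigma_0\cup\{\$\}$ and keep the same language $R_1=R_0$ over $\Sigma_1=\Sigma_0\cup\{\#\}$. Every word containing $\#$ lies outside $R_1$, so $\#$ is a bad infix of $R_1$, while $R_1$ still has no good prefixes. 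Since any total automaton for $R_1$ rejects every $\#$-word through a single reusable rejecting sink, the minimal total DFW and NFW for $R_1$ differ from those for $R_0$ by a fixed additive constant that can be absorbed into the budget; I would carry out this bookkeeping so that $R_0$ has an NFW with at most $k$ states iff $R_1$ has an NFW with at most $k'$ states, for an adjusted budget $k'$.

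Given such a DFW $\A$ for a language $R$ with a bad infix and a budget $k$, the reduction returns the tDCW $\A'$ for $\bowtie_\$\!\!(R)$ obtained from Theorem~\ref{NFW to SD-tNCW operation thm}, together with $k$. For correctness, if $\B$ is an NFW equivalent to $\A$ with $|\B|\le k$, then Theorem~\ref{NFW to SD-tNCW operation thm} turns it into an SD-tNCW of size $|\B|\le k$ whose language is $\bowtie_\$\!\!(L(\B))=\bowtie_\$\!\!(R)=L(\A')$. Conversely, if $\B'$ is an SD-tNCW for $\bowtie_\$\!\!(R)$ with $|\B'|\le k$, then since $R$ has a bad infix, Theorem~\ref{SD-tNCW to NFW operation thm} produces an NFW for $R$ of size at most $|\B'|\le k$; as $L(\A)=R$, this is an NFW equivalent to $\A$ with at most $k$ states. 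This establishes the transition-based case.

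Finally, I would transfer the result to state-based co-\buchi automata by the same device used at the end of the proof of Theorem~\ref{tDBW to SD-tNBW thm}: the transition-based construction is turned into a state-based one by adding a single distinguished state, so the reduction returns a (state-based) DCW for $\bowtie_\$\!\!(R)$ together with the budget $k+1$, and the reverse translation of Theorem~\ref{SD-tNCW to NFW operation thm}, applied to the transition-based view of an SD-NCW in which transitions leaving $\overline{\alpha}$-components are treated as $\alpha$-transitions, removes that extra state when $R$ has a bad infix. The main obstacle is the exact state accounting: both the bad-infix reduction and the passage between transition-based and state-based acceptance introduce $\pm 1$ discrepancies, and the argument only yields PSPACE-hardness of the \emph{minimization} question (rather than of mere emptiness or universality) if these offsets are tracked precisely and matched against the reverse direction's sharp ``$|\N|\le|\A|$'' guarantee.
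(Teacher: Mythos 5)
Your core reduction is the same as the paper's: starting from a DFW $\A$ for a language $R$ with a bad infix, apply the determinism-preserving construction of Theorem~\ref{NFW to SD-tNCW operation thm} to obtain a tDCW for $\bowtie_\$\!\!(R)$ with unchanged budget $k$, use Theorem~\ref{SD-tNCW to NFW operation thm} with the sharp bound $|\N|\leq|\A|$ for the converse direction, and handle state-based acceptance by adding one distinguished state and shifting the budget to $k+1$. All of that is correct and matches the paper.

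The genuine gap is in your preprocessing step, where you try to \emph{derive} that D-to-N minimization remains PSPACE-hard for languages with a bad infix by adding a fresh letter $\#$ to the alphabet while keeping $R_1=R_0$. The paper does not need this step: it invokes \cite{JR93} directly for instances whose languages already have bad infixes. Your padding argument, as stated, does not yield a many-one reduction, because the ``fixed additive constant'' is not fixed. Since the transition function is total, any NFW for $R_1$ over $\Sigma_0\cup\{\#\}$ must contain a state with empty residual language (the target of a $\#$-transition out of any reachable state), whereas a minimal NFW for $R_0$ over $\Sigma_0$ may or may not contain such a state --- both situations occur among languages with no good prefixes (compare $(a+b)^*\cdot a$ with $a\cdot(a+b)^*\cdot b$). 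Hence the minimal NFW for $R_1$ has either $m$ or $m+1$ states, where $m$ is the minimal NFW size for $R_0$, and which case holds depends on $R_0$ in a way the reduction cannot determine; neither budget $k'=k$ nor $k'=k+1$ is correct in both directions. To repair this you should argue (or cite) that the hard instances produced by \cite{JR93} themselves recognize languages with bad infixes, rather than manufacturing bad infixes generically. A further, minor, imprecision: in the state-based adaptation the transition-based view of an NCW should declare as $\alpha$-transitions those transitions that \emph{touch $\alpha$-states}, not those leaving $\overline{\alpha}$-components; the latter is a normalization step, not the state-to-transition conversion.
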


\begin{proof}
	We reduce from the D-to-N minimization problem for automata on finite words, which is already PSPACE-hard for languages that have bad infixes \cite{JR93}. We start with co-B\"uchi automata with transition-based acceptance. 
	
	Consider the construction from Theorem~\ref{NFW to SD-tNCW operation thm}. Recall it takes an NFW $\A$ as input, returns an SD-tNCW $\A'$ of the same size for $\mbox{$\bowtie_\$\!\!(L(\A))$}$, and preserves determinism. 
	
	Consider a DFW $\A$ over $\Sigma$ such that $L(\A)$ has a bad infix, and an integer $k$. The reduction returns the SD-tNCW $\A'$ constructed from $\A$ in Theorem~\ref{NFW to SD-tNCW operation thm}, and the integer $k$. 
	Recall that $L(\A')=\mbox{$\bowtie_\$\!\!(L(\A))$}$, and that the construction  in Theorem~\ref{NFW to SD-tNCW operation thm} preserves determinism. Thus, the automaton $\A'$ is really a tDCW. 

	We prove next that the reduction is correct. That is, the DFW $\A$ has an equivalent NFW with at most $k$ states iff the tDCW $\A'$ has an equivalent SD-tNCW with at most $k$ states.
	For the first direction, if $\B$ is an NFW equivalent to $\A$ whose size is at most $k$, then, by applying to it the construction from Theorem~\ref{NFW to SD-tNCW operation thm}, we get an SD-tNCW $\B'$ whose size is at most $k$, and $L(\B') = \mbox{$\bowtie_\$\!\!(L(\B))$} = \mbox{$\bowtie_\$\!\!(L(\A))$} = L(\A')$.  
	
	Conversely, if $\B'$ is an SD-tNCW for $\mbox{$\bowtie_\$\!\!(L(\A))$}$ whose size is at most $k$, then as $L(\A)$ has bad infixes, we get by Theorem~\ref{SD-tNCW to NFW operation thm} that there is an NFW $\B$ for $L(\A)$ whose size is at most $k$, and we are done.
	
Finally, as in \buchi automata, it is not hard to see that the above arguments can be adapted to state-based automata, and so the D-to-SD minimization problem is PSPACE-hard also for co-\buchi automata with state-based acceptance. Essentially, given a DFW $\A$ and an integer $k$, the reduction returns a DCW for $\mbox{$\bowtie_\$\!\!(L(\A))$}$ and the integer $k+1$.

The above reduction exists since the construction from Theorem~\ref{NFW to SD-tNCW operation thm}
can be adapted to state-based automata, essentially, by letting the co-\buchi automaton have one additional $\alpha$ state that behaves as the initial states, and that is reachable from every state in $Q\setminus F$ upon reading $\$$.

Also, Theorem~\ref{SD-tNCW to NFW operation thm} can be adapted to the state-based setting. Indeed, if $\A$ is an NCW for $\mbox{$\bowtie_\$\!\!(R)$}$ for some language language $R\subseteq \Sigma^*$ that has a bad infix $x$, then one can construct an NFW $\N$ for $R$ with at most $|\A| - 1$ states. 
To see why, we view $\A$ as a transition-based automaton $\A_t$ by considering  transitions that touch $\alpha$ states as $\alpha$ transitions. Then, we consider the proof of Theorem~\ref{SD-tNCW to NFW operation thm} when applied on $\A_t$.  
We claim that all states $p$ that are $\alpha$ states of $\A$ can be removed from $\N$ without affecting its language. Indeed, the $\overline{\alpha}$-component of such a state $p$ in $\A_t$ is $\{p\}$ and contains no transitions, in paritcular, no $\$$-transitions. 
Also, note that such states $p$ are distinct from the state $q$ that has a run that does not traverse $\alpha$ transitions on $x^\omega$.
Thus, they can be removed from $\N$ along with the state $q_{rej}$.  Hence, we get in this case that $|\N|\leq |\A| - 1$.
\end{proof}

\section{Semantically Deterministic Weak Automata}
\label{weak sec}

By \cite{AKL21}, SD-NWWs need not be DBP or even HD. For completeness we describe here the example from~\cite{AKL21}, as it highlights the challenges in SD-NWW determinization. Consider the automaton $\A$
in Figure~\ref{weak fig}.  

\begin{figure}[htb]	
	\begin{center}
		\includegraphics[height=2.9cm]{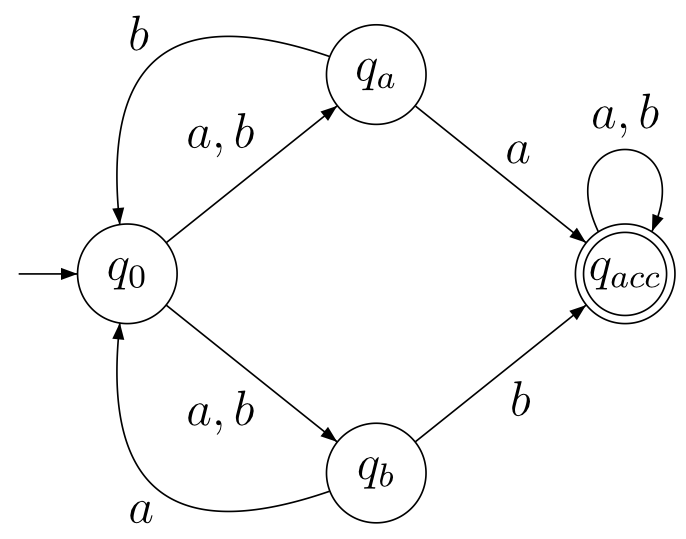}
		\caption{An SD-NWW that is not DBP.}
		\label{weak fig}
	\end{center}
\end{figure}

It is easy to see that $\A$ is weak, and all its states are universal,
and so it is SD. On the other hand, $\A$ is not HD as every strategy has a word with which it does not reach $q_{acc}$ -- a word that forces every visit in $q_a$ and $q_b$ to be followed by a visit in $q_0$. Below we show that despite not being DBP, SD-NWWs can be determinized in polynomial time. Essentially, our proof is based on redirecting transitions of the SD-NWW to deep components in the automaton. In our example, note that while the SD-NWW $\A$ is not HD, it has a deterministic state $q_{acc}$ that recognizes $L(\A)$.

Consider an SD-NWW $\A= \zug{\Sigma, Q, q_0, \delta, \alpha}$. We denote the set of $\A$'s SCCs by $\C(\A)$, and the SCC containing a state $q$ by $C(q)$. Let $C_1 \leq C_2 \leq \cdots \leq C_m$ be a total order on the SCCs of $\A$, extending the partial order induced by $\delta$. That is, if $q' \in \delta(q,\sigma)$, for some letter $\sigma$, then $C(q) \leq C(q')$. When $C \leq C'$, we say that $C'$ is {\em deeper than\/} $C$. Thus, states along runs of a weak automaton proceed from SCCs to deeper ones, and eventually get stuck in some SCC.

If we had an algorithm that checks language equivalence between states in $\A$ in polynomial time, we could have a polynomial determinization algorithm that defines the $\sigma$-successor of a state as the deepest state among all states equivalent to its $\sigma$-successors (since $\A$ is SD, all these successors agree on their language). Since we still do not have such an algorithm (in fact, it would follow from our construction), we approximate language equivalence by an equivalence that follows from the semantic determinism of $\A$.  

For two states $s_1, s_2 \in Q$, we say that $s_1$ and $s_2$ are {\em $\delta$-close}, if there is a state $q \in Q$ and a word $w \in \Sigma^*$ such that $s_1,s_2 \in \delta(q,w)$. Note that the $\delta$-close relation refines equivalence, yet the converse does not hold. Indeed, the SDness property implies that $s_1 \sim_\A s_2$ for all $\delta$-close states $s_1$ and $s_2$. 

\begin{lemma}
	\label{deltaeqns}
	If $s_1$ and $s_2$ are $\delta$-close, then there is a state $q$ and word $w$ of length at most $|Q|^2$ such that $s_1,s_2 \in \delta(q,w)$.
\end{lemma}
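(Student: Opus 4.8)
The plan is to run a product/pigeonhole argument on $Q \times Q$. By the definition of $\delta$-closeness we are handed a state $q \in Q$ and a word $w \in \Sigma^*$ with $s_1, s_2 \in \delta(q, w)$; unpacking $\delta$, this means there are two runs $r = r_0, r_1, \ldots, r_{|w|}$ and $r' = r'_0, r'_1, \ldots, r'_{|w|}$ of $\A$ on $w$, both starting at $r_0 = r'_0 = q$, with $r_{|w|} = s_1$ and $r'_{|w|} = s_2$. The key move is to view these two runs as a single walk in the product space $Q \times Q$: consider the sequence of pairs $(r_i, r'_i)$ for $0 \le i \le |w|$. It starts at the diagonal pair $(q,q)$, ends at $(s_1, s_2)$, and each consecutive pair is obtained by reading the \emph{same} letter $w[i+1]$ in both coordinates, where each coordinate respects $\delta$.

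First I would show that any such synchronized walk of length at least $|Q|^2$ can be strictly shortened. If $|w| \ge |Q|^2$, then the walk $(r_0,r'_0), \ldots, (r_{|w|}, r'_{|w|})$ has at least $|Q|^2 + 1$ vertices, while the product space has only $|Q \times Q| = |Q|^2$ elements, so by the pigeonhole principle there are indices $i < j$ with $(r_i, r'_i) = (r_j, r'_j)$. I would then excise the loop between positions $i$ and $j$: set $w' = w[1,i] \cdot w[j+1, |w|]$ and splice the runs by keeping the common prefix of length $i$ and appending the suffix starting at position $j$ in each coordinate. Because the pair at position $i$ equals the pair at position $j$, the spliced sequences are again legal runs of $\A$ (each coordinate still follows $\delta$ at the splice point), they still both begin at $q$ (the prefix is untouched), and they still end at $s_1$ and $s_2$, respectively. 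Hence $s_1, s_2 \in \delta(q, w')$ with $|w'| < |w|$.

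Finally I would iterate this shortening step, always keeping the same $q$, until the word has length at most $|Q|^2 - 1 \le |Q|^2$; at that point the walk visits at most $|Q|^2$ distinct pairs with no forced repetition, and the bound is met. I do not expect a genuine obstacle here: the only point needing care is checking that excising a repeated product-pair leaves two valid \emph{synchronized} runs with the prescribed endpoints, and this is immediate once the pair of runs is treated as a single walk in $Q \times Q$. The constant $|Q|^2$ in the statement is exactly the size of this product state space.
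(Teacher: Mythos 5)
Your proposal is correct and follows essentially the same route as the paper: both arguments track the two runs as a single synchronized walk over pairs of states, invoke the pigeonhole principle on the $|Q|^2$ possible pairs to find a repeated pair $\zug{p_1,p_2}$, excise the loop between the two occurrences, and iterate until the word has length at most $|Q|^2$. The only cosmetic difference is that you phrase the repetition explicitly in the product space $Q\times Q$, while the paper phrases it as a factorization $w = x\cdot y\cdot z$ with $p_1,p_2\in\delta(q,x)$; these are the same argument.
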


\begin{proof}
	Consider $\delta$-close states $s_1$ and $s_2$. Let $q$ and $w$ be such that $s_1,s_2 \in \delta(q,w)$. If $|w| > |Q|^2$, then in the runs from $q$ to $s_1$ and from $q$ to $s_2$ there must be a pair of states $\zug{p_1,p_2}$ that repeats twice, and we can shorten $w$ by removing the subword between the two occurrences of $\zug{p_1,p_2}$. Formally, there is a partition of $w=x \cdot y \cdot z$ such that
$p_1,p_2 \in \delta(q,x)$, $p_1 \in \delta(p_1,y)$, $p_2 \in \delta(p_2,y)$, $s_1 \in \delta(p_1,z)$, and $s_2 \in \delta(p_2,z)$. Then, $s_1,s_2 \in \delta(q,x \cdot z)$. This process can repeat until $w$ is of length at most $|Q|^2$.
\end{proof}

In order to calculate the $\delta$-close relation in polynomial time, we define a sequence $H_0,H_1,H_2, \ldots \subseteq Q \times Q$ of relations, where $\zug{s_1,s_2} \in H_i$ iff there is a state $q$ and word $w$ of length at most $i$ such that  $s_1,s_2 \in \delta(q,w)$.  By Lemma~\ref{deltaeqns}, we are guaranteed to reach a fixed point after at most $|Q|^2$ iterations. The relations $H_i$ are defined as follows. 

\begin{itemize}
	\item
	$H_0 = \{\zug{q,q} : q \in Q\}$.
	\item
	For $i\geq 0$, we define $H_{i+1}=H_i \cup \{\zug{s_1,s_2} :$ there is  $\zug{q_1,q_2} \in H_i$ and letter $\sigma \in \Sigma$ such that
	$s_1 \in \delta(q_1,\sigma)$ and $s_2 \in \delta(q_2,\sigma)\}$.
\end{itemize}

Let $j \geq 0$ be such that $H_{j+1}=H_j$. It is not hard to see that $H_j$ is the $\delta$-close relation.
While the $\delta$-close relation is reflexive and symmetric, it is not transitive.
Now, let $H \subseteq Q \times Q$ be the closure of $H_j$ under transitivity.
That is, $\zug{s_1,s_2} \in H$ iff there is $k \geq 2$ and states $q_1,q_2,\ldots,q_k$ such that $q_1=s_1$, $q_k=s_2$ and
$\zug{q_i,q_{i+1}} \in H_j$ for all $1 \leq i < k$.  
	%
	%
	%

The following lemma implies that $H$ propagates to successor states.
\begin{lemma}
	\label{H propa lemma}
	If $H(s,s')$, then for every letter $\sigma \in \Sigma$ and states $q \in \delta(s,\sigma)$ and $q' \in \delta(s',\sigma)$, we have that $H(q,q')$.
\end{lemma}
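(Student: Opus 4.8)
The plan is to prove the statement in two stages: first establish the propagation property for the \emph{base} relation $H_j$ (the $\delta$-close relation), and then lift it to its transitive closure $H$ by a node-by-node chaining argument. The key realization is that propagation for $H_j$ is essentially hard-wired into the fixed-point that defines it, while the transitive-closure step is a purely combinatorial bootstrap that requires no new semantic input.

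For the base step, I would argue as follows. Suppose $\zug{p_1,p_2} \in H_j$, fix a letter $\sigma \in \Sigma$, and pick any $t_1 \in \delta(p_1,\sigma)$ and $t_2 \in \delta(p_2,\sigma)$. By the defining recursion, $\zug{t_1,t_2}$ is exactly of the form added when passing from $H_j$ to $H_{j+1}$, so $\zug{t_1,t_2} \in H_{j+1}$. Since $j$ was chosen so that $H_{j+1}=H_j$, we conclude $\zug{t_1,t_2} \in H_j$. (Equivalently, one can unwind the meaning of $H_j$: if $p_1,p_2 \in \delta(r,w)$ for some state $r$ and word $w$, then $t_1,t_2 \in \delta(r,w\cdot\sigma)$, so $t_1$ and $t_2$ are $\delta$-close, i.e.\ $\zug{t_1,t_2}\in H_j$.) Thus $H_j$ itself propagates to successors.

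For the lifting step, assume $H(s,s')$ and fix $\sigma \in \Sigma$, $q \in \delta(s,\sigma)$, and $q' \in \delta(s',\sigma)$. By definition of the transitive closure, there are $k \geq 2$ and states $q_1,\ldots,q_k$ with $q_1=s$, $q_k=s'$, and $\zug{q_i,q_{i+1}} \in H_j$ for all $1 \leq i < k$. I would now choose a $\sigma$-successor $p_i \in \delta(q_i,\sigma)$ for each $i$, pinning down the endpoints by setting $p_1 = q$ and $p_k = q'$ (which is legitimate since $q \in \delta(s,\sigma)=\delta(q_1,\sigma)$ and $q' \in \delta(s',\sigma)=\delta(q_k,\sigma)$), and choosing an arbitrary $p_i \in \delta(q_i,\sigma)$ for $1 < i < k$; such a successor exists because the transition function of $\A$ is total. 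Applying the base step to each consecutive pair $\zug{q_i,q_{i+1}} \in H_j$ with the successors $p_i,p_{i+1}$ yields $\zug{p_i,p_{i+1}} \in H_j$ for all $1 \leq i < k$. Hence $p_1=q,\,p_2,\ldots,p_k=q'$ is a chain witnessing $H(q,q')$, as required.

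I do not expect a genuine obstacle here; the only point demanding care is the transitive-closure step, where one might worry that distinct intermediate states $q_i$ need not share a common successor. The resolution is that the argument never asks for shared successors: it suffices to pick \emph{one} $\sigma$-successor per node of the chain and invoke $H_j$-propagation locally on each link, with totality of $\delta$ guaranteeing the interior choices exist. The whole proof is therefore short, and its substance lies entirely in the base step, which in turn reduces to the fixed-point identity $H_{j+1}=H_j$.
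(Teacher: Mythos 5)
Your proposal is correct and follows essentially the same route as the paper: first show that the fixed point $H_j$ propagates to $\sigma$-successors (since any new pair would land in $H_{j+1}=H_j$), then chain this along the witnessing sequence for the transitive closure, picking one $\sigma$-successor per intermediate state. Your explicit pinning of the endpoints to $q$ and $q'$ and the appeal to totality of $\delta$ for the interior choices are minor clarifications of details the paper leaves implicit.
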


\begin{proof}
		Consider the fixed point relation $H_j$. By definition, if $H_j(s,s')$, then for every letter $\sigma \in \Sigma$ and states $q \in \delta(s,\sigma)$ and $q' \in \delta(s',\sigma)$, we have that $H_{j+1}(q,q')$, and since $H_j$ is the fixed point, then $ H_j(q,q')$. Then, the lemma follows from the definition of $H$ as the closure of $H_j$ under transitivity.
	%
	Indeed, if $H(s, s')$, then there is $k\geq 2$ and states $q_1, q_2, \ldots, q_k$ such that $q_1 = s, q_k = s'$ and $H_j(q_{i}, q_{i+1})$ for all $1\leq i < k$. Now if $s_1, s_2, \ldots, s_k$ are such that $s_i \in \delta(q_i, \sigma)$, then by the above considerations, we have that $H_j(s_i, s_{i+1})$ for all $1\leq i < k$. Hence, the definition of $H$ implies that $H(s_1, s_k)$, and we are done.
\end{proof}

It is easy to see that $H$ is an equivalence relation. 
Let ${\cal P}=\{P_1,\ldots,P_k\}$ be the set of the equivalence classes of $H$.
For each equivalence class $P \in {\cal P}$, we fix the {\em representative of\/} $P$ as some state in $P \cap C$, where $C \in \C(\A)$ is the deepest SCC that intersects $P$. Let $p_1,\ldots,p_k$ be the representatives of the sets in ${\cal P}$.
For a state $q \in Q$, let $\widetilde{q}$ denote the representative of the set $P \in {\cal P}$ with $q \in P$.
Note that as $H$ refines $\sim_\A$, we have that $q\sim_\A \widetilde{q}$.

\begin{theorem}
\label{sd nww det}
	Given an SD-NWW $\A$ with state space $Q$, we can construct, in polynomial time,
	an equivalent DWW $\D$ with state space $Q'$, for $Q' \subseteq Q$.
\end{theorem}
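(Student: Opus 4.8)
The plan is to let $\D$ be the automaton whose state space is the set $\{p_1,\dots,p_k\}$ of representatives, with initial state $\widetilde{q_0}$ and a deterministic transition function that redirects each transition of $\A$ to the representative of its target: $\delta_\D(p,\sigma)=\widetilde q$ for an arbitrary $q\in\delta(p,\sigma)$. This is well defined, since any two $\sigma$-successors of $p$ are $\delta$-close, hence $H$-equivalent, hence share a representative (equivalently, apply Lemma~\ref{H propa lemma} to $H(p,p)$). A representative $p$ is declared accepting in $\D$ exactly when $C(p)$ is an accepting SCC of $\A$. Since $Q'=\{p_1,\dots,p_k\}\subseteq Q$ and, by Lemma~\ref{deltaeqns}, $H$ and thus the representatives are computable in polynomial time, $\D$ meets the stated size and complexity requirements.

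First I would verify that $\D$ is a DWW. It is deterministic by construction. For weakness, the crucial monotonicity observation is that every edge of $\D$ goes no higher in $\A$'s SCC order: if $p'=\widetilde q$ with $q\in\delta(p,\sigma)$, then $C(p)\le C(q)\le C(p')$, the first inequality because the order extends $\delta$ and the second because $\widetilde q$ lies in the deepest SCC meeting $q$'s $H$-class. Consequently any two representatives in the same SCC of $\D$ have equal $C(\cdot)$, so they agree on acceptance and each SCC of $\D$ is uniform.

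The heart of the argument is an invariant linking arbitrary runs of $\A$ to the single run of $\D$. Writing $p^0,p^1,\dots$ for the run of $\D^p$ on $w$, and $q^{i+1}\in\delta(p^i,w[i+1])$ for a witnessing $\A$-successor with $\widetilde{q^{i+1}}=p^{i+1}$, I would prove by induction, using Lemma~\ref{H propa lemma} together with $H(q^{i+1},p^{i+1})$, that every run $\rho_0,\rho_1,\dots$ of $\A^p$ on $w$ satisfies $H(\rho_i,p^i)$, and hence $\rho_i\sim_\A p^i$, for all $i$. Monotonicity makes the depths $C(p^i)$ nondecreasing, so they stabilize at a single SCC $C^*$ of $\A$; since the recurrent states of the run of $\D^p$ all sit in $C^*$ and $\D$ is weak, the run of $\D^p$ accepts iff $C^*$ is accepting. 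It thus remains to show that $w\in L(\A^p)$ iff $C^*$ is accepting, from which $L(\D)=L(\A)$ follows by instantiating $p=\widetilde{q_0}\sim_\A q_0$.

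For the forward direction, fix $N$ past stabilization, so $p^N\in C^*$. An accepting run $\rho$ of $\A^p$ on $w$ gives, via $\rho_N\sim_\A p^N$, that $w[N+1,\infty]\in L(\A^{p^N})$; any accepting run of $\A^{p^N}$ on this suffix is trapped in some SCC $C''$, and two pinching bounds—$C^*=C(p^N)\le C''$ by monotonicity, and $C''\le C^*$ because the invariant keeps that run inside the $H$-class of $p^j$ whose deepest SCC is $C^*$—force $C''=C^*$, so $C^*$ is accepting. For the converse, if $C^*$ is accepting then the invariant also lets me build step by step a run from $p^N$ that never leaves $C^*$ (each successor is forced into $C^*$ by the same pinching), witnessing $w[N+1,\infty]\in L(\A^{p^N})$; prefixing it with any $\A^p$-run on $w[1,N]$, whose endpoint is $\sim_\A p^N$ by the invariant, yields an accepting run of $\A^p$ on $w$. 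The main obstacle is exactly this last equivalence: plain $\sim_\A$ does not respect SCC membership, so the proof must exploit the choice of representatives in the deepest components, together with the fact that runs only descend in the SCC order, in order to rule out accepting runs trapped strictly above or strictly below $C^*$.
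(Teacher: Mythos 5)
Your proposal is correct and follows essentially the same route as the paper: the same representative-based construction, the same monotonicity argument for weakness, the same $H$-invariant propagated along runs via Lemma~\ref{H propa lemma}, and the same ``pinching'' of runs into the stabilized SCC using the fact that representatives sit in the deepest SCC of their $H$-class. The only cosmetic difference is that you phrase both inclusions through a single claim ($w\in L(\A^p)$ iff the stabilized SCC is accepting) rather than two separate contradiction arguments, but the underlying steps are identical.
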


\begin{proof}
	
	Given $\A=\zug{\Sigma,Q,q_0,\delta,\alpha}$, we define $\D=\zug{\Sigma,Q',q'_0,\delta',\alpha \cap Q'}$, where
	\begin{itemize}
		\item
		$Q'=\{p_1,\ldots,p_k\}$. Note that indeed $Q' \subseteq Q$.
		\item
		$q'_0=\widetilde{q_0}$. 
		\item
		For $p \in Q'$ and $\sigma \in \Sigma$, we define $\delta'(p,\sigma)=\tilde{q}$, for some $q \in \delta(p,\sigma)$.
		Note that all the states in $\delta(p,\sigma)$ are $\delta$-close, and thus  belong to the same set in ${\cal P}$. Hence, the choice of $q$ is not important, as $\delta'(p,\sigma)$ is the representative of this set.
	\end{itemize}

	We prove that $\D$ is a DWW equivalent to $\A$.
First, in order to see that $\D$ is weak, consider states $p,p'$ such that $p' \in \delta'(p,\sigma)$. Thus, $p'$ is the representative of a state $q \in \delta(p,\sigma)$. As $\A$ is weak, we have that $C(p) \leq C(q)$. As $p'=\tilde{q}$, we have that $C(q) \leq C(p')$. Hence $C(p) \leq C(p')$, and we are done.

	We continue and prove that $L(\A) = L(\D)$. We first prove that $L(\A) \subseteq L(\D)$. Consider a word $w$ and assume that $w \in L(\A)$. Let $r=q_0,q_1,q_2,\ldots$ be an accepting run of $\A$ on $w$, and let $r'=s_0,s_1,s_2,\ldots$ be the run of $\D$ on $w$. If $r'$ is accepting, we are done. Otherwise, namely if $r'$ is rejecting, we point to a contradiction.
	Let $j \geq 0$ be the index in which $r'$ visits only states in $sinf(r')$. Note that all the states $s_j,s_{j+1},s_{j+2},\ldots$ belong to some SCC $C$ of $\A$. Since we assume that $r'$ is rejecting, and acceptance in $\D$ is inherited from $\A$, we get that $C$ is a rejecting SCC of $\A$. We claim that all the runs of $\A^{s_j}$ on the suffix $w[j+1,\infty]$ of $w$ are stuck in $C$. Thus, $w[j+1,\infty] \not \in L(\A^{s_j})$.  On the other hand, we claim that for all $i \geq 0$, we have that $q_i \sim_\A s_i$. Then, however, $w[j+1,\infty] \not \in L(\A^{q_j})$, contradicting the fact that $r$ is accepting:
	
	The easy part is to prove that for all $i \geq 0$, we have that $q_i \sim_\A s_i$. Indeed, it follows from the fact that for all $i \geq 0$, we have that $H(q_i,s_i)$ and the fact that $H$ refines $\sim_\A$. The proof proceeds by an induction on $i$. First, as for every state $q \in Q$, we have that $H(q,\widetilde{q})$, then $H(q_0,q'_0)$, and so the claim follows from $s_0$ being $q'_0$. Assume now that $H(q_i,s_i)$. Recall that $s_{i+1}$ is $\widetilde{q}$ for some $q \in \delta(s_i, w[i+1])$. Also, $q_{i+1} \in \delta(q_i, w[i+1])$. Then, by Lemma~\ref{H propa lemma}, we have that $H(q_{i+1}, q)$.  Since, in addition, we have that $H(q, \widetilde{q})$, then the transitivity of $H$ implies that $H(q_{i+1},s_{i+1})$, and we are done. 
	
	We proceed to the other part, namely, proving that  
	all the runs of $\A^{s_j}$ on the suffix $w[j+1,\infty]$ of $w$ are stuck in $C$.
    Let $u_{j+1},u_{j+2},\ldots$ be such that for all $i \geq 1$, it holds that $u_{j+i} \in \delta(s_{j+i - 1}, w[j+i])$ and $H(u_{j+i}, s_{j+i})$. Note that such states exist, as $s_{j+i-1} \xrightarrow{w[j+i]} s_{j+i}$ is a transition of $\D$ and so $s_{j+i} = \widetilde{q}$ for all $q\in \delta(s_{j+i - 1}, w[j+i])$.
	Consider a run $r''=v_0,v_1,v_2,\ldots$ of $\A^{s_j}$ on $w[j+1,\infty]$. Note that $v_0=s_j$, and so $H(v_0,s_j)$.
	Therefore, by Lemma~\ref{H propa lemma}, we have that $H(v_1,u_{j+1})$, implying $H(v_1,s_{j+1})$.
	By repeated applications of Lemma~\ref{H propa lemma}, we get that $H(v_i,u_{j+i})$, implying $H(v_i,s_{j+i})$, for all $i \geq 1$. 
	Assume now by way of contradiction that the run $r''$ leaves the SCC $C$, and so there is $i \geq 1$ such that $C(v_i) \not \leq C(s_{j+i})$. 
	As $H(v_i,s_{j+i})$, the states $v_i$ and $s_{j+i}$ are in the same equivalence class $P\in {\cal P}$. Then, the definition of $Q'$ implies that $C(s_{j+i}) \geq C(v_i)$, and we have reached a contradiction.

	It is left to prove that $L(\D) \subseteq L(\A)$. The proof follows similar considations and is detailed next. 
	Consider a word $w \in L(\D)$. Let $r'=s_0,s_1,s_2,\ldots$ be the run of $\D$ on $w$, and let $j \geq 0$ be the index in which $r'$ visits only states in $sinf(r')$; in particular, as argued above, $sinf(r')$ is included in some SCC $C$ of $\A$. Thus, all the states $s_j,s_{j+1},s_{j+2},\ldots$ are in $C$. Since $w\in L(\D)$, then $r'$ is accepting, and so $C$ is an accepting SCC of $\A$.
	As in the previous direction, it holds that $s_j$ is $\A$-equivalent to all the states in $\delta(q_0, w[1, j])$. Hence, to conclude that $w\in L(\A)$, we show that  there is an accepting run of $\A^{s_j}$ on $w[j+1,\infty]$.
	Assume by way of contradiction that all the runs of $\A^{s_j}$ on $w[j+1,\infty]$ are rejecting. In particular, all these runs leave the SCC $C$. As in the previous direction, this contradicts the choice of the states $s_j,s_{j+1},s_{j+2},\ldots$ as representatives of equivalence classes in ${\cal P}$.
\end{proof}

Since DWWs can be complemented by dualization (that is, by switching $\alpha$ and $\bar{\alpha}$), Theorem~\ref{sd nww det} implies the following.

\begin{theorem}
\label{sd nww comp}
Given an SD-NWW $\A$ with $n$ states, we can construct, in polynomial time, an SD-NWW (in fact, a DWW) that complements $\A$. 
\end{theorem}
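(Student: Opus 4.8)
The plan is to chain the determinization of Theorem~\ref{sd nww det} with the standard dualization of deterministic weak automata, so that the result is essentially a one-line corollary. First I would apply Theorem~\ref{sd nww det} to the given SD-NWW $\A$, obtaining in polynomial time an equivalent DWW $\D = \zug{\Sigma, Q', q'_0, \delta', \alpha'}$ with $L(\D) = L(\A)$ and $Q' \subseteq Q$. Since every deterministic automaton has a single run on each word and is in particular SD, it then suffices to complement $\D$ within the class of DWWs and observe that the resulting automaton is again deterministic, hence an SD-NWW.

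The complementation step is dualization: I would define $\widetilde{\D} = \zug{\Sigma, Q', q'_0, \delta', Q' \setminus \alpha'}$, obtained from $\D$ by replacing the accepting set $\alpha'$ by its complement $Q' \setminus \alpha'$, and leaving the structure untouched. The first thing to check is that $\widetilde{\D}$ is again weak: in a weak automaton every SCC $C$ satisfies either $C \subseteq \alpha'$ or $C \cap \alpha' = \emptyset$, and each of these conditions survives (with the roles interchanged) when we pass to the complemented acceptance set, so every SCC of $\widetilde{\D}$ is either contained in $Q' \setminus \alpha'$ or disjoint from it.

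To verify $L(\widetilde{\D}) = \overline{L(\D)}$, I would use the fact that $\D$ is deterministic, so each word $w$ induces a unique run, which, the automaton being weak, eventually gets trapped in a single SCC $C$. Then $w \in L(\D)$ iff $C$ is accepting in $\D$, i.e.\ $C \subseteq \alpha'$, whereas $w \in L(\widetilde{\D})$ iff $C \subseteq Q' \setminus \alpha'$, i.e.\ $C \cap \alpha' = \emptyset$. Because $\D$ is weak, these two conditions are exactly complementary, so $\widetilde{\D}$ accepts precisely the words that $\D$ rejects, giving $L(\widetilde{\D}) = \overline{L(\D)} = \overline{L(\A)}$. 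As $\widetilde{\D}$ is deterministic, it is in particular an SD-NWW.

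There is no genuine obstacle here, as the statement is a direct consequence of Theorem~\ref{sd nww det}. The only point meriting (minor) care is the claim that dualization actually complements the language: this relies on both determinism (uniqueness of the run) and weakness (the run settling into a single SCC with a fixed acceptance status), and would fail for nondeterministic or non-weak automata. Finally, since Theorem~\ref{sd nww det} runs in polynomial time and dualization is a trivial relabeling of the acceptance set, the whole construction is polynomial, and $\widetilde{\D}$ has the same polynomially bounded state space as $\D$.
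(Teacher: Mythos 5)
Your proposal is correct and matches the paper's argument exactly: the paper also obtains this result by first determinizing via Theorem~\ref{sd nww det} and then complementing the resulting DWW by dualization (switching $\alpha$ and $\overline{\alpha}$). The additional details you supply about why dualization complements a deterministic weak automaton are accurate and consistent with the paper's discussion of weak automata in the preliminaries.
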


Since the language-containment problem for DWW can be solved in NLOGSPACE, and minimization for DWW is similar to minimization of DFWs and can be solved in polynomial time \cite{Lod01}, we have the following.

\begin{theorem}
\label{sd nww dp}
The language-containment, universality, and minimality problems for SD-NWWs can be solved in polynomial time.
\end{theorem}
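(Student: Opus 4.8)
The plan is to reduce each of the three problems to its counterpart for deterministic weak automata, exploiting the polynomial determinization of Theorem~\ref{sd nww det} as a preprocessing step. Recall that given an SD-NWW $\A$, that theorem produces in polynomial time an equivalent DWW $\D$ with $|\D| \leq |\A|$, since the state space of $\D$ is a subset of that of $\A$. Hence any question about the language of a given SD-NWW can be transferred, at polynomial cost, to the same question about an equivalent DWW, and it remains to invoke the known tractable procedures for DWWs.

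For language containment of SD-NWWs $\A_1$ and $\A_2$, I would determinize both to equivalent DWWs $\D_1$ and $\D_2$ and then decide $L(\D_1) \subseteq L(\D_2)$. This is precisely the language-containment problem for DWWs, which is in NLOGSPACE: complementing $\D_2$ by dualization (Theorem~\ref{sd nww comp}) yields a DWW $\overline{\D_2}$, and $L(\D_1) \subseteq L(\D_2)$ holds iff the deterministic product $\D_1 \times \overline{\D_2}$ is empty, i.e.\ iff no product state whose two projections lie in accepting SCCs is reachable --- a reachability check on $G_{\D_1 \times \overline{\D_2}}$. Universality of $\A_1$ is the containment instance $\Sigma^\omega \subseteq L(\A_1)$ --- equivalently, $\A_1$ is universal iff the complement DWW of its determinization is empty --- and is thus polynomial as well.

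For minimality the extra ingredient is to observe that a minimal SD-NWW and a minimal DWW for the same language have the same size. Indeed, every DWW is in particular an SD-NWW, so a minimal SD-NWW is no larger than a minimal DWW; conversely, feeding any SD-NWW $\A_2$ of size $k$ into Theorem~\ref{sd nww det} yields an equivalent DWW of size at most $k$, so a minimal DWW is no larger than a minimal SD-NWW. Consequently, a given SD-NWW $\A_1$ admits an equivalent SD-NWW with at most $k$ states iff it admits an equivalent DWW with at most $k$ states. I would thus determinize $\A_1$, apply the polynomial DWW minimization algorithm of \cite{Lod01} (a Myhill--Nerode-style quotient, as for DFWs) to obtain a minimal equivalent DWW, and compare its size with $k$.

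I expect the size-collapse argument in the minimality case to be the only nonroutine step: it is what lets us replace the a priori richer class of SD-NWWs by deterministic ones without changing the minimal size, and it relies essentially on the determinization of Theorem~\ref{sd nww det} not increasing the number of states. Once this is in place, all three claims follow from the tractability of containment, emptiness, and minimization for DWWs.
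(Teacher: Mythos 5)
Your proposal is correct and follows essentially the same route as the paper: determinize via Theorem~\ref{sd nww det} and then invoke the standard DWW procedures for containment, universality, and minimization~\cite{Lod01}. The size-collapse observation you highlight for minimality (a minimal SD-NWW and a minimal DWW have the same size, since every DWW is an SD-NWW and the determinization of Theorem~\ref{sd nww det} does not increase the state count) is exactly the step the paper leaves implicit, and you are right that it is the one place where the argument needs the determinization to be state-space-preserving rather than merely polynomial.
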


\stam{
The study of bounded forms of nondeterminism has attracted a lot of research in the last decade. Both since it is theoretically interesting and intriguing, and since models with bounded nondeterminism turn out to be useful in synthesis and reasoning about probabilistic systems. Many questions about different types of bounded nondeterminism are still open. In particular we do not understand the effect of the bounds on different acceptance conditions (not only Buchi vs. co-Buchi, but also state-based vs. transition-based conditions; for example, while minimization of GFG Buchi automata or deterministic co-Buchi automata is NP-complete, it is in PTIME for GFG co-Buchi automata with transition-based acceptance condition).

Semantic determinism (SD) seems very restrictive (it is "almost determinism" -- why would branches to states with the same language help? and indeed for automata on finite words they do not). Our results show that SD is in fact strong, and that unlike other bounds on nondeterminism, it affects the different acceptance conditions in the same way (except for weak automata). We found it very interesting, and hope that it will shed light on the problems that are still open.
}

\bibliography{full}

\begin{thebibliography}{10}

\bibitem{AK19}
B.~{Abu Radi} and O.~Kupferman.
\newblock Minimizing {GFG} transition-based automata.
\newblock In {\em Proc.\ 46th Int. Colloq. on Automata, Languages, and
  Programming}, volume 132 of {\em LIPIcs}, pages 100:1--100:16. Schloss
  Dagstuhl - Leibniz-Zentrum fuer Informatik, 2019.

\bibitem{AKL21}
B.~{Abu Radi}, O.~Kupferman, and O.~Leshkowitz.
\newblock A hierarchy of nondeterminism.
\newblock In {\em 46th Int. Symp. on Mathematical Foundations of Computer
  Science}, volume 202 of {\em LIPIcs}, pages 85:1--85:21, 2021.

\bibitem{AKL10}
B.~Aminof, O.~Kupferman, and R.~Lampert.
\newblock Reasoning about online algorithms with weighted automata.
\newblock {\em ACM Transactions on Algorithms}, 6(2), 2010.

\bibitem{BK18}
M.~Bagnol and D.~Kuperberg.
\newblock B{\"{u}}chi good-for-games automata are efficiently recognizable.
\newblock In {\em Proc. 38th Conf. on Foundations of Software Technology and
  Theoretical Computer Science}, volume 122 of {\em LIPIcs}, pages 16:1--16:14,
  2018.

\bibitem{BJW01}
B.~Boigelot, S.~Jodogne, and P.~Wolper.
\newblock On the use of weak automata for deciding linear arithmetic with
  integer and real variables.
\newblock In {\em Proc. Int. Joint Conf. on Automated Reasoning}, volume 2083
  of {\em Lecture Notes in Computer Science}, pages 611--625. Springer, 2001.

\bibitem{BKKS13}
U.~Boker, D.~Kuperberg, O.~Kupferman, and M.~Skrzypczak.
\newblock Nondeterminism in the presence of a diverse or unknown future.
\newblock In {\em Proc.\ 40th Int. Colloq. on Automata, Languages, and
  Programming}, volume 7966 of {\em Lecture Notes in Computer Science}, pages
  89--100, 2013.

\bibitem{BL23}
U.~Boker and K.~Lehtinen.
\newblock When a little nondeterminism goes a long way: an introduction to
  history-determinism.
\newblock {\em {ACM} {SIGLOG} News}, 10(1):177--196, 2023.

\bibitem{Buc62}
J.R. B{\"u}chi.
\newblock On a decision method in restricted second order arithmetic.
\newblock In {\em Proc. Int. Congress on Logic, Method, and Philosophy of
  Science. 1960}, pages 1--12. Stanford University Press, 1962.

\bibitem{CM03}
Olivier Carton and Max Michel.
\newblock Unambiguous b\"uchi automata.
\newblock {\em Theoretical Computer Science}, 297(1):37--81, 2003.

\bibitem{Col09}
T.~Colcombet.
\newblock The theory of stabilisation monoids and regular cost functions.
\newblock In {\em Proc.\ 36th Int. Colloq. on Automata, Languages, and
  Programming}, volume 5556 of {\em Lecture Notes in Computer Science}, pages
  139--150. Springer, 2009.

\bibitem{CQS21}
T.~Colcombet, K.~Quaas, and M.~Skrzypczak.
\newblock Unambiguity in automata theory (dagstuhl seminar 21452).
\newblock {\em Dagstuhl Reports}, 11(10):57--71, 2021.

\bibitem{DLFMRX16}
A.~Duret-Lutz, A.~Lewkowicz, A.~Fauchille, Th. Michaud, E.~Renault, and L.~Xu.
\newblock Spot 2.0 --- a framework for {LTL} and $\omega$-automata
  manipulation.
\newblock In {\em 14th Int. Symp. on Automated Technology for Verification and
  Analysis}, volume 9938 of {\em Lecture Notes in Computer Science}, pages
  122--129. Springer, 2016.

\bibitem{GO01}
P.~Gastin and D.~Oddoux.
\newblock Fast {LTL} to {B}{\"u}chi automata translation.
\newblock In {\em Proc. 13th Int. Conf. on Computer Aided Verification}, volume
  2102 of {\em Lecture Notes in Computer Science}, pages 53--65. Springer,
  2001.

\bibitem{GL02}
D.~Giannakopoulou and F.~Lerda.
\newblock From states to transitions: Improving translation of {LTL} formulae
  to {B}{\"u}chi automata.
\newblock In {\em Proc. 22nd International Conference on Formal Techniques for
  Networked and Distributed Systems}, volume 2529 of {\em Lecture Notes in
  Computer Science}, pages 308--326. Springer, 2002.

\bibitem{HPSSTWW20}
E.M. Hahn, M.~Perez, S.~Schewe, F.~Somenzi, A.~Trivedi, and D.~Wojtczak.
\newblock Good-for-{MDPs} automata for probabilistic analysis and reinforcement
  learning.
\newblock In {\em Proc.\ 26th Int. Conf. on Tools and Algorithms for the
  Construction and Analysis of Systems}, volume 12078 of {\em Lecture Notes in
  Computer Science}, pages 306--323. Springer, 2020.

\bibitem{HKR02}
T.A. Henzinger, O.~Kupferman, and S.~Rajamani.
\newblock Fair simulation.
\newblock {\em Information and Computation}, 173(1):64--81, 2002.

\bibitem{HP06}
T.A. Henzinger and N.~Piterman.
\newblock Solving games without determinization.
\newblock In {\em Proc. 15th Annual Conf. of the European Association for
  Computer Science Logic}, volume 4207 of {\em Lecture Notes in Computer
  Science}, pages 394--410. Springer, 2006.

\bibitem{JR93}
T.~Jiang and B.~Ravikumar.
\newblock Minimal {NFA} problems are hard.
\newblock {\em SIAM Journal on Computing}, 22(6):1117--1141, 1993.

\bibitem{KS15}
D.~Kuperberg and M.~Skrzypczak.
\newblock On determinisation of good-for-games automata.
\newblock In {\em Proc.\ 42nd Int. Colloq. on Automata, Languages, and
  Programming}, pages 299--310, 2015.

\bibitem{Kup06a}
O.~Kupferman.
\newblock Avoiding determinization.
\newblock In {\em Proc.\ 21st IEEE Symp. on Logic in Computer Science}, pages
  243--254, 2006.

\bibitem{Kup18}
O.~Kupferman.
\newblock Automata theory and model checking.
\newblock In {\em Handbook of Model Checking}, pages 107--151. Springer, 2018.

\bibitem{Kup23}
O.~Kupferman.
\newblock Using the past for resolving the future.
\newblock {\em Frontiers in Computer Science}, 4, 2023.

\bibitem{KL20}
O.~Kupferman and O.~Leshkowitz.
\newblock On repetition languages.
\newblock In {\em 45th Int. Symp. on Mathematical Foundations of Computer
  Science}, Leibniz International Proceedings in Informatics (LIPIcs), 2020.

\bibitem{KSV06}
O.~Kupferman, S.~Safra, and M.Y. Vardi.
\newblock Relating word and tree automata.
\newblock {\em Ann. Pure Appl. Logic}, 138(1-3):126--146, 2006.

\bibitem{KV05c}
O.~Kupferman and M.Y. Vardi.
\newblock Safraless decision procedures.
\newblock In {\em Proc.\ 46th IEEE Symp. on Foundations of Computer Science},
  pages 531--540, 2005.

\bibitem{Kur87}
R.P. Kurshan.
\newblock Complementing deterministic {B\"uchi} automata in polynomial time.
\newblock {\em Journal of Computer and Systems Science}, 35:59--71, 1987.

\bibitem{Kur94}
R.P. Kurshan.
\newblock {\em Computer Aided Verification of Coordinating Processes}.
\newblock Princeton Univ. Press, 1994.

\bibitem{Lan69}
L.H. Landweber.
\newblock Decision problems for $\omega$--automata.
\newblock {\em Mathematical Systems Theory}, 3:376--384, 1969.

\bibitem{LKH17}
W.~Li, Sh. Kan, and Z.~Huang.
\newblock A better translation from {LTL} to transition-based generalized
  {B}{\"{u}}chi automata.
\newblock {\em {IEEE} Access}, 5:27081--27090, 2017.

\bibitem{Lod01}
C.~L{\"o}ding.
\newblock Efficient minimization of deterministic weak $\omega$-automata.
\newblock {\em Information Processing Letters}, 79(3):105--109, 2001.

\bibitem{MH84}
S.~Miyano and T.~Hayashi.
\newblock Alternating finite automata on $\omega$-words.
\newblock {\em Theoretical Computer Science}, 32:321--330, 1984.

\bibitem{Mor03}
G.~Morgenstern.
\newblock Expressiveness results at the bottom of the $\omega$-regular
  hierarchy.
\newblock {M.Sc.} Thesis, The Hebrew University, 2003.

\bibitem{MSS88}
D.E. Muller, A.~Saoudi, and P.~E. Schupp.
\newblock Weak alternating automata give a simple explanation of why most
  temporal and dynamic logics are decidable in exponential time.
\newblock In {\em Proc.\ 3rd IEEE Symp. on Logic in Computer Science}, pages
  422--427, 1988.

\bibitem{RS59}
M.O. Rabin and D.~Scott.
\newblock Finite automata and their decision problems.
\newblock {\em IBM Journal of Research and Development}, 3:115--125, 1959.

\bibitem{Saf88}
S.~Safra.
\newblock On the complexity of $\omega$-automata.
\newblock In {\em Proc.\ 29th IEEE Symp. on Foundations of Computer Science},
  pages 319--327, 1988.

\bibitem{Sch10}
S.~Schewe.
\newblock {Beyond Hyper-Minimisation---Minimising DBAs and DPAs is
  NP-Complete}.
\newblock In {\em Proc. 30th Conf. on Foundations of Software Technology and
  Theoretical Computer Science}, volume~8 of {\em Leibniz International
  Proceedings in Informatics (LIPIcs)}, pages 400--411, 2010.

\bibitem{Sch20}
S.~Schewe.
\newblock Minimising good-for-games automata is {NP}-complete.
\newblock In {\em Proc. 40th Conf. on Foundations of Software Technology and
  Theoretical Computer Science}, volume 182 of {\em LIPIcs}, pages 56:1--56:13.
  Schloss Dagstuhl - Leibniz-Zentrum f{\"{u}}r Informatik, 2020.

\bibitem{STZ22}
S.~Schewe, Q.~Tang, and T.~Zhanabekova.
\newblock Deciding what is good-for-{MDPs}.
\newblock {\em CoRR}, abs/2202.07629, 2022.
\newblock URL: \url{https://arxiv.org/abs/2202.07629}, \href
  {http://arxiv.org/abs/2202.07629} {\path{arXiv:2202.07629}}.

\bibitem{SEJK16}
S.~Sickert, J.~Esparza, S.~Jaax, and J.~K{\v r}et{\'{\i}}nsk{\'{y}}.
\newblock Limit-deterministic {B}{\"{u}}chi automata for linear temporal logic.
\newblock In {\em Proc. 28th Int. Conf. on Computer Aided Verification}, volume
  9780 of {\em Lecture Notes in Computer Science}, pages 312--332. Springer,
  2016.

\bibitem{VW94}
M.Y. Vardi and P.~Wolper.
\newblock Reasoning about infinite computations.
\newblock {\em Information and Computation}, 115(1):1--37, 1994.

\end{thebibliography}
\appendix

\section{Missing Details}

\subsection{Missing details in Remark~\ref{alt min proof remark}}
\label{app alt min proof remark}

Consider the tDBW $\D_n  =\zug{\Sigma_n, 2^{[n]}\times \{a, c\}, \zug{\emptyset, c}, \delta, \alpha}$, defined as follows (see Figure~\ref{Dn}).  
The state space of $\D_n$  consists of two copies of subsets of $[n]$: the $a$-copy  $2^{[n]}\times \{a\}$, with ``a" standing for ``accumulate", and the $c$-copy $2^{[n]}\times \{c\}$, with ``c" standing for ``check".
Essentially, the state $\zug{S, a}$ in the $a$-copy remembers that we have read only numbers in $[n]$ after the last $\$$ and also remembers the set $S\subseteq [n]$ of these numbers.  The state $\zug{S, c}$ in the $c$-copy remmbers, in addition, that we have just read $\#$. 
Accordingly, $\zug{S, c}$  checks whether the next letter is in $S$ in order to traverse $\alpha$.
Effectively, $\D_n$ traverses $\alpha$ when a good infix is detected.
Formally,  for every state $\zug{S, o} \in  2^{[n]}\times \{a, c\}$, and letter $\sigma \in \Sigma_n$, we define:

$$
\delta(\zug{S,a}, \sigma)=
\begin{cases}
\zug{S \cup \{\sigma\}, a},  &  \text{if $\sigma \in [n]$} \\
\zug{S , c}, & \text{if $\sigma = \#$}\\
\zug{ \emptyset, a}, & \text{if $\sigma = \$$}
\end{cases}
\hspace{1cm}
\delta(\zug{S, c}, \sigma)=
\begin{cases}
\zug{\emptyset, c}, & \text{if $\sigma \neq \$$}\\
\zug{ \emptyset, a}, & \text{if $\sigma = \$$}
\end{cases}
$$

Also, $\alpha=\{\zug{\zug{S,c},\sigma,\zug{\emptyset,c}} : \sigma \in S\}$. 

Note that reading $\$$ from any state we move to $\zug{\emptyset, a}$ to detect a good infix,  and if at some point we read an unexpected letter, we move to $\zug{\emptyset, c}$, where we wait for $\$$ to be read. It is easy to see that $\D_n$ recognizes $L_n$ and has $2^{n+1}$ states. 

	\begin{figure}[htb]	
		\begin{center}
			\includegraphics[width=0.6\textwidth]{Figures/Dn.png}
			\caption{The tDBW  $D_n$. For a finite word $x\in [n]^+$, we denote the set of numbers that appear in $x$ by $S(x)$. 
				Dashed transitions are $\alpha$-transitions.}
			\label{Dn}
		\end{center}
	\end{figure}

\subsection{Missing details in Remark~\ref{alt min co proof remark}}
\label{app alt min co proof remark}

Consider the tDCW $\D_n  =\zug{\Sigma_n \cup \{\$\}, (2^{[n]}\times \{a, c\} )\cup \{q_{pass}\}, \zug{\emptyset, c}, \delta, \alpha}$ defined as follows (see Figure~\ref{DnC}).
The state space of $\D_n$  consists of a state $q_{pass}$ and two copies of subsets of $[n]$: the $a$-copy  $2^{[n]}\times \{a\}$, with ``a" standing for ``accumulate", and the $c$-copy $2^{[n]}\times \{c\}$, with ``c" standing for ``check".
Essentially, the state $\zug{S, a}$ in the $a$-copy remembers that we have read only numbers in $[n]$ after the last $\$$ and also remembers the set $S\subseteq [n]$ of these numbers.  The state $\zug{S, c}$ in the $c$-copy remembers, in addition, that we have just read $\#$. 
Accordingly, the state $\zug{S, c}$, for a nonempty set $S$,  checks whether the next letter is in $S$ and moves to the state $q_{pass}$  via a transition in $\overline{\alpha}$ only when the check passes. 
The fact that the above transitions are in $\overline{\alpha}$,  and we can move from $q_{pass}$ back to $\zug{\emptyset, a}$ upon reading $\$$ and without traversing $\alpha$, imply that $q_{pass}$ can trap runs that do not traverse $\alpha$ on words of the form $z_1\cdot z_2\cdots $, where for all $l\geq 1$ $z_l$ is good.
Formally,  for every state $\zug{S, o} \in  2^{[n]}\times \{a, c\}$, and letter $\sigma \in \Sigma_n\cup \{\$\}$, we define:

$$
\delta(\zug{S, o}, \sigma)=
\begin{cases}
\zug{S \cup \{\sigma\}, a},  &  \text{if $o=a$ and $\sigma \in [n]$} \\
\zug{S , c}, & \text{if $o = a, S\neq \emptyset$ and $\sigma = \#$}\\
q_{pass}, & \text{if $o = c$ and $\sigma \in S$}\\
\zug{\emptyset, c}, & \text{if $\zug{S, o} = \zug{\emptyset, c}$ and $\sigma = \neg \$$}\\	
\zug{\emptyset, c}, & \text{if $o = c$ and $\sigma \in ([n]\cup \{\#\})\setminus S$}\\
\zug{\emptyset, a}, & \text{if $\sigma=\$$}\\
\zug{ \emptyset,c}, & \text{if $\zug{S, o} = \zug{\emptyset, a}$ and $\sigma = \#$}

\end{cases}
$$

where all types transitions are in $\overline{\alpha}$ except for the last three. In addition, by reading a letter $\sigma$ from $q_{pass}$ we move to $\zug{\emptyset, a}$ via a transition in $\overline{\alpha}$ when $\sigma=\$$, and otherwise, we move to $\zug{\emptyset, c}$ via an $\alpha$-transition.
Note that by  reading $\$$ from any state we move to $\zug{\emptyset, a}$ to detect a suffix of the form $z_1\cdot z_2\cdots $, where for all $l\geq 1$ $z_l$ is good, and if at some point we read an unexpected letter, we move to $\zug{\emptyset, c}$, where we wait for $\$$ to be read. 

It is easy to see that $\D_n$ recognizes $L_n$ and has $2^{n+1} + 1$ states.

\begin{figure}[htb]	
\begin{center}
	\includegraphics[width=0.8\textwidth]{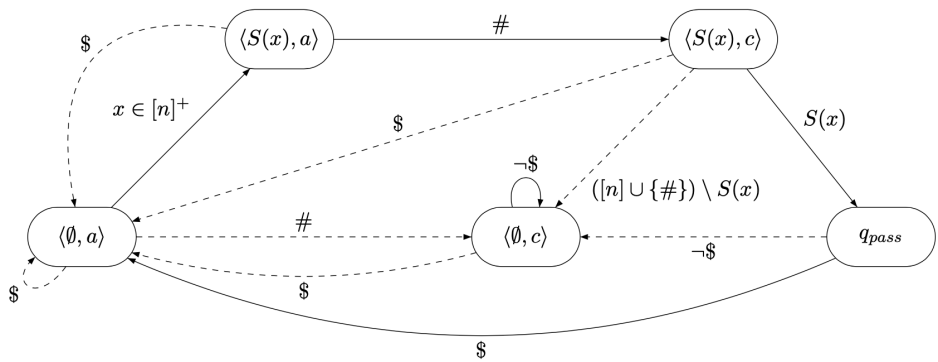}
	\caption{The tDCW $D_n$. For a finite word $x\in [n]^+$, we denote the set of numbers that appear in $x$ by $S(x)$. 
		Dashed transitions are $\alpha$-transitions.}
	\label{DnC}
\end{center}
\end{figure}

\end{document}